\documentclass[aps,pra,reprint]{revtex4-2}
\usepackage[table,xcdraw]{xcolor}
\usepackage{amsmath,amsthm,amssymb, color}
\usepackage{enumerate} 
\usepackage{lmodern}
\usepackage{soul}
\usepackage[colorlinks,hypertexnames=false]{hyperref}
\hypersetup{
	bookmarksnumbered=true, 
	unicode=false, 
	pdfstartview={FitH}, 
	pdftitle={}, 
	pdfauthor={}, 
	pdfsubject={}, 
	pdfcreator={}, 
	pdfproducer={}, 
	pdfkeywords={}, 
	pdfnewwindow=true, 
	colorlinks=true, 
	linkcolor=blue, 
	citecolor=blue, 
	filecolor=blue, 
	urlcolor=blue 
}
\usepackage{color}
\usepackage{multirow}
\usepackage{mathtools}
\usepackage{braket}
\usepackage{tikz}
\usetikzlibrary{positioning}
\usepackage{caption}
\captionsetup{justification=raggedright,singlelinecheck=false}
\usepackage{subcaption}
\usepackage[compat=0.4]{yquant}
\useyquantlanguage{groups}
\usetikzlibrary{quotes}
\usepackage{diagbox}
 \usepackage{graphicx}
 \usepackage{tablefootnote}
 \usepackage{float}
 \usepackage[capitalise]{cleveref}
 \usepackage{makecell}
 \usepackage{algorithm}
 \usepackage[noend]{algpseudocode}
\usepackage{float}
\usepackage[normalem]{ulem}
\usepackage{amsmath}
\usepackage{quantikz}

\newcommand{\highlight}[1]{\textcolor{blue}{#1}}

\DeclarePairedDelimiter{\ceil}{\lceil}{\rceil}
\DeclareMathOperator{\polylog}{polylog}
\DeclareMathOperator{\poly}{poly}

\captionsetup[algorithm]{justification=raggedright}
\newtheorem{theorem}{Theorem}[section]
\newtheorem{corollary}[theorem]{Corollary}
\newtheorem{lemma}[theorem]{Lemma}

\newtheorem*{theorem*}{Theorem}

\newcommand{\cswap}{\mathbf{X}}
\newcommand{\linear}{\mathbf{L}}


\begin{document}
\title{Unified Architecture for Quantum Lookup Tables}

\author{Shuchen Zhu}
\affiliation{Microsoft Quantum, Redmond,  Washington, 98052, USA}
\affiliation{Department of Mathematics, Duke University, Durham, NC 27708, USA}
\affiliation{Duke Quantum Center, Duke University, Durham, NC 27701, USA}
\affiliation{Department of Computer Science, Georgetown University, Washington, DC 20057, USA}

\author{Aarthi Sundaram}
\affiliation{Microsoft Quantum, Redmond,  Washington, 98052, USA}

\author{Guang Hao Low}
\affiliation{Microsoft Quantum, Redmond,  Washington, 98052, USA}
\date{\today}

\begin{abstract}
Quantum access to arbitrary classical data encoded in unitary black-box oracles underlies interesting data-intensive quantum algorithms, such as machine learning or electronic structure simulation.
The feasibility of these applications depends crucially on gate-efficient implementations of these oracles, which are commonly some reversible versions of the boolean circuit for a classical lookup table.
We present a general parameterized architecture for quantum circuits implementing a lookup table that encompasses all prior work in realizing a continuum of optimal tradeoffs between qubits, non-Clifford gates, and error resilience, up to logarithmic factors.
Our architecture assumes only local 2D connectivity, yet recovers results, with the appropriate parameters, poly-logarithmic error scaling.
We also identify novel regimes, such as simultaneous sublinear scaling in all parameters.
These results enable tailoring implementations of the commonly used lookup table primitive to any given quantum device with constrained resources. 
\end{abstract}

\maketitle

\section{Introduction}

Quantum computers promise dramatic speedups over classical computers for a broad range of problems.
Provable advantage in many cases such as in Hamiltonian simulation~\cite{berry2012,berry2015,berry2015Ham,PhysRevX.8.041015,low2019hamiltonian,bauer2020quantum}, quantum machine ~\cite{lloyd2013quantum,wittek2014quantum,adcock2015advances,biamonte2017quantum,ciliberto2018quantum,huang2022provably}, and quantum search are in the so-called query model. 
%
The quantum gate costs of these quantum algorithms are typically dominated by queries made to a particular unitary oracle, with each oracle query having a gate cost that scales polynomially with the amount of classical data needed to encode the problem instance.
As quantum computers will execute orders-of-magnitude fewer logical operations per second than classical computers~\cite{Babbush2021Quadratic, Hoefler2023Hype},
the crossover of runtime between classical and quantum advantage is highly sensitive to the constant factors in synthesizing these oracles.
Understanding the cost of oracle synthesis, such as in simulating chemistry~\cite{liu2025blockencodinglowgate}, for a given algorithm can mean a crossover of days rather than years~\cite{Burg2021Catalysis}.

The dominant gate cost of quantum oracles in many cases reduces to some instance of a quantum lookup table~\cite{giovannetti2008architectures} -- a generic framework that facilitates access to unstructured classical data in superposition.
In analogy to the classical lookup table that returns data $x_i$ for specified address bits $i\in[N]$, the quantum lookup table is some unitary quantum circuit $O_{\vec{x}}$ that responds with a superposition of data
\begin{align}
\label{eq:lookup}
O_{\vec{x}}\sum_{i\in[N]} \alpha_i \ket{i}\ket{0}=\sum_{i\in[N]} \alpha_i \ket{i,x_i},
\end{align}
when queried by an arbitrary superposition of address bits $\sum_i \alpha_i \ket{i}\ket{0}$.
In general, quantum circuits implementing $O_{\vec{x}}$ can be efficiently found for any data $\vec{x}$.
It is understood that a decisive quantum advantage for interesting problems will be achieved with logical qubits in the fault-tolerant regime.
Hence, the cost of quantum table-lookup should be understood in the context of fault-tolerant quantum resources.
Specifically, logical Clifford gates \{$\textsc{H}$, $\textsc{S}$, $\textsc{CNot}$\} are cheap, such as due to a native implementation on underlying physical qubits~\cite{Litinski2019gameofsurfacecodes}.
In contrast, logical non-Clifford gates \{$\textsc{T}, \textsc{Toffoli}$\} are remarkably expensive~\cite{beverland2022assessing}, following the Eastin-Knill theorem for the nonexistence of a set of transversal gates universal for quantum computing, and require sophisticated protocols to realize, like magic-state distillation and code-switching. 

There are myriad possible implementations of quantum table-lookup~\cref{eq:lookup},
which realize distinct resource trade-offs.
Early seminal work called QRAM~\cite{giovannetti2008quantum} found implementations~\cite{Arunachalam2015Bucket} using $\Theta(N)$ $\textsc{T}$ gates and qubits in $\Theta(\mathrm{polylog}{N})$ depth.
However, realizing this shallow depth (or runtime) is impractical due to the bottleneck of $\textsc{T}$ gate production. 
Synthesizing a single $\textsc{T}$ gate in each unit of depth has an overhead of ~100s qubits~\cite{beverland2022assessing}.
Moreover, interesting problems such as in chemistry have on the order of $N\sim10^6$~\cite{Berry2019qubitizationof} terms.
Overall, the extreme space requirements make the full potential of QRAM implementations difficult to realize.  
This has motivated spacetime tradeoffs on the other extreme end --- QROM~\cite{PhysRevX.8.041015} uses $\Theta(N)$ $\textsc{T}$ gates, but now with the minimum of $\Theta(\log N)$ qubits in $\Theta(N)$  depth.
It is straightforward to interpolate between these extremes~\cite{Matteo2020QRAM}, but a most surprising discovery is that interpolating a novel SELECT-SWAP~\cite{low2018trading} architecture reduces the expensive $\textsc{T}$ gate count to an optimal $\Theta(\sqrt{N})$ with only a moderate number of $\Theta(\sqrt{N})$ qubits.
Even more recently, the bucket-brigade QRAM implementation was discovered to be error-resilient with polylogarithmic error scaling~\cite{PRXQuantum.2.020311}, compared to all other approaches with linear error scaling.
Error resilience indirectly reduces the cost of logical resources, as lower errors allow cheaper, lower-distance error-correcting codes with faster logical cycle times.

\begin{table*}[htbp]
\footnotesize
\centering{}
\def\arraystretch{1.75}
\begin{tabular}{|l|c|c|c|c| c|}
\hline
\textbf{Architecture} & \textbf{Infidelity} &  \textbf{Query-depth} & \textbf{Qubits} & \textbf{\textsc{T}-gates} & \textbf{Layout}  \\ \hline
QRAM~\cite{giovannetti2008quantum,PRXQuantum.2.020311} & $O(\log^2N)$ & $O(\log N)$ & $O(N)$ & $O(N)$ &non-local \footnote{Ref.~\cite{xu2023systems} presents a planar QRAM with local connectivity, but does not specify how to achieve logarithmic infidelity.}  \\ \hline
QROM~\cite{PhysRevX.8.041015} & $O(bN)$ & $O(N)$ & $O(\log{N}+b)$ & $O(N)$ & non-local  \footnote{A planar layout can be found in Ref.~\cite{gidney2019flexiblelayoutsurfacecode} with the same resource scaling.} \\\hline
SELECT-SWAP~\cite{low2018trading} & $O(bN)$ & $O(\frac{N}{\lambda}+\log \lambda)$ & $O(\log{N}+b\lambda)$ & $O(\frac{N}{\lambda}+b\lambda)$ & non-local  \\ \hline \hline
\makecell[l]{\textbf{general single-bit}\\(Sec.~\ref{sec: general_framwork})} & $\tilde{O}(\frac{\gamma N}{\lambda})$ & $O(\frac{N}{\lambda}\log \textcolor{black}{N})$ & $O(\log{\frac{N}{\lambda}}+\lambda)$ & $O(\frac{N}{\gamma}+\frac{N}{\lambda}\log\frac{N}{\lambda}+\lambda)$ & local, planar  \\ \hline 
\makecell[l]{\textbf{general multi-bit parallel} \\(Sec.~\ref{sec: large_b})}  & $\tilde{O}(\frac{b\gamma N}{\lambda})$ & $O(\frac{N}{\lambda}(\textcolor{black}{\log bN})$ & $O(\log{\frac{N}{\lambda}}+b\lambda)$ & $O(\frac{N}{\lambda}\log\frac{N}{\lambda}+\frac{bN}{\gamma}+b\lambda)$ & local, planar  \\
\hline 
\makecell[l]{\textbf{general multi-bit sequential}\\ (Sec.~\ref{sec: large_b})}  & $\tilde{O}(\frac{b\gamma N}{\lambda}+b^2)$ & $O(\frac{N}{\lambda}\log (bN)+b)$ & $O(\log{\frac{N}{\lambda}}+b\lambda)$ & $O(\frac{N}{\gamma}+\frac{N}{\lambda}\log\frac{N}{\lambda}+b\lambda)$ & local, planar  \\ \hline \hline
\textbf{single-bit} (Sec.~\ref{sec: general_framwork})  & $\tilde{O}(N^{3/4})$ & $O(\sqrt{N}\log N)$ & $O(\sqrt{N})$ & $O(N^{3/4})$ & local, planar  \\ \hline
\textbf{parallel multi-bit } (Sec.~\ref{sec: large_b})  & $\tilde{O}(bN^{3/4})$ & $O(\sqrt{N}\log N)$ & $O(b\sqrt{N})$ & $O(bN^{3/4})$ & local, planar  \\ \hline
\textbf{sequential multi-bit } (Sec.~\ref{sec: large_b}) & $\tilde{O}(bN^{3/4}+b^2)$ & $O(\sqrt{N}\log (bN)+b)$ & $O(b\sqrt{N})$ & $O(N^{3/4}+b\sqrt{N})$ & local, planar  \\ \hline
\end{tabular}
\caption{Asymptotic scaling for qubit and T gate cost, query depth, and single-query infidelity in terms of memory size $N$ and word size $b$. Trade-offs are realized by setting tuning parameters $\lambda\in[1,N]$ and $\gamma\in[1,\lambda]$ with both assuming values that are a power of 2. The types of models studied in this work are in bold. The last four rows show asymptotic results for the particular choice of $\lambda=\sqrt{N}$, and $\gamma=N^{1/4}$. 
    The dependence on gate error $\varepsilon$ is omitted to reduce clutter. Detailed dependencies on different types of gate errors are provided in the main theorem.
}
\label{tb: current_state_of_art}
\end{table*}

Our key contribution is an architecture for table-lookup summarized in~\cref{tb: current_state_of_art} that encompasses all previous methods and enables a new continuum of tradeoffs in the three key parameters of qubits, $\textsc{T}$ gates, and error.
Crucially, our architecture only requires a planar layout of nearest-neighbor quantum gates.
There are concerns that local connectivity in QRAM variants~\cite{xu2023systems, allcock2023constant} limit their capabilities in end-to-end implementations~\cite{jaques2023qram,dalzell2023quantum}. 
We similarly find in~\cref{tb: k=0}, that the naive implementation of long-range gates in our architecture severely limits error resilience. 
However, we introduce a more sophisticated implementation based on entanglement distillation that establishes the feasibility of long-range connectivity within a planar layout.
This allows us to recover, up to logarithmic factors, in~\cref{tb: k=d'} the scaling found in prior work assuming non-local connectivity.
In other words, our refined method for long-range connectivity in our table-lookup architecture is equivalent to allowing unrestricted access to long-range gates, at some small (logarithmic) overhead.

Novel tradeoffs are enabled by our general architecture. These include, for instance, simultaneous sublinear scaling of all parameters with $N$, in both the case where $x_i$ is a single bit, and the more general setting of a $b$-bit word.
\begin{theorem}[Informal version of Theorem~\ref{thm: unconditional_optimal} and Corollary~\ref{cor: uncond_opt} ] There exists a single-word quantum lookup table that has sublinear scaling in infidelity, \textsc{T}-gate count, and qubit count, with local connectivity.
\end{theorem}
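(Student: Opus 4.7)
The plan is to instantiate the general single-bit architecture of \cref{sec: general_framwork} (whose scalings are summarized in the fourth row of \cref{tb: current_state_of_art}) with explicit choices of the tuning parameters $\lambda$ and $\gamma$ that simultaneously drive infidelity, $\textsc{T}$-count, and qubit count below linear in $N$. Concretely, the general single-bit framework gives infidelity $\tilde{O}(\gamma N/\lambda)$, $\textsc{T}$-count $O(N/\gamma+(N/\lambda)\log(N/\lambda)+\lambda)$, and qubit count $O(\log(N/\lambda)+\lambda)$, all on a local planar layout. The theorem is then obtained by verifying that a single feasible $(\lambda,\gamma)$ makes each of these three quantities $o(N)$.

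First I would translate the sublinearity requirement into constraints on $(\lambda,\gamma)$: to make infidelity $o(N)$ I need $\gamma=o(\lambda)$; to make the qubit count $o(N)$ I need $\lambda=o(N)$; to make the $\textsc{T}$-count $o(N)$ I need both $\gamma=\omega(1)$ and $\lambda=\omega(1)$, with $\lambda=o(N)$. This feasible region is non-empty, and the sharpest balance is obtained by equating the two non-logarithmic $\textsc{T}$-count terms $N/\gamma$ and $\lambda$ while equating the two qubit-count terms so that $\lambda\sim\sqrt{N}$, and then picking $\gamma$ so that $\gamma N/\lambda\sim N/\gamma$. This yields the canonical choice $\lambda=\sqrt{N}$, $\gamma=N^{1/4}$ (both powers of $2$, by rounding). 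Substituting gives infidelity $\tilde{O}(N^{3/4})$, $\textsc{T}$-count $\tilde{O}(N^{3/4})$, and qubit count $O(\sqrt{N})$, each strictly sublinear in $N$, as claimed by the informal theorem and as reflected in the seventh row of \cref{tb: current_state_of_art}.

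For the multi-bit (single-word) strengthening in the corollary, I would repeat the substitution in the multi-bit row of \cref{tb: current_state_of_art} from \cref{sec: large_b}, again with $\lambda=\sqrt{N}$ and $\gamma=N^{1/4}$, and verify that the extra factors of $b$ in infidelity $\tilde{O}(bN^{3/4})$, $\textsc{T}$-count, and qubit count $O(b\sqrt{N})$ remain $o(N)$ whenever the word size satisfies $b=o(N^{1/4})$. Either the parallel or sequential variant works; the sequential variant is preferable when the $+b^2$ infidelity term is dominated, i.e.\ when $b=o(N^{3/8})$.

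The main obstacle is not this parameter-balancing calculation, which is routine once the parameterized scalings are in hand, but rather establishing the general architecture itself. Concretely, the nontrivial part lies upstream in \cref{sec: general_framwork}: designing a planar, nearest-neighbor circuit that realizes the claimed $\tilde{O}(\gamma N/\lambda)$ infidelity uniformly in $\gamma$ and $\lambda$. As foreshadowed by \cref{tb: k=0} versus \cref{tb: k=d'}, a naive local implementation of the long-range SELECT-SWAP-style routing would inflate the infidelity to linear in $N$; the delicate step is the entanglement-distillation-based routing that restores polylogarithmic error overhead for long-range communication within a 2D layout. Once that construction is in place, the present theorem follows from the substitution above by invoking \cref{thm: unconditional_optimal} and \cref{cor: uncond_opt}.
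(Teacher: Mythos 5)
Your proposal is correct and takes essentially the same route as the paper: \cref{cor: uncond_opt} is proved there precisely by substituting $\lambda=\sqrt{N}$ and $\gamma=N^{1/4}$ into \cref{thm: unconditional_optimal}, with all of the hard work deferred to that theorem's planar-layout and entanglement-distillation analysis, exactly as you note. Your final paragraph on $b$-bit words is extraneous for this statement (and mislabels the single-bit corollary as a multi-bit strengthening; the multi-word case belongs to the second informal theorem), but it does not affect the correctness of your argument here.
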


\begin{theorem}[Informal version of Theorem~\ref{thm: unconditional_optimal_branch_b_first} and Theorem~\ref{thm: sequential_readout}] For constant word size $b$, there exist multi-word quantum lookup tables that have sublinear scaling in infidelity, \textsc{T}-gate count, and qubit count, with local connectivity.
\end{theorem}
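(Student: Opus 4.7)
The plan is to derive this informal theorem as a specialization of the two general parameterized constructions developed in \cref{sec: large_b}, namely \cref{thm: unconditional_optimal_branch_b_first} for parallel readout and \cref{thm: sequential_readout} for sequential readout. Each of those theorems expresses the infidelity, qubit count, and $\textsc{T}$-count as explicit functions of $(N, b, \lambda, \gamma)$ with tuning parameters $\lambda \in [1,N]$ and $\gamma \in [1,\lambda]$, and both constructions are already planar and nearest-neighbor by design. What remains is to exhibit a single choice of $(\lambda,\gamma)$ under which all three quantities grow strictly slower than $N$ whenever $b = O(1)$.

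To find such a choice, I would write $\lambda = N^{\alpha}$ and $\gamma = N^{\beta}$ and read off the general multi-bit parallel row of \cref{tb: current_state_of_art}: the infidelity $\tilde{O}(b\gamma N/\lambda) = \tilde{O}(b N^{1+\beta-\alpha})$ forces $\beta < \alpha$; the qubit count $O(\log(N/\lambda) + b\lambda)$ forces $\alpha < 1$; and the $\textsc{T}$-count $O((N/\lambda)\log(N/\lambda) + bN/\gamma + b\lambda)$ forces $\alpha > 0$ and $\beta > 0$. Collecting yields the single chain $0 < \beta < \alpha < 1$, and any such pair suffices. The canonical choice $\alpha = 1/2$, $\beta = 1/4$ reproduces the last rows of \cref{tb: current_state_of_art}, with infidelity $\tilde{O}(bN^{3/4})$, qubit count $O(b\sqrt{N})$, and $\textsc{T}$-count $O(bN^{3/4})$, all sublinear in $N$ when $b$ is constant. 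The sequential construction then inherits the same parameter choice: compared to the parallel row its infidelity picks up an additive $b^2$ and its query-depth an additive $b$, both constants under $b = O(1)$, while its $\textsc{T}$-count $O(N/\gamma + (N/\lambda)\log(N/\lambda) + b\lambda)$ satisfies the same sublinearity conditions.

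The main obstacle I anticipate lies not in this specialization, which is a routine parameter sweep once the general bounds are established, but upstream, in proving the $\tilde{O}(b\gamma N/\lambda)$ infidelity of the parameterized architectures on a planar layout. The subtle point is that the paper's own \cref{tb: k=0} shows that a naive local compilation of long-range gates already destroys error resilience, so the error analysis underlying \cref{thm: unconditional_optimal_branch_b_first} and \cref{thm: sequential_readout} must instead track how the entanglement-distillation replacement contributes only polylogarithmic overhead per long-range gate, and then aggregate those overheads over the $\Theta(N/\lambda)$ sub-blocks touched per query without inflating the dependence on $N$ beyond the $\gamma N/\lambda$ budget. Once that error bound is secured for general $(\lambda,\gamma)$, the informal corollary follows immediately by the substitution above.
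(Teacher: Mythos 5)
Your proposal is correct and follows essentially the same route as the paper: both formal theorems (\cref{thm: unconditional_optimal_branch_b_first} and \cref{thm: sequential_readout}) are taken as given, and the informal statement is obtained by the specialization $\lambda=\sqrt{N}$, $\gamma=N^{1/4}$, exactly as the paper does for the last rows of \cref{tb: current_state_of_art} (and in parallel with \cref{cor: uncond_opt} for the single-bit case). Your extra observation that any exponents $0<\beta<\alpha<1$ suffice is a harmless mild generalization, and you correctly place the real work upstream in the planar, distillation-based error analysis underlying the two formal theorems.
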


We also provide a fine-grained error analysis of our table-lookup implementation.
We parameterize overall circuit error in terms of each common gate type present, e.g. Idling error, non-Clifford gate errors, etc. summarized in~\cref{tab: error_type}.
In contrast, prior art universally assumes a generic $\epsilon$ error parameter for all gates, which we do, in~\cref{tb: current_state_of_art} to ease presentation, but elaborate on with details later. 
Our approach proves useful for understanding dominant error sources, which future experimental implementations can then focus on minimizing.

Our table-lookup architecture is assembled from common circuit primitives reviewed in~\cref{sec: prior},
and the general framework is introduced in~\cref{sec: general_framwork}.
We then elucidate in~\cref{sec: planar} the importance of long-range connectivity in table-lookup on a planar layout, and demonstrate that entanglement distillation provides a scalable means for overcoming the geometric error scaling of naive long-range implementations.  
These tools also enable our general table-lookup architecture to be extended to multi-bit words in~\cref{sec: large_b}.
Finally, we conclude and discuss future work in~\cref{sec: conclusion}.

\begin{table}[htbp]
\centering
\resizebox{\columnwidth}{!}{%
\includegraphics[]{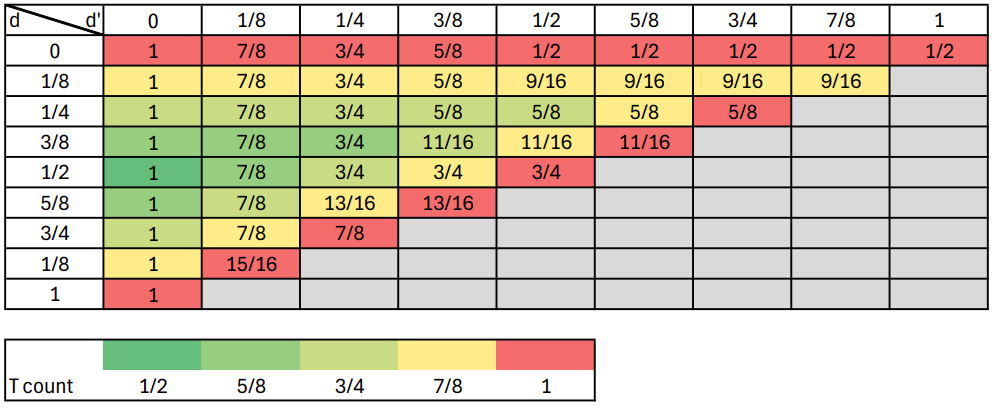}
}
\caption{The exponent of infidelity scaling ($O(N^{(\cdot)})$) is examined for $N=2^n$ with zero long-range budget, while varying $d$ and $d'$ subject to the constraint $d+d'\leq n$, where $\lambda=2^{n-d}$ and $\gamma = 2^{n-d-d'}$. The color gradient corresponds to different T-count scalings ($O(N^{(\cdot)})$). The $O(\sqrt{N})$ infidelity region corresponds to the regime associated with the basic planar case in \cref{fig: planar_combined}.}
\label{tb: k=0}
\end{table}

\begin{table}[htbp]
\centering
\resizebox{\columnwidth}{!}{
\includegraphics[]{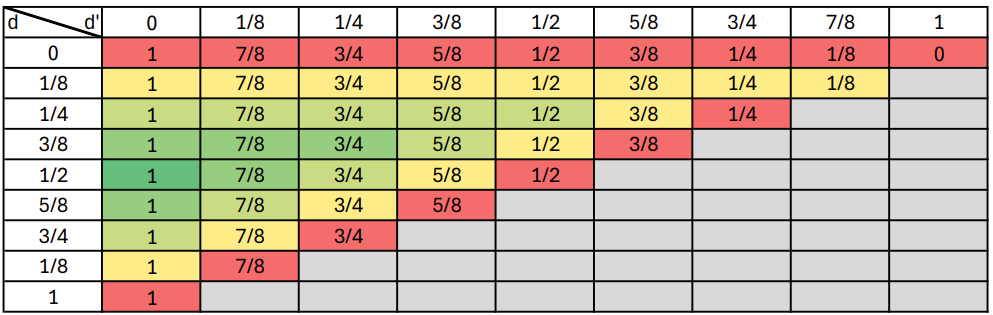}
}
\caption{The exponent of infidelity scaling ($O(N^{(\cdot)})$) is examined for $N=2^n$ with an effectively infinite long-range budget. The linear infidelity region links to the non-local connectivity QRAM case explored in Ref.~\cite{PhysRevX.8.041015}, while the zero exponent region pertains to the look-up table as detailed in Ref.~\cite{low2018trading}.}
\label{tb: k=d'}
\end{table}

\section{Preliminaries}
\label{sec: prior}

All the circuit designs in this work are assumed to access data addressed by $n$ bits with a memory of size $N = 2^n$. We demonstrate how to read a single bit of data before discussing extensions for large word sizes.
Here, we consider four key characteristics of the quantum lookup table and describe how they directly impact its efficiency: \textit{circuit depth}, \textit{qubit count}, \textit{T-gate count}, and \textit{infidelity}. 

The minimum time taken to query a memory location is proportional to circuit depth. When referring to time, we focus on quantum operations and assume an infinite speed of light, which is a valid approximation for finite-size systems.
It is commonly claimed that quantum algorithms have an exponential speedup over classical algorithms when circuit depth scales polylogarithmically in memory size.  
However, this minimum time is only achieved under exceptional circumstances requiring extremely large space.
In most cases, the overall spacetime volume is the correct metric for evaluating cost, and not depth alone.
Thus, we also consider qubit count as a key characteristic.

The qubits in a lookup table design are either used to maintain memory and router status or to act as control or ancilla bits 
The current generation of quantum devices is relatively small, often having a few dozen to a few hundred qubits~\cite{arute2019quantum, pino2021demonstration}.
Considering such constraints in near-to-intermediate term devices, a quantum lookup table design with sublinear qubit scaling becomes highly desirable.

T-gates, essential for implementing routers in the presence of superposition queries, are non-Clifford gates that remain difficult to physically implement without substantial resources in most qubit modalities \cite{campbell2017roads, bravyi2016improved}.
Hence, in practical circuit design, the T-gate count needs to be minimized and roughly approximate the overall spacetime volume of the circuit.

The infidelity is the probability of failing a single query and it scales as a function of memory size multiplied by a generic gate error $\varepsilon$. 
Thus for a constant target query error, lower infidelity provides more gate error tolerance and flexibility for accommodating larger memories.

The bus bit/qubit corresponds to either the input or the output qubit storing the data. Whether it is input or output depends on the phase of the algorithm.
In classical memory, a combination of logical gates sends the bus signal to the specific memory location determined by the input address bit and relays the data stored in that memory location to the output register.
Similarly, quantum routers (\cref{fig: single_router}) help to navigate the qubit to a memory location determined by the address, after which the corresponding classical data is loaded onto the bus qubit, and subsequently, the bus qubit is navigated to the output register. 
The circuit design of the quantum router plays a pivotal role in determining the properties of a quantum lookup table. We review some of these router designs and prior architectures.
We use the notation $\ket{q}_p$ to represent qubit $q$ stored in register $p$. When the context is clear, we interchangeably use $q$ and $\ket{q}_p$, or write $\ket{q}$ when the register is understood.

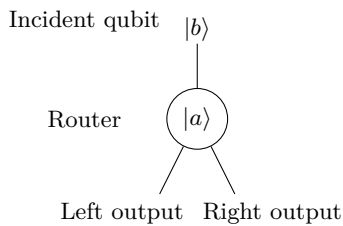
\begin{figure}[h!]
    \centering
   
        \begin{tikzpicture}
            \node[draw, circle,minimum size=0.8cm] (a) at (2,2){$\ket{a}$};
            \node[] at (0.5, 2) {\small Router};
            \draw[] (2,3) to (a);
            \node[] at (0.5, 3.3) {\small Incident qubit};
            \node[] at (2, 3.2) {$\ket{b}$};
            \draw[] (a) to (1.5,1);
            \node[] at (1, 0.75) {\small Left output};
            \draw[] (a) to (2.5,1);
            \node[] at (3, 0.75) {\small Right output};
        \end{tikzpicture}
        
    \caption{A high-level description of a quantum router, the router status is set to $\ket{a}$. Incident qubit $\ket{b}$ will be routed to the left (right) when the router state is set to $\ket{0}$ ($\ket{1}$).}
    \label{fig: single_router}
\end{figure}

\subsection{Fan-out architecture}
The fan-out architecture~\cite{nielsenchuang2010}, an initial proposal for the QRAM, can be visualized as a binary tree of depth $\log N$, where the $N$ memory locations are situated at the tree's leaves (\cref{fig: fan-out}). 
Every non-leaf node in this tree functions as a router, which guides the bus signal to its left or right child. The status of the routers on the $\ell$-th level is determined by the $\ell$-th address bit, which is maximally entangled with the router qubits. 
Once the memory contents $x_j$ are written into the bus, such as using a classically-controlled X gate, the bus qubit is routed back out via the same path, and all router qubits are restored to their original disentangled state.
 
\begin{figure}[htbp]
    \centering
    \resizebox{\columnwidth}{!}{%
    \begin{tikzpicture}
        \foreach \i in {0,...,7} {
                \draw (\i, -0.5)  rectangle node{$x_{\i}$} ++(1,0.5);
            }
        \foreach \i in {3,...,6} {
                \node[draw, circle,minimum size=0.8cm,inner sep=0pt] (first\i) at (1+2*\i-6,1){$\ket{0}$};
                \draw[] (first\i) to (0.5+2*\i-6,0);
                \draw[] (first\i) to (1.5+2*\i-6,0);
            }
        \draw[blue, thick] (first4) to (2.5,0);
        
        \foreach \i in {1,...,2} {
                \node[draw, circle,minimum size=0.8cm,inner sep=0pt] (second\i) at (2+4*\i-4,2){$\ket{1}$};
            }
        \draw[] (second1) to (first3);
        \draw[blue, thick] (second1) to (first4);
        \draw[] (second2) to (first5);
        \draw[] (second2) to (first6);

        \node[draw, circle,minimum size=0.8cm,inner sep=0pt] (third1) at (4,3){$\ket{0}$};
        \draw[blue, thick] (third1) to (second1);
        \draw[] (third1) to (second2);

        \node[] at (0,1) {$\ket{0}$};
        \node[] at (0,2) {$\ket{1}$};
        \node[] at (0,3) {$\ket{0}$};

        \draw[dashed] (0.25,1) to (first3);
        \draw[dashed] (first3) to (first4);
        \draw[dashed] (first4) to (first5);
        \draw[dashed] (first5) to (first6);

        \draw[dashed] (0.25,2) to (second1);
        \draw[dashed] (second1) to (second2);

        \draw[dashed] (0.25,3) to (third1);

       \draw[very thick, blue] (2, -0.5) rectangle ++(1, 0.5);
    \end{tikzpicture}
    }
    \caption{High-level routing scheme for the fan-out architecture. The address bits $\ket{010}$ are entangled with their corresponding routers at each level. In the end the stored classical data at memory location $x_2$ is queried.  }
    \label{fig: fan-out}
\end{figure}
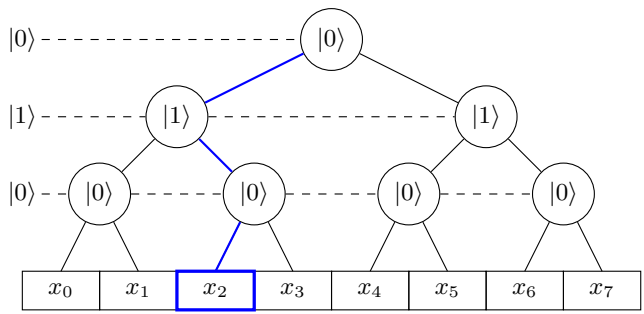

A significant drawback of this architecture is its high linear infidelity. If a single router gets corrupted, it will flip the status of all other routers on that same level, thereby misdirecting the query to an incorrect memory path. Consequently, for error resilience, it is suboptimal to have all the routers simultaneously entangled with the address bits. 

\subsection{Bucket-brigade architecture}

The bucket-brigade architecture~\cite{giovannetti2008architectures} is an improved QRAM proposal over the fan-out architecture with three phases for querying memory: \textit{setting router status}, \textit{qubit route-in}, and \textit{qubit route-out}. It uses CSWAP (Fig.~\ref{fig: cswap_router}) routers to direct signals.

\begin{figure}[htbp]
    \centering
    \begin{subfigure}[t]{0.3\linewidth}
        \begin{tikzpicture}
            \node[draw, circle,minimum size=0.8cm] (a) at (2,2){$\cswap_j$};
            \draw[] (2,3) to (a);
            \draw[] (a) to (1.5,1);
            \draw[] (a) to (2.5,1);
        \end{tikzpicture}
        \caption{}
        \label{fig: toy_binary}
    \end{subfigure}
    \begin{subfigure}[t]{0.6\linewidth}
        \begin{tikzpicture}
            \begin{yquant}
                qubit {$\ket{\cdot}_{t_j}$} $t_j$;
                qubit {$\ket{\cdot}_{in_j}$} $in_j$;
                qubit {$\ket{\cdot}_{R_j}$} $R_j$;
                qubit {$\ket{\cdot}_{L_j}$} $L_j$;
                swap ($in_j$, $L_j$) | ~ $t_j$;
                swap ($in_j$, $R_j$) | $t_j$;
            \end{yquant}
        \end{tikzpicture}
        \caption{}
        \label{fig: toy_alltoall}
    \end{subfigure}

    \caption{(a) A single CSWAP router that takes input qubit and sends it to one of its child nodes based on its state. (b) It takes four qubits to maintain a CSWAP router. The $in_j$ register takes in input and directs the input qubit to either the left or right register depending on the state of $t_j$. }
    \label{fig: cswap_router}
\end{figure}
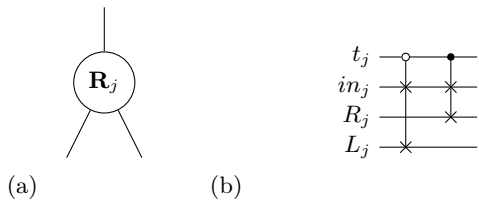

The circuit description for a CSWAP router $\cswap_j$ is illustrated in \cref{fig: toy_alltoall} following~\cite{PRXQuantum.2.020311}. The router is composed of four qubits ($t_j$, $in_j$, $L_j$, and $R_j$) and is referred to as a CSWAP router as it uses Controlled-SWAPs to route information. 
 
The status of a router at depth $\ell$ of the bucket-brigade model is set according to the $\ell$th address bit. For our toy model, $\cswap_j$'s status is set by storing the address qubit $a_{\ell}$ in the status register $t_j$.
The term ``activate'' refers to having the routers' control qubits set to enable a path that directs any input from the top node of the binary schematic tree to one of its leaves.

During qubit route-in, the bus qubit enters the router by being swapped into the $in_j$ register. It is then swapped to the desired memory location connected to either $R_j$ or $L_j$ depending on the status in $t_j$.  
In \cref{fig: toy_alltoall}, the circuit for qubit route-out is not explicitly depicted, as it is the qubit-route in a circuit in reverse.

When a memory address is queried, the control qubits activate a path through the routers to the target memory cell (shown highlighted in blue in \cref{fig: bucket-brigade}). Only the routers along this path are activated, significantly reducing the number of active routers at any time. This is in contrast to the fan-out architecture where all routers are activated simultaneously.
As a result, only $\log N$ routers in the binary tree are strongly entangled with the address bits, and all other routers are weakly entangled. 
Similarly, the bus qubit storing $x_j$ is routed back out via the same path and is weakly entangled with all other routers off the path.

\begin{figure}[htbp]
    \centering
    \resizebox{\columnwidth}{!}{%
    \begin{tikzpicture}
        \foreach \i in {0,...,7} {
                \draw (\i, -0.5)  rectangle node{$x_{\i}$} ++(1,0.5);
            }
        \node[draw, circle,minimum size=0.8cm,inner sep=0pt] (first3) at (1,1){$\ket{w}$};
        \node[draw, circle,minimum size=0.8cm,inner sep=0pt] (first4) at (3,1){$\ket{0}$};
        \node[draw, circle,minimum size=0.8cm,inner sep=0pt] (first5) at (5,1){$\ket{w}$};
        \node[draw, circle,minimum size=0.8cm,inner sep=0pt] (first6) at (7,1){$\ket{w}$};
        \foreach \i in {3,...,6} {
                \draw[] (first\i) to (0.5+2*\i-6,0);
                \draw[] (first\i) to (1.5+2*\i-6,0);
            }
        \draw[blue, thick] (first4) to (2.5,0);
        
        \node[draw, circle,minimum size=0.8cm,inner sep=0pt] (second1) at (2,2){$\ket{1}$};
        \node[draw, circle,minimum size=0.8cm,inner sep=0pt] (second2) at (6,2){$\ket{w}$};

        \draw[] (second1) to (first3);
        \draw[blue, thick] (second1) to (first4);
        \draw[] (second2) to (first5);
        \draw[] (second2) to (first6);

        \node[draw, circle,minimum size=0.8cm,inner sep=0pt] (third1) at (4,3){$\ket{0}$};
        \draw[blue, thick] (third1) to (second1);
        \draw[] (third1) to (second2);

        \node[] at (2,3) {$\ket{010}$};

        \draw[blue, thick] (2.4,3) to (third1);

       \draw[very thick, blue] (2, -0.5) rectangle ++(1, 0.5);
    \end{tikzpicture}
    }
    \caption{High-level routing scheme for the bucket-brigade architecture. The address bits $\ket{010}$ are sent in from the root level router to set all the routers in the corresponding query path to the correct state. The routers not in the query path remain inactive. }
    \label{fig: bucket-brigade}
\end{figure}
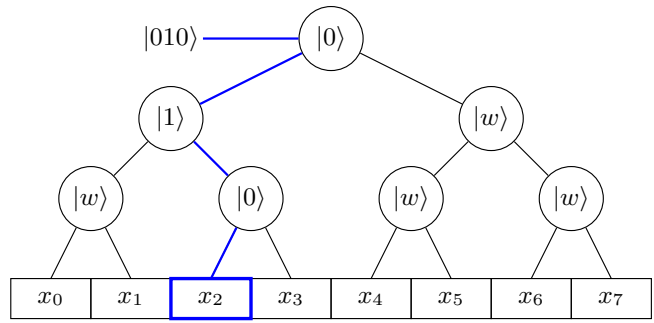

A pivotal study by Hann et al.~\cite{PRXQuantum.2.020311} demonstrated a circuit design for the bucket-brigade model with $O(\log^2 N)$ infidelity that is resilient to generic gate errors.
They observed that the error in some query paths remains contained and does not spread to every other branch of the bucket-brigade model thereby ensuring that some of the query paths remain free of fault. 
The key idea here is that CSWAP routers have a certain ability to contain error.
Assuming that all CSWAP routers along a given path suffer no errors, one can prove that arbitrary errors on any other router off the path do not affect the state of the bus qubit containing the queried data.
Hence, by linearity, the overall error for this model is limited only by the error of all CSWAP routers and circuit elements along any one query path.
%

We would also like to comment on the difference between our work and some recent works.
Hann et al.~\cite[Fig.~$10$]{PRXQuantum.2.020311} illustrated logical connectivity in a generic circuit diagram, but their representation does not automatically translate into local planar implementations. Xu et al.~\cite{xu2023systems} independently proposed a similar recursive planar architecture closely related to our layout, but a detailed scaling analysis was lacking.
We provide a comprehensive infidelity analysis that accounts for realistic noise scenarios, introducing a unified planar architecture achieving sublinear scalings $\sqrt{N}$ compared to the $O(N)$ scaling inherent in traditional bucket-brigade designs, and explicitly presenting planar layouts supporting larger memory word sizes.

\subsection{SELECT-SWAP architecture}

The SELECT-SWAP architecture~\cite{low2018trading} uses a combination of linear and CNOT routers (described below) to route addresses and the fan-out architecture to route out the bus qubit. 

The linear routers do not perform actual routing but instead partition the circuit into $2^d$ rounds, where $d$ is the number of linear routers. In each round $i$, they assign a value to $q_i$ in register $q$, setting it to $1$ only when $i$ matches the address bits held by the linear routers as shown in \Cref{fig: compute_qi}. At the start of each round $i$, register $q$ is reset to $0$. The output value $q_i$ will be an activation signal propagating down the circuit.
This is achieved with the use of multi-control CNOT gates and a detailed implementation can be found in Ref.~\cite[Section 3A]{PhysRevX.8.041015}. The router is named linear as its qubits can be set in a line within a planar layout.

\begin{figure}[ht]
    \centering
    \begin{subfigure}[t]{0.3\linewidth}
        \centering
        \begin{tikzpicture}
            \node[draw, circle,minimum size=0.8cm,inner sep=0pt] (l0) at (3,5){$\ket{0}_{t_0}$};
            \node[draw, circle,minimum size=0.8cm,inner sep=0pt] (l1) at (3,4){$\ket{0}_{t_1}$}; 
            \draw[] (l0) to (l1);
            \draw[] (l1) to (3,3.25);
            \node[xshift=-0.25cm, yshift=0.0cm] at (2.75,3.25) {\small $\ket{q_i}_q$};
            \node  at (3,3.25) [circle,fill,inner sep=1.5pt]{};
        \end{tikzpicture}
        \caption{}
        \label{fig: lin_router_scheme}
    \end{subfigure}
    \begin{subfigure}[t]{0.3\linewidth}
        \centering
        \begin{tikzpicture}
            \begin{yquant}
                qubit {$\ket{0}_{t_0}$} t;
                qubit {$\ket{0}_{t_1}$} b;
                qubit {$\ket{0}_{q}$} q_0;
                CNOT q_0 | ~b, t;
                output {$\ket{1}_q$} q_0;
            \end{yquant}
        \end{tikzpicture}
        \caption{}
    \end{subfigure}
    \begin{subfigure}[t]{0.3\linewidth}
        \centering
        \begin{tikzpicture}
            \begin{yquant}
                qubit {$\ket{0}_{t_0}$} t;
                qubit {$\ket{0}_{t_1}$} b;
                qubit {$\ket{0}_{q}$} q_0;
                cnot q_0 | b, t;      
                output {$\ket{0}_q$} q_0;
            \end{yquant}
        \end{tikzpicture}
        \caption{}
    \end{subfigure}
    \caption{(a) The linear routers $\linear_0$ and $\linear_1$ are configured based on the value of address bits which are set to $\ket{00}$. For two address bits, four rounds of multi-controlled CNOT gates are applied to handle all possible addresses that can occur in superposition, with $q_i$ as the target and the status of $\linear_0$ and $\linear_1$ as the controls. 
    (b) For round $i=0$, the multi-control CNOT is shown. The value of control qubit $q$ is set to $1$, since  $i=\ket{00}$.  
    (c) For round $i=3$, the value of control qubit $q$ is set to $0$. This holds for any round $i\neq \ket{00}$.} 
    \label{fig: compute_qi}
\end{figure}

The CNOT router in \cref{fig: CNOT} can be used for diffusing the input signals from a parent to all child nodes. Notice that implementing a CNOT router does not require any T gates. 

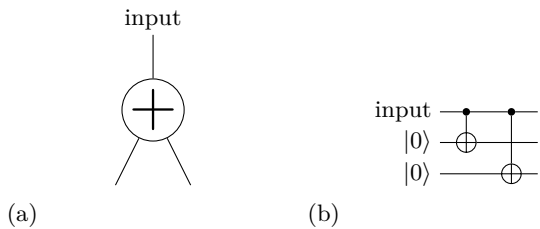
\begin{figure}[htbp]
    \centering
    \begin{subfigure}[t]{0.45\linewidth}
        \centering
        \begin{tikzpicture}
            \node[draw, circle,minimum size=0.8cm,inner sep=0pt] (a) at (2,2){\Huge +};
            \draw[] (2,3) to (a);
            \node[] at (2, 3.2) {\small input};
            \draw[] (a) to (1.5,1);
            \draw[] (a) to (2.5,1);
        \end{tikzpicture}
        \caption{}
        \label{fig: phase_in_cnot_router}
    \end{subfigure}
    \begin{subfigure}[t]{0.45\linewidth}
        \centering
        \begin{tikzpicture}
            \begin{yquant}
                qubit input;
                qubit {$\ket0$} s[2];
                cnot s[0] | input;
                cnot s[1] | input;

            \end{yquant}
        \end{tikzpicture}
        \caption{}
        \label{fig: phase_in_circuit}
    \end{subfigure}

    \caption{(a) A single CNOT router for route-in that takes input qubit and sends it to all its child nodes. (b) Circuit for the CNOT router during route-in. }
    \label{fig: CNOT}
\end{figure}

The high-level routing scheme for memory with $16$ locations is shown in \cref{fig: select_swap} where the address bits are partitioned into two sets, each controlling the linear and fan-out routers respectively. 
The linear and CNOT routers activate one out of four sets of memory locations using the first set of address bits. The second set of address bits then activates the fan-out routers to route out the qubits stored at the designated memory location to an output register. 
For completeness and to make the presentation self-contained, we include in \cref{fig:select_swap_circuit} an explicit illustration of the SELECT-SWAP circuit corresponding to \cref{fig: select_swap}, so that readers can refer to it directly.

\begin{figure}[htbp]
\centering
\begin{tikzpicture}
\node[draw, circle,minimum size=0.8cm,inner sep=0pt] (l0) at (3,5){$\linear_0$};
\node[draw, circle,minimum size=0.8cm,inner sep=0pt] (l1) at (3,4){$\linear_1$};
\node  at (3,3) [circle,fill,inner sep=1.5pt]{};
\node[xshift=-0.25cm, yshift=0.0cm] at (3,3) {\small $q_i$};
\draw[] (l0) to (l1);
\draw[] (l1) to (3,3);

\node[draw, circle,minimum size=0.8cm,inner sep=0pt] (t0) at (3,2){\Huge +};
\node[draw, circle,minimum size=0.8cm,inner sep=0pt] (t1) at (1,1){\Huge +};
\node[draw, circle,minimum size=0.8cm,inner sep=0pt] (t2) at (5,1){\Huge +};
\draw[blue, thick] (t0) to (t1);
\draw[blue, thick] (t0) to (t2);
\draw[blue, thick] (0,0) to (t1);
\draw[blue, thick] (2,0) to (t1);
\draw[blue, thick] (4,0) to (t2);
\draw[blue, thick] (6,0) to (t2);
\draw[] (3,3) to (t0);

\draw[very thick, blue] (-0.5, -0.5)  rectangle node{\textcolor{black}{$x_{4i}$}} ++(1,0.5);
\draw (1.5, -0.5)  rectangle node{$x_{4i+1}$} ++(1,0.5);
\draw (3.5, -0.5)  rectangle node{$x_{4i+2}$} ++(1,0.5);
\draw (5.5, -0.5)  rectangle node{$x_{4i+3}$} ++(1,0.5);

\node[draw, circle,minimum size=0.8cm,inner sep=0pt] (t1p) at (1,-1.5){$\ket{0}$};
\node[draw, circle,minimum size=0.8cm,inner sep=0pt] (t2p) at (5,-1.5){$\ket{0}$};
\node[draw, circle,minimum size=0.8cm,inner sep=0pt] (t0p) at (3,-2.5){$\ket{0}$};

\draw[blue, thick] (t1p) to (0,-0.5);
\draw[] (t1p) to (2,-.5);
\draw[] (t2p) to (4,-.5);
\draw[] (t2p) to (6,-.5);
\draw[blue, thick] (t0p) to (t1p);
\draw[] (t0p) to (t2p);

\node[] at (-1.5,-1.5) {$\ket{0}$};
\node[] at (-1.5,-2.5) {$\ket{0}$};
\draw[dashed] (-1.2,-1.5) to (t1p);
\draw[dashed] (t1p) to (t2p);
\draw[dashed] (-1.2,-2.5) to (t0p);

\end{tikzpicture}

\caption{High-level routing scheme for the SELECT-SWAP architecture of size $N=16$. The detailed circuit construction can be found in \cref{fig:select_swap_circuit}. We note that the SELECT component, implemented via linear routers, is commonly referred to as QROM~\cite{PhysRevX.8.041015}. Consequently, in the absence of CNOT routers, the SELECT-SWAP construction reduces to a standard QROM circuit, as shown in~\cite[Fig.~1(a)]{low2018trading}. In this case, the SWAP circuit is unnecessary, since no tree structure is involved.}
\label{fig: select_swap}
\end{figure}

\begin{figure}[htbp!]
\includegraphics[width=\linewidth]{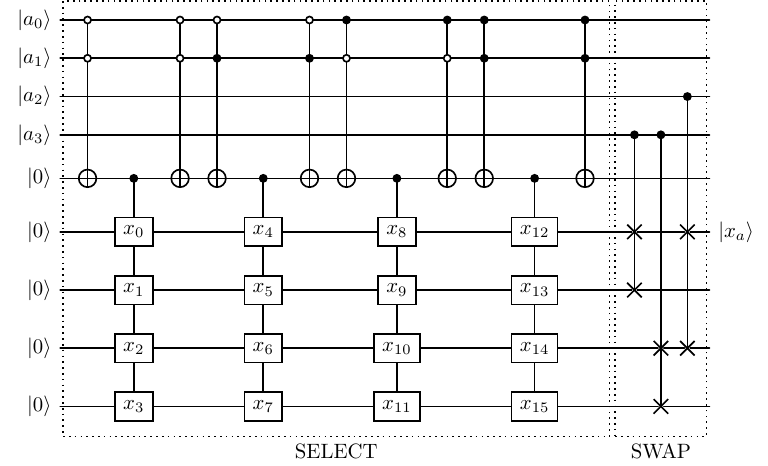}
    \caption{The SELECT-SWAP circuit (inspired by Ref.~\cite{low2018trading}) illustrated for memory size $N=16$. Given an address state $a=\ket{a_3a_2a_1a_0}$, the SELECT part of the circuit uses the two least significant address qubits ($\ket{a_0}$ and $\ket{a_1}$) to activate exactly one memory group $(x_i, x_{i+1}, x_{i+2}, x_{i+3})$, where $i \in \{0,4,8,12\}$. This approach resembles QROM, although here an entire group of memory locations is activated, rather than a single location. Subsequently, the SWAP part of the circuit employs the two most significant address qubits ($\ket{a_2}$ and $\ket{a_3}$) to route the desired memory data $x_a$ out of the selected group. Ultimately, the circuit outputs the data $\ket{x_a}$ stored at the memory location specified by the address $a$. The optimal version is given in Ref.~\cite[Fig.~1(d)]{low2018trading}.}
    \label{fig:select_swap_circuit}
\end{figure}

Although the SELECT-SWAP architecture targets T count reduction, it cannot achieve better than linear infidelity for generic error resilience. This limitation arises because every route in the tree must be correct to produce the desired output.

\subsection{Fine-grained error types}

A common scenario while analyzing various table lookup designs is to assume a single error parameter that represents the error contribution of any operation. However, this obscures the fact that every operation could contribute differently to the errors in the overall circuit. Separating the contributions from various errors allows for more in-depth circuit profiling that helps in identifying which operations hamper the overall performance of a model or circuit design and how one can best harness its full potential. The errors that we will consider throughout this work are summarized in Table~\ref{tab: error_type}.

\subsection{Circuit optimization}
We note that all the high-level circuits we have presented usually admit straightforward circuit optimizations. 
We outline a few examples in this section.

The CNOT router in~\cref{fig: phase_in_cnot_router} is depicted is sending a single input to two outputs. 
However, it can be clearly optimized to have one fewer qubit by instead having the input as one of the outputs.

Similarly, the CSWAP router~\cref{fig: toy_binary} is depicted as performing a controlled swap to move the qubit $in_j$ to either the left or right. 
However, this can also also be optimized by removing one output qubit and one controlled-swap gate by having the $in_j$ input be the same as the $L_j$ output. Then, the qubit will be swapped to $R_j$ if and only if $t_j$ is $1$ without affecting the functionality of the router. 

\begin{table}[htbp]
    \centering
    \def\arraystretch{1.25}
    \begin{tabular}{ |p{3.2cm}|c|  }
\hline
\textbf{Error type} & \textbf{Error rate symbol}  \\
\hline
Idling & $\varepsilon_I$  \\
\hline
Qubit & $\varepsilon_Q$  \\
\hline
Long range  & $\varepsilon_L$  \\
\hline
SWAP gate & $\varepsilon_s$  \\
\hline
CSWAP gate & $\varepsilon_{cs}$  \\
\hline
CNOT gate & $\varepsilon_{c}$  \\
\hline
CCNOT gate & $\varepsilon_{cc}$\\
\hline
\end{tabular}
    \caption{Error types considered in the fine-tuned analysis throughout this work. The qubit error refers to the error that occurs after a qubit has been acted on by a quantum gate. The idling error refers to the error that accumulates on a qubit during a time step when no gate is applied. The long-range error is the error on performing a long-range operation between the two qubits, as specifically defined in \Cref{cor: long-range}.}
    \label{tab: error_type}
\end{table}

\section{General table lookup framework}\label{sec: general_framwork}

In this section, we present our general error-resilient quantum table lookup architecture that can simultaneously have sub-linear scaling in qubit count, T count, and infidelity for a specific choice of parameters. We begin by describing our framework's structure followed by delineating its working and correctness.

\begin{figure*}[htbp]
\centering
\resizebox{\linewidth}{!}{%
\begin{tikzpicture}
\node[draw, circle,minimum size=0.8cm,inner sep=0pt] (l0) at (14,9){$\linear$};
\node[draw, circle,minimum size=0.8cm,inner sep=0pt] (l1) at (14,7){$\linear$};

\draw (13.5, 5.75)  rectangle node{$\ket{q_i}_q$} ++(1,0.5);
\draw[very thick,blue,dashed] (l0) to (l1);
\draw[very thick,blue] (l1) to (14,6.25);

\node[draw, circle,minimum size=0.8cm,inner sep=0pt] (r1) at (14,5){$\cswap$};
\node[draw, circle,minimum size=0.8cm,inner sep=0pt] (r2) at (12,4){$\cswap$};
\node[draw, circle,minimum size=0.8cm,inner sep=0pt] (r3) at (16,4){$\cswap$};
\draw[very thick,blue] (r1) to (14, 5.75);
\draw[very thick,purple] (r1) to (r2);
\draw[] (r1) to (r3);
\draw[dashed] (r2) to (13, 3);
\draw[dashed] (r3) to (15, 3);

\node[draw, circle,minimum size=0.8cm,inner sep=0pt] (r4) at (10,3){$\cswap$};
\draw[very thick,purple,dashed] (r2) to (r4);
\draw[dashed] (r4) to (11, 2);

\node[draw, circle,minimum size=0.8cm,inner sep=0pt] (r5) at (18,3){$\cswap$};
\draw[dashed] (r3) to (r5);
\draw[dashed] (r5) to (17, 2);

\node[draw, circle,minimum size=0.8cm,inner sep=0pt] (c1) at (8,2){\Huge +};
\node[draw, circle,minimum size=0.8cm,inner sep=0pt] (c3) at (6,1){\Huge +};
\node[draw, circle,minimum size=0.8cm,inner sep=0pt] (c4) at (10,1){\Huge +};
\draw[very thick,blue] (c1) to (c3);
\draw[very thick,blue] (c1) to (c4);
\draw[very thick,purple] (c1) to (r4);
\draw[very thick,blue,dashed] (c3) to (7, 0);
\draw[very thick,blue,dashed] (c4) to (9, 0);
\draw[very thick,blue,dashed] (c4) to (11, 0);

\node[draw, circle,minimum size=0.8cm,inner sep=0pt] (c2) at (20,2){\Huge +};
\node[draw, circle,minimum size=0.8cm,inner sep=0pt] (c5) at (18,1){\Huge +};
\node[draw, circle,minimum size=0.8cm,inner sep=0pt] (c6) at (22,1){\Huge +};
\draw[] (c2) to (c5);
\draw[] (c2) to (c6);
\draw[] (c2) to (r5);
\draw[dashed] (c6) to (21, 0);
\draw[dashed] (c5) to (17, 0);
\draw[dashed] (c5) to (19, 0);

\node[draw, circle,minimum size=0.8cm,inner sep=0pt] (c7) at (4,0){\Huge +};
\draw[very thick,blue,dashed] (c3) to (c7);

\node[draw, circle,minimum size=0.8cm,inner sep=0pt] (c8) at (24,0){\Huge +};
\draw[dashed] (c6) to (c8);
\draw[very thick, blue] (0.7, -2)  rectangle node{}  ++(3,1);
\node[align=center] at (2.2, -1.5) {\large $\ket{\boldsymbol\cdot\oplus x_{\lambda i}}_{q_0'}$};
\draw (4.7, -2)  rectangle node{} ++(3,1);
\node[align=center] at (6.2, -1.5) {\large $\ket{\boldsymbol\cdot\oplus x_{\lambda i+1}}_{q_1'}$};
\draw[very thick,blue] (6,-1) to (c7);
\draw[very thick,blue] (2,-1) to (c7);

\draw (20.7, -2)  rectangle node{}  ++(3,1);
\node[align=center] at (22.2, -1.5) {\large $\ket{\boldsymbol\cdot\oplus x_{\lambda i+\lambda-2}}_{q_{\lambda-2}'}$};
\draw (24.7, -2)  rectangle node{} ++(3,1);
\node[align=center] at (26.2, -1.5) { \large $\ket{\boldsymbol\cdot\oplus x_{\lambda i+\lambda-1}}_{q_{\lambda-1}'}$};
\draw[] (c8) to (22,-1);
\draw[] (c8) to (26,-1);

\draw[dashed] (10,-1.5) to (18, -1.5);

\node[draw, circle,minimum size=0.8cm,inner sep=0pt] (t1) at (4,-3){$\cswap$};
\draw[very thick,orange] (t1) to (2,-2);
\draw[] (t1) to (6,-2);

\node[draw, circle,minimum size=0.8cm,inner sep=0pt] (t2) at (24,-3){$\cswap$};
\draw[] (t2) to (22,-2);
\draw[] (t2) to (26,-2);

\node[draw, circle,minimum size=0.8cm,inner sep=0pt] (t3) at (6,-4){$\cswap$};
\node[draw, circle,minimum size=0.8cm,inner sep=0pt] (t4) at (10,-4){$\cswap$};
\node[draw, circle,minimum size=0.8cm,inner sep=0pt] (t5) at (18,-4){$\cswap$};
\node[draw, circle,minimum size=0.8cm,inner sep=0pt] (t6) at (22,-4){$\cswap$};
\node[draw, circle,minimum size=0.8cm,inner sep=0pt] (t7) at (8,-5){$\cswap$};
\node[draw, circle,minimum size=0.8cm,inner sep=0pt] (t8) at (20,-5){$\cswap$};
\node[draw, circle,minimum size=0.8cm,inner sep=0pt] (t9) at (10,-6){$\cswap$};
\node[draw, circle,minimum size=0.8cm,inner sep=0pt] (t10) at (18,-6){$\cswap$};
\node[draw, circle,minimum size=0.8cm,inner sep=0pt] (t11) at (12,-7){$\cswap$};
\node[draw, circle,minimum size=0.8cm,inner sep=0pt] (t12) at (16,-7){$\cswap$};
\node[draw, circle,minimum size=0.8cm,inner sep=0pt] (t13) at (14,-8){$\cswap$};

\draw[very thick,orange,dashed] (t1) to (t3);
\draw[dashed] (t2) to (t6);
\draw[very thick,orange] (t7) to (t3);
\draw[] (t7) to (t4);
\draw[] (t8) to (t5);
\draw[] (t8) to (t6);
\draw[very thick,orange] (t7) to (t9);
\draw[] (t8) to (t10);
\draw[very thick,orange,dashed] (t9) to (t11);
\draw[dashed] (t10) to (t12);
\draw[very thick,orange] (t13) to (t11);
\draw[] (t13) to (t12);

\draw[dashed] (t3) to (7,-3);
\draw[dashed] (t4) to (9,-3);
\draw[dashed] (t4) to (11,-3);
\draw[dashed] (t5) to (17,-3);
\draw[dashed] (t5) to (19,-3);
\draw[dashed] (t6) to (21,-3);

\draw[dashed] (t9) to (11,-5);
\draw[dashed] (t10) to (17,-5);
\draw[dashed] (t11) to (13,-6);
\draw[dashed] (t12) to (15,-6);

\draw [decorate,decoration={brace,amplitude=5pt,raise=2ex, mirror}]
        (15, 6.5) -- (15, 9.5) node[midway,yshift=0em, font=\large, xshift=2em]{$d$};

\draw [decorate,decoration={brace,amplitude=5pt,raise=2ex, mirror}]
        (19, 2.5) -- (19,5.5) node[midway,yshift=0em, font=\large, xshift=2em]{$d'$};

\draw [decorate,decoration={brace,amplitude=5pt,raise=2ex, mirror}]
(26, -0.5) -- (26, 2.5) node[midway,yshift=0em, font=\large, xshift=4em]{$n-d-d'$};

\draw [decorate,decoration={brace,amplitude=5pt,raise=2ex, mirror}]
(26, -8.5) -- (26, -2.5) node[midway,yshift=0em, font=\large, xshift=3em]{$n-d$};

\end{tikzpicture}
}
\caption{High-level layout of our memory query design for \( N = 2^n \) addresses, using partition size \( \lambda = 2^{n-d} \) and CNOT tree size \( \gamma = 2^{n-d-d'} \). The top layer consists of \( d \) linear routers that partition memory into blocks of size \( \lambda \). This is followed by a depth-\( d' \) CSWAP tree, where each leaf connects to a CNOT tree with \( \gamma \) leaves. A final CSWAP tree of depth \( n-d \) routes the queried data from memory to the output. The query algorithm consists of three stages, with details provided in the text of \Cref{sec: general_framwork}.
 The corresponding algorithm is shown in \Cref{alg: 1}. The purple line is used to indicate path $p_1$ and the orange line is used to indicate path $p_2$ in \Cref{alg: 1}.}
\label{fig: unconditional_optimal_layout}
\end{figure*}

The high-level scheme of our design to query a memory of size $N = 2^n$ with partition size $\lambda =2^{n-d}\leq N$ and CNOT tree size $\gamma=2^{n-d-d'}\leq \lambda$ can be visualized as the tree-like structure shown in~\cref{fig: unconditional_optimal_layout}, where $d$ and $d'$ are the depth of the linear and CSWAP router, respectively. The top of our design contains $d \leq n$ linear routers $\linear_0, \ldots, \linear_{d-1}$ where $d = \log_2 \left( \frac{N}{\lambda}\right)$. 
This is followed by a tree of depth $d'$ made up of CSWAP routers $\cswap_0, \ldots, \cswap_{2^{d'+1}-1}$ where $d' = \log_2 \left( \frac{\lambda}{\gamma} \right)$. 
Each of the $2^{d'+\highlight{1}}$ leaves of this tree has a corresponding CNOT tree with $\gamma$ leaves attached to it. Essentially, the linear routers are used to partition the $N$ memory locations into sets of size $\lambda$ and the CSWAP routers further partition these into sets of size $\gamma$. 
Each leaf of the CNOT tree is connected to a memory location. The bottom of the design contains a tree of depth $n-d$ made up of CSWAP routers $\cswap_0, \ldots, \cswap_{2^{n-d+1}-1}$ to read the queried data from the appropriate location. The parts of the tree that correspond to the query path for $x_0$ are highlighted in blue in~\cref{fig: unconditional_optimal_layout}.

Data lookup in our framework can be broken into three stages and without loss of generality, we describe the process to query address $\ket{a} = \ket{a_0 \ldots a_{n-1}} = \ket{0 \ldots 0}$: 
\paragraph{Stage I (address setting).} The status of the linear routers is set using the first $d$ address bits such that $\ket{\linear_z} = \ket{a_z}$. Next, the status of the $d'$ CSWAP routers in the query path are set sequentially using the address bits $\ket{a_d \dots a_{d+d'-1}}$. 
Note that only the CSWAP routers along the query path are set, as opposed to every router in the CSWAP tree.
\paragraph{Stage II (querying memory).} Let $[m]$ denote the set $\{0, \ldots, m-1\}$ for a number $m$. The objective in this stage is to compute intermediate values in registers $\ket{\cdot}_{q'_0}, \ket{\cdot}_{q'_1}, \ldots, \ket{\cdot}_{q'_{\lambda-1}}$, labeled as rectangular boxes connected to CNOT router with blue lines in \Cref{fig: unconditional_optimal_layout}, through $N/\lambda$ repetitions.
For repetition round $i \in \left[\frac{N}{\lambda}\right]$, the control qubit $q_i$ in register $\ket{\cdot}_q$ is set to $\ket{1}_{q}$ if and only if $i = a_0 \ldots a_{d-1}$.
%
The control qubit $q_i$ in register $\ket{\cdot}_q$ is then routed along the query path determined by the status of the CSWAP routers set in Stage I to a leaf of the depth $d'$ tree.
Then, $q_i$'s value is diffused to the $\gamma$ leaves of the activated CNOT tree attached to this leaf. In our example, $q_i$ will be routed and diffused to the leftmost CNOT tree. Note that in the $i$th repetition, the leaves of this CNOT tree are associated with the memory locations $\{{\lambda \cdot i + j} \mid j \in [\lambda]\}$. 
Finally, $q_i$'s value acts as a control indicating whether data from memory is loaded into the corresponding $\ket{\cdot}_{q'_j}$ qubit registers. 
For instance, when $\ket{a_d \dots a_{d+d'-1}} = \ket{0 \ldots 0}$, qubit registers 
$\ket{\cdot}_{q'_j}$ are updated with the data $x_{\lambda \cdot i + j}$ for $j \in [\gamma]$. 
By contrast, the qubit registers $\ket{\cdot}_{q'_j}$ for $j \in \{\gamma, \gamma+1, \ldots, {\lambda-1}\}$ remain unchanged.
Specifically, the values $q'_j$ in the $\ket{\cdot}_{q'_j}$ registers satisfy:
\begin{align}
    q'_j = \bigoplus_{i=0}^{N/\lambda} q_i x_{\lambda \cdot i + j}
\end{align}
where $\oplus$ denotes addition modulo 2.
For our example, after $N/\lambda$ repetitions, only $q_0 = 1$ and only the leftmost CNOT tree is activated. Hence, we find that $\ket{q'_j}_{q'_j} = \ket{x_j}_{q'_j}$ for $j \in [\gamma]$ and $\ket{q'_j}_{q'_j} = \ket{0}_{q'_j}$ otherwise. We remark that between each repetition, the qubits in the CNOT trees can be reset to $\ket{0}$ using measurement-based uncomputation. Additionally, $q_i$ in its corresponding register $\ket{\cdot}_{q'_j}$ is routed back to the top of the CSWAP tree and all qubits except for the router status qubits are reset to $\ket{0}$. 

\paragraph{Stage III (retrieving data).} Data is retrieved similar to how the bus qubit is routed out of the noise-resilient bucket brigade architecture in Ref.~\cite{PRXQuantum.2.020311}. We use the address bits $\ket{a_d \ldots a_{n-1}}$ to set the status of the $n-d$ CSWAP routers in the entire tree, specifically, some of the CNOT routers (of total depth $n - d - d'$) in Stage II are reconfigured as CSWAP routers. As in Stage I, it is only the routers in the query path whose status is set. The leaves of this CSWAP tree point to the $\ket{\cdot}_{q'}$ registers and only the data in $\ket{\cdot}_{q'_{\ell}}$ for $\ell = a_d \ldots a_{n-1}$ is retrieved. In our example, this leads to data in register $\ket{\cdot}_{q'_0}$ being retrieved at the end of the data lookup process.

Note that repetition is confined to Stage II, where the registers $\ket{\cdot}_{q'_j}$ corresponding to the partial address bits are loaded with the appropriate lookup table data, facilitated by the CNOT tree. In Stage III, the remaining address bits configure the CSWAP routers to output the classical data stored in register $\ket{\cdot}_{q'_j}$ corresponding to the full address bits, and this process occurs only once. When $d' = n$, it reduces to QRAM; for $d=  n$, it reduces to QROM; and for $d'=0, d<n$, it reduces to SELECT-SWAP.

The detailed procedure for data lookup is shown in \cref{alg: 1}.
Our objective is to examine the optimal balance between $d$ and $d'$ that results in the most favorable infidelity, T count, and qubit count scaling. Our fine-grained analysis uses the error types from~\cref{tab: error_type}.

\begin{figure*}
\begin{minipage}{\linewidth}
\begin{algorithm}[H]
  \caption{Pseudocode for quantum data lookup on memory of size $N$, partition size $\lambda$, and CNOT tree size $\gamma$. This corresponds to the high-level routing scheme shown in \Cref{fig: unconditional_optimal_layout}. \textsc{SetRouter}($a$, $r$) sets the address bit(s) $a$ to router(s) of type $r\in\{\cswap, 
  \linear\}$ as described in \Cref{sec: prior}. For each \textsc{SetRouter} operation, a path connected by its corresponding routers is formed. \textsc{RouteData}$(d, p)$ moves data qubit $d$ from one end to the other end of the path $p$. We use $\tilde p$ to denote the reverse of the path $p$.
  }
  \label{alg: 1}
  \begin{flushleft}
      \textbf{Setup:} classical database \texttt{data} of size $N$ with $n = \log_2 N$, memory partition $\lambda \leq N$ with $d = \log_2 \left(\frac{N}{\lambda}\right)$ and CNOT tree size $\gamma \leq \lambda$ with $d' = \log_2 \left( \frac{\lambda}{\gamma} \right)$. \\
    \textbf{Input}:  state $\ket{a}_{\text{add}}$ where $a = a_0 \ldots a_{n-1}$ is an address of length $n$. \\
    \textbf{Output}: state $\ket{a}_{\text{add}}\ket{\texttt{data}[a]}_{\text{out}}$ where \texttt{data}$[a]$ refers to the data at address $a$.
  \end{flushleft}
  \begin{algorithmic}[1]
    \Statex \textbf{Stage I:}
    \State Initialize all registers of the quantum lookup table to $\ket{0}$.
        \State \textsc{SetRouter($a_0\dots a_{d-1}, \linear$)}\Comment{Set $d$ linear routers $\linear_k$ }
    \State   \textsc{SetRouter($a_d \dots a_{d+d'-1}, \cswap$)} \Comment{Set $d'$ CSWAP routers $\cswap$ in the corresponding path $p_1$ of the CSWAP tree}
    \Statex \textbf{Stage II:}
    \For{$i$ in $0 \ldots 2^d-1$}
        \If{$i = a_0 \ldots a_{d-1}$} \Comment{The state of Linear routers $\linear$ determines the activation qubit $q_i$ at round $i$}
            \State Set $\ket{q_i}_q \gets \ket{1}$
        \Else 
            \State Set $\ket{q_i}_q \gets \ket{0}$
        \EndIf
        \State \textsc{RouteData}($q_i$, $p_1$) \Comment{ Route $q_i$ along the path $p_1$ determined by the CSWAP routers to a CNOT router}
        %
        \State Propagate $q_i$'s signal to the $\gamma$ leaves $\ket{c_0, \dots, c_{\gamma-1}}_{q'}$ of the CNOT tree.
        \For{$j$ in $0 \ldots \lambda-1$} \Comment{Set the $q'$ registers determined by $q_i$ and \texttt{data}}
            \State Apply unitary $U_{\lambda \cdot i + j}$ where $U_{\lambda \cdot i + j} =$ CNOT$\ket{c_j}\ket{q'_j}$ if \texttt{data}$[\lambda \cdot i + j] = 1$ and $\mathbf{I}$ otherwise.
        \EndFor
        \State Reset all qubits $c_j$ and those in the CNOT tree to $0$.
        \State \textsc{RouteData}($q_i, \tilde{p}_1$) \Comment{Route $q_i$ back up the CSWAP routers to its original location}
        \State Reset all the qubits of the CSWAP routers except the router status qubits and $q_i$ to $0$.
    \EndFor
    \Statex \textbf{Stage III:}
    \State \textsc{SetRouter}($a_{d} \ldots a_{n-1}, \cswap$) \Comment{Set the status of $\log_2(\lambda)$ CSWAP in path $p_2$ routers based on the remaining address bits}
    \State \textsc{RouteData($q'_\ell, p_2$)}, $\ell = a_{d} \ldots a_{n-1}$ \Comment{Route $q'_\ell$ along the query path $p_2$ set by the CSWAP routers to the output register}
  \end{algorithmic}
\end{algorithm}
\end{minipage}
\end{figure*}

\begin{theorem}\label{thm: unconditional_optimal}
    Consider the quantum data lookup structure with the high-level scheme in \cref{fig: unconditional_optimal_layout} with $N$ memory locations. 
    Let $n=\log N$, $\lambda =2^{n-d}$ be the partition size and $\gamma = 2^{n-d-d'}$ be the size of a CNOT tree with $d' \leq d \leq n$. The infidelity of this circuit is 
    \begin{align}\label{eq: uncond_opt_first}
    \begin{split}
        &O\left( \textcolor{black}{\varepsilon_L\left(\frac{\gamma N}{\lambda}+\frac{N}{\lambda}\log\frac{\lambda}{\gamma}\right)}+ \varepsilon_s \log\frac{\lambda^2}{\gamma} \right.\\
        &\left.
        +\varepsilon_{I}\left(\frac{N}{\lambda}\log N\left(\log\frac{N}{\gamma}+\gamma +\log\frac{\lambda}{\gamma}\right)+\polylog\lambda \right) \right. \\
        &\left.
         + \varepsilon_c \frac{\gamma N}{\lambda}
         + \varepsilon_{cc}\frac{N}{\lambda}\log\frac{N}{\lambda}\right.\\
        &\left.
         + \varepsilon_{cs}\left(\frac{N}{\lambda}\log\frac{\lambda}{\gamma}+\log^2\frac{\lambda}{\gamma} + \log^2\lambda \right)
        \right).
    \end{split}
    \end{align}
    Moreover, the T count for this design is $O(\frac{N}{\gamma}+\frac{N}{\lambda}\log\frac{N}{\lambda}+\lambda)$, and its qubit count is $O(\log\frac{N}{\lambda}+\lambda)$.
\end{theorem}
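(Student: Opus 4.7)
My plan is to prove all three bounds---infidelity, T count, and qubit count---by auditing each stage of \cref{alg: 1} and each subcomponent of \cref{fig: unconditional_optimal_layout} in turn, then summing by linearity. The central conceptual tool is the CSWAP error-containment property established in Ref.~\cite{PRXQuantum.2.020311}: errors on routers that lie off the unique active query path do not propagate to the bus register, so the overall infidelity decomposes into contributions from operations acting on the active path together with truly global resources such as idling qubits.

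For the qubit count I would simply add the registers in each subcomponent. The top layer contains $\log(N/\lambda)$ linear-router qubits. Because only one branch through the depth-$d'$ CSWAP tree and one CNOT tree are active at any time, and the qubits of inactive subtrees can be reused across the $N/\lambda$ Stage~II repetitions, those layers contribute only $O(\log(\lambda/\gamma) + \gamma)$. The $\lambda$ output registers $q'_0,\ldots,q'_{\lambda-1}$ and the depth-$\log\lambda$ readout tree add $O(\lambda)$, giving the claimed total of $O(\log(N/\lambda)+\lambda)$.

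For the T count I would use that each CSWAP and CCNOT costs $O(1)$ T gates while CNOTs and SWAPs are T-free. Stage~I's router-setting is subdominant. Stage~II runs $N/\lambda$ outer iterations; within each, computing $q_i$ through the linear routers costs $O(\log(N/\lambda))$ CCNOTs, and routing $q_i$ to each of the $\lambda/\gamma$ CNOT trees in turn costs $O(\lambda/\gamma)$ CSWAPs; these produce the $(N/\lambda)\log(N/\lambda)$ and $N/\gamma$ contributions respectively. Stage~III adds $O(\lambda)$ CSWAPs for route-out, completing the claimed $O(N/\gamma + (N/\lambda)\log(N/\lambda) + \lambda)$ bound.

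The hardest part, and where the bulk of the proof would go, is the fine-grained infidelity bound. My plan is to match each $\varepsilon_\star$ symbol from \cref{tab: error_type} against the count of its associated operations on the active path---or, for $\varepsilon_I$, against the idle-qubit count multiplied by stage depth. Concretely, $\varepsilon_c\,\gamma N/\lambda$ arises from the $\gamma$ data-loading CNOTs in each of the $N/\lambda$ repetitions; $\varepsilon_{cc}\,(N/\lambda)\log(N/\lambda)$ from the linear-router CCNOTs that compute $q_i$; the three $\varepsilon_{cs}$ contributions from middle-tree routing during Stage~II, the Stage~III route-out tree, and the Stage~I router-setting, respectively; $\varepsilon_L$ from the long-range hops needed to move $q_i$ and the bus qubit between the physically distant stages in the planar layout; $\varepsilon_s$ from the incidental SWAP layer that relocates $q_i$ between the two CSWAP trees; and the $\varepsilon_I$ contributions from multiplying the idle-qubit count of each stage by that stage's depth. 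The subtle step is applying the containment argument of Ref.~\cite{PRXQuantum.2.020311} carefully to each router type so that off-path errors are correctly excluded; otherwise one would overcount by the full tree size and recover only the fan-out-like $O(N)$ scaling rather than the claimed sublinear behaviour. Once each count is verified, summing them reproduces \cref{eq: uncond_opt_first}.
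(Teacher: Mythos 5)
Your overall strategy---stage-by-stage accounting that uses the CSWAP error-containment property of Ref.~\cite{PRXQuantum.2.020311}, counts gate errors along the active query path, and treats idling as idle-qubit count times stage depth---is essentially the paper's approach (the paper simply packages Stages I and III by invoking \cref{thm: optimal_fine_inf_planar} directly), and your T-count and qubit-count totals match. However, two of your intermediate claims would not survive being made precise. First, the qubits of ``inactive'' subtrees of the depth-$d'$ CSWAP tree and of the CNOT trees cannot be elided or reused in the way you describe: the address is queried in superposition, so all $2^{d'}$ branches and their CNOT trees must physically exist, contributing $O(\lambda)$ qubits rather than $O(\log\frac{\lambda}{\gamma}+\gamma)$; your final $O(\log\frac{N}{\lambda}+\lambda)$ bound survives only because the $q'$ registers independently contribute $O(\lambda)$.

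Second, and more consequentially for the infidelity bound, attributing $\varepsilon_L$ to ``long-range hops needed to move $q_i$ and the bus qubit between the physically distant stages'' does not produce the dominant term $\varepsilon_L\frac{\gamma N}{\lambda}$ in \cref{eq: uncond_opt_first}. In the paper's proof that term arises because, in each of the $N/\lambda$ Stage-II repetitions, every node of the activated $\gamma$-leaf CNOT tree requires a long-range operation in the planar layout (each node contributes $\varepsilon_L+\varepsilon_c$), so the long-range error per repetition scales with $\gamma$, whereas inter-stage hops contribute only logarithmically per repetition; as written, your plan would undercount $\varepsilon_L$ by a factor of $\gamma$. Relatedly, the $\varepsilon_s\log\frac{\lambda^2}{\gamma}$ term comes from the local SWAPs used for router address-setting inside the Stage-I (depth $d'$) and Stage-III (depth $n-d$) CSWAP trees, i.e., from the planar bucket-brigade analysis of \cref{thm: optimal_fine_inf_planar}, not from a SWAP layer relocating $q_i$ between the two trees. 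Once these two attributions are corrected, your remaining term-by-term matching ($\varepsilon_c$, $\varepsilon_{cc}$, the three $\varepsilon_{cs}$ contributions, and the idling accounting) agrees with the paper's proof.
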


We will restate and prove this theorem in~\cref{sec: planar_generic} after explaining how to lay out the scheme in~\cref{fig: unconditional_optimal_layout} on a planar grid with nearest neighbor connectivity. Here, we use the above theorem to find an instance that has sub-linear scaling for infidelity, qubit, and T-counts.

\begin{corollary}\label{cor: uncond_opt}
    For $N$ memory locations, there exists a quantum data lookup scheme that has $\tilde O(N^{\frac{3}{4}})$ infidelity, $O(N^{\frac{3}{4}})$ T count, and $O(\sqrt{N})$ qubit count.
\end{corollary}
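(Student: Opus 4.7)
The plan is to derive Corollary~\ref{cor: uncond_opt} as a direct specialization of Theorem~\ref{thm: unconditional_optimal} by fixing the free parameters to $\lambda = \sqrt{N}$ and $\gamma = N^{1/4}$. When $N = 2^n$ with $n$ divisible by $4$, both $\lambda$ and $\gamma$ are powers of $2$ and the hypothesis $d' \leq d \leq n$ of the theorem is satisfied with $d = n/2$ and $d' = n/4$, so all stated bounds apply. The corollary then reduces to three routine asymptotic calculations, one for each claimed scaling.

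First I would verify the qubit and T-gate counts. Plugging $\lambda = \sqrt{N}$ into the qubit bound $O(\log(N/\lambda) + \lambda)$ yields $O(\tfrac{1}{2}\log N + \sqrt{N}) = O(\sqrt{N})$. Plugging $\lambda = \sqrt{N}$ and $\gamma = N^{1/4}$ into the T-count bound $O(N/\gamma + (N/\lambda)\log(N/\lambda) + \lambda)$ gives $O(N^{3/4} + \sqrt{N}\log N + \sqrt{N}) = O(N^{3/4})$. Both scalings are as claimed.

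The remaining task, and the main bookkeeping challenge, is to check that each of the seven error-coefficient terms appearing in~\eqref{eq: uncond_opt_first} is at most $\tilde{O}(N^{3/4})$ under this choice. Term-by-term: the $\varepsilon_L$ coefficient becomes $O(N^{3/4} + \sqrt{N}\log N) = O(N^{3/4})$; the $\varepsilon_s$ coefficient is $O(\log N)$; the $\varepsilon_c$ and $\varepsilon_{cc}$ coefficients evaluate to $O(N^{3/4})$ and $O(\sqrt{N}\log N)$ respectively; the $\varepsilon_{cs}$ coefficient reduces to $O(\sqrt{N}\log N + \log^2 N)$; and the $\varepsilon_I$ coefficient expands to $\sqrt{N}\log N \cdot (\log(N/\gamma) + \gamma + \log(\lambda/\gamma)) = \sqrt{N}\log N \cdot O(\log N + N^{1/4}) = O(N^{3/4}\log N)$, which dominates and is responsible for the $\tilde{O}$ notation rather than a bare $O$.

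The main obstacle is simply that the infidelity expression has many pieces with different combinations of $\lambda$, $\gamma$, and $N$, so care is needed to confirm no term grows faster than $\tilde{O}(N^{3/4})$; once each has been individually bounded, taking the maximum and absorbing the single polylogarithmic factor into the $\tilde{O}$ notation yields the claimed infidelity, completing the proof.
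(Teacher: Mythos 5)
Your proposal is correct and follows exactly the paper's approach: the paper's proof is simply to set $\lambda = \sqrt{N}$ and $\gamma = N^{1/4}$ and invoke Theorem~\ref{thm: unconditional_optimal}, which is what you do, just with the term-by-term substitutions spelled out explicitly (and your bookkeeping is accurate, including identifying the $\varepsilon_I$ term as the source of the $\tilde{O}$).
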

\begin{proof}
    For the high-level scheme depicted in \cref{fig: unconditional_optimal_layout}, setting $\lambda = \sqrt{N}$ and $\gamma = N^{1/4}$ and applying Theorem~\ref{thm: unconditional_optimal}, gives the result.
\end{proof}

We claim that it is necessary to have CSWAP routers at the top of our design in Stages I and II to achieve sublinear infidelity scaling.
Assume by way of contradiction that the CSWAP routers are replaced with CNOT routers.  First, note that CNOT routers are not robust to Pauli $Z$ errors as shown in \cref{fig: error_prop_opt}. Although the Pauli $Z$ error propagates only to the parent node in the CNOT router, it can result in a phase kickback that can alter the address state presented during a query. Specifically for our example, this happens when $i = a_0 \ldots a_{d-1}$ in Stage II, and there are an odd number of Z errors along the query path in the framework. A comparable scenario is also presented in Ref.~\cite[Section V]{xu2023systems}, where the read-out CNOT tree demonstrates resilience to only Pauli Z errors.

\begin{figure}[htbp]
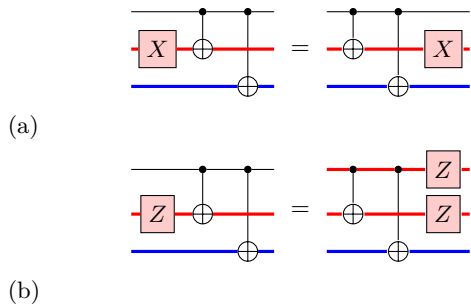

    \centering
    \begin{subfigure}[t]{0.9\linewidth}
        \centering
        \begin{yquantgroup}
            \registers{
            qubit {} q[3];
            }
            \circuit{
            setstyle {very thick, blue} q[2];
             setstyle {very thick, red} q[1];
            [fill=red!20]
            x q[1];
            cnot q[1] | q[0];
            cnot q[2] | q[0];
            }
            \equals
            \circuit{
            setstyle {very thick, blue} q[2];
             setstyle {very thick, red} q[1];
            cnot q[1] | q[0];
            cnot q[2] | q[0];
            [fill=red!20]
            x q[1];
            }
        \end{yquantgroup}
        \caption{}
    \end{subfigure}
    \begin{subfigure}[t]{0.9\linewidth}
        \centering
        \begin{yquantgroup}
            \registers{
            qubit {} q[3];
            }
            \circuit{
            setstyle {very thick, blue} q[2];
            setstyle {very thick, red} q[1];
            [fill=red!20]
            z q[1];
            cnot q[1] | q[0];
            cnot q[2] | q[0];
            }
            \equals
            \circuit{
            setstyle {very thick, blue} q[2];
            setstyle {very thick, red} q[1];
            setstyle {very thick, red} q[0];
            cnot q[1] | q[0];
            cnot q[2] | q[0];
            [fill=red!20]
            z q[0];
            [fill=red!20]
            z q[1];
            }
        \end{yquantgroup}
        \caption{}
    \end{subfigure}
    \caption{(a) Pauli $X$ error does not propagate from the bad branch (red) to the good branch (blue) in the CNOT router. (b) Pauli $Z$ error does propagate into the parent register from a child branch in the CNOT router.}
    \label{fig: error_prop_opt}
\end{figure}

 Second, to prevent a phase kickback, we need to assume that the entire CNOT tree with $\lambda$ leaves is part of the query branch and remains Pauli $Z$ error-free. In this case, the idling error will be dominated by Stage II's contribution of $O\left(2^d \varepsilon_I \left( d + 2^{n-d} \right)\right) = O(2^n) $ leading to a linear infidelity scaling. By contrast, the CSWAP router is robust against error propagation (see~\cref{sec: planar} for details). Hence, we consider our framework with a non-zero number of CSWAP routers at the top of our design to be a more resource-efficient approach.

Uncomputing the table lookup circuit is crucial to ensure that there are no residual garbage states entangled with the address and output registers once a query has been performed. For our design, uncomputing can be done as follows.
First, run the circuit for Stage III in reverse to route the output bit back into its corresponding $q'$ register, then run the Stage II circuits again to set all $q'$ registers to $0$. Finally, run the Stage I circuit in reverse to reset the status of all the routers. This will effectively double the infidelity scaling. 

A potential way to reduce the T count for our framework without worsening its query infidelity is by modifying the design for Stage III. Specifically, in~\cref{fig: unconditional_optimal_layout}, we retain the CSWAP tree from the bottom of the figure up to a depth $d'$. Each of the $2^{d'}$ leaves of this tree has a corresponding tree of fan-out routers with $\gamma$ leaves attached to it. Each of the $\gamma$ leaves is connected to a corresponding $q'_j$ register. Essentially, this creates $\lambda/\gamma$ different fan-out router substructures each of whose routers is set independently of the other. Let $\ell = a_d \ldots a_n-1$. Then, this will impose the condition that the $\gamma$-sized sub-structure of the fan-out routers containing $q'_\ell$ should be error-free for noise resiliency. Since a similar condition is satisfied by the CNOT trees in Stage II, this does not affect the asymptotic scaling for the infidelity. However, the T count reduces as the fan-out routers do not use T gates. A more detailed analysis of this improvement is left for future work.

\section{Planar layouts for quantum data lookup frameworks}
\label{sec: planar}

In this section, we discuss how our general quantum data lookup framework can be designed on a planar layout with only local connectivity. We first build some intuition for the underlying principles to achieve this by modifying the bucket-brigade design of~\cite{PRXQuantum.2.020311} for a planar layout. This is illustrated in~\Cref{sec: planar_layout}. An initial analysis of query infidelity for this design, shows that it scales sub-linearly in memory size for the planar layout. In~\Cref{sec: entang_distill} we use entanglement distillation to perform long-range operations and recover the log scaling for query infidelity in the planar layout. We put these ideas together to present the planar layout for our general framework in~\Cref{sec: planar_generic}. 
In the figures illustrating the planar layout, blue lines are used to represent connections between all registers within a single router, while red lines indicate connections between routers at different levels.

\subsection{Planar layout for the bucket-brigade model}
\label{sec: planar_layout}

We provide a circuit design for the bucket-brigade QRAM model, assuming the qubits are laid out on a 2D planar lattice and multi-qubit gates act only on adjacent qubits.
We first demonstrate a toy model with two memory locations, where the routing scheme is depicted as a binary tree in \cref{fig: toy_binary}. To reach the desired memory allocation, a single CSWAP router $\cswap_0$ is employed to direct the incoming address qubit into the designated memory location, from where the stored data $x_i$ is retrieved and then routed out using the same path.   

\begin{figure}[htbp]
    \centering
    \begin{subfigure}[t]{0.45\linewidth}
        \begin{tikzpicture}
            \tikzset{dot/.style={fill=black,circle, scale=0.5}}
            \foreach \i in {0,...,3} {
                    \draw [very thin,gray] (\i,0) -- (\i,3);
                }
            \foreach \i in {0,...,3} {
                    \draw [very thin,gray] (0,\i) -- (3,\i);
                }
            \node[dot, label=north east:$t_0$] at (1,0){};
            \node[dot, label=north east:$L_0$] at (0,1){};
            \node[dot, label=north east:$in_0$] at (1,1){};
            \node[dot, label=north east:$R_0$] at (2,1){};
            \node[dot, label=north east:$a_0$] at (0,2){};
            \node[dot, label=north east:{\small input}] at (1,2){};
            \node[dot, label=north east:bus] at (1,3){};

            \draw[very thick, blue] (1,0) to (1,1);
            \draw[very thick, blue] (0,1) to (2,1);
        \end{tikzpicture}
        \caption{}
        \label{fig: toy-planar}
    \end{subfigure}
    \begin{subfigure}[t]{0.45\linewidth}
        \begin{tikzpicture}
            \begin{yquant}
                qubit {$\ket{\cdot}_{t_j}$}$t_j$;
                qubit {$\ket{\cdot}_{in_j}$} $in_j$;
                qubit {$\ket{\cdot}_{R_j}$}$R_j$;
                qubit {$\ket{\cdot}_{L_j}$}$L_j$;
                swap ($in_j$, $L_j$) | ~ $t_j$;
                swap ($in_j$, $R_j$) | $t_j$;
            \end{yquant}
        \end{tikzpicture}
        \caption{}
    \end{subfigure}
     \begin{subfigure}[t]{0.9\linewidth}
        \begin{tikzpicture}
            \tikzset{dot/.style={fill=black,circle, scale=0.5}}
            \foreach \i in {0,...,8} {
                    \draw [very thin,gray] (\i,0) -- (\i,4);
                }
            \foreach \i in {0,...,4} {
                    \draw [very thin,gray] (0,\i) -- (8,\i);
                }
           
            \draw[very thick, red] (2,2) to (3,2);
            \draw[very thick, red] (5,2) to (6,2);
            
            \draw[very thick, blue] (2,1) to (2,3);
            \draw[very thick, blue] (1,2) to (2,2);
            \node at (1,2) [dot]{};
            \node at (2,2) [dot]{};
            \node at (2,1) [dot]{};
            \node at (2,3) [dot]{};
            
            \draw[very thick, blue] (3,2) to (5,2);
            \draw[very thick, blue] (4,2) to (4,1);
            \node at (3,2) [dot]{};
            \node at (4,2) [dot]{};
            \node at (5,2) [dot]{};
            \node at (4,1) [dot]{};

            \draw[very thick, blue] (6,1) to (6,3);
            \draw[very thick, blue] (6,2) to (7,2);
            \node at (6,2) [dot]{};
            \node at (6,1) [dot]{};
            \node at (6,3) [dot]{};
            \node at (7,2) [dot]{};
        \end{tikzpicture}
        \caption{}
        \label{fig: H-figure}
    \end{subfigure}
    \caption{(a) A planar layout of qubits for the toy model with nearest-neighbor connectivity. (b) The CSWAP circuit corresponding to the blue T-shaped layout in (a). (c) Joining two T-shaped routers to form a single H-tree segment.}
\end{figure}
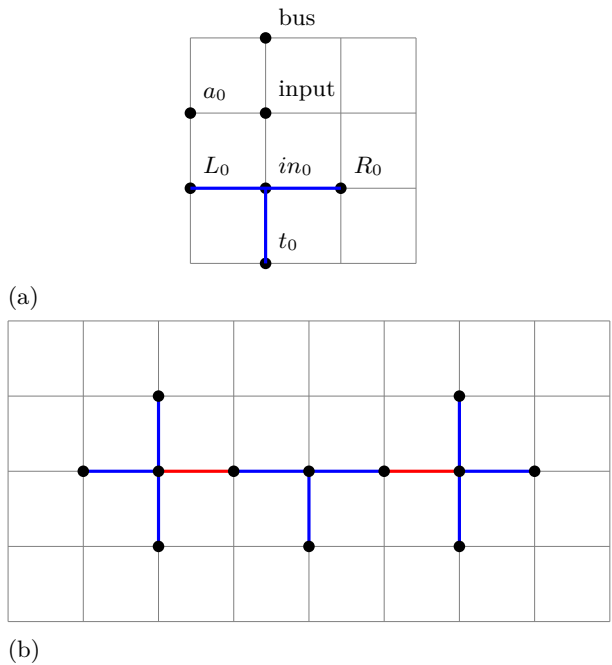

The CSWAP router's four qubits can be arranged in a \textit{T-shaped} configuration as shown in \cref{fig: toy-planar} where the qubits are located at the intersections of the grid. 
This configuration ensures that each router qubit is adjacent to any other router qubit with at most one local SWAP.
The three-qubit CSWAP gate can be decomposed into a sequence of Clifford and T gates that operate on at most two qubits~\cite{low2018trading}.

For the larger memory size of $N=16$, the high-level bucket-brigade routing scheme is shown in \cref{fig: routing_scheme_16}, where both the route-in and route-out phases follow the same path in the tree. 
The corresponding planar layout is shown in \cref{fig: planar_16} where {the blue lines show the structure of a CSWAP router, the red lines show connections between different levels of the routing scheme, and the input, address, and bus qubits are positioned in the center of the diagram.} 

The routers are placed on the grid following the H-tree fractal pattern~\cite{mandelbrot1982fractal} starting from the root at the center and leaves at the boundaries of the grid. The left and right registers of the leaf-level routers send an incoming qubit to the respective memory locations. A pair of T-shaped routers each laid out according to \cref{fig: toy-planar} can be joined together to form a single H-tree segment as shown in \cref{fig: H-figure}. Note that such a planar layout scheme can be naturally extended to higher dimensions such as a cubic grid for 3D. However, we focus only on the planar grid throughout this work.
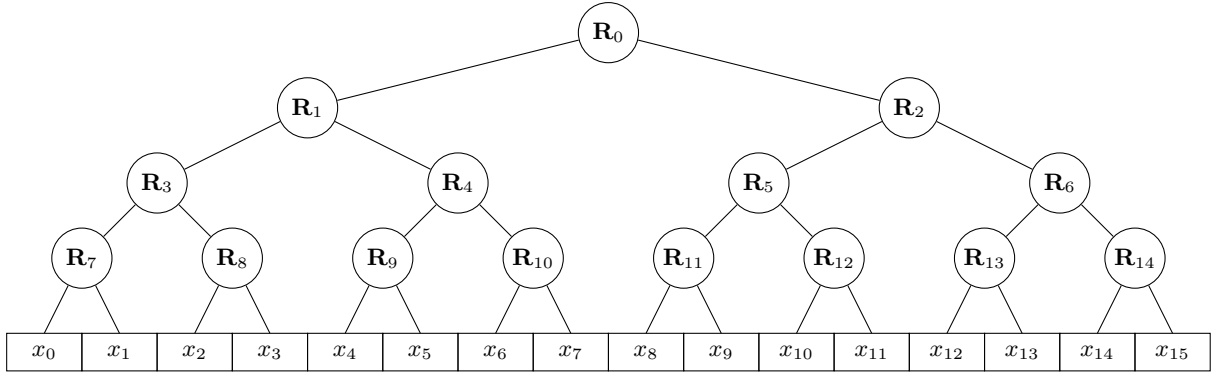
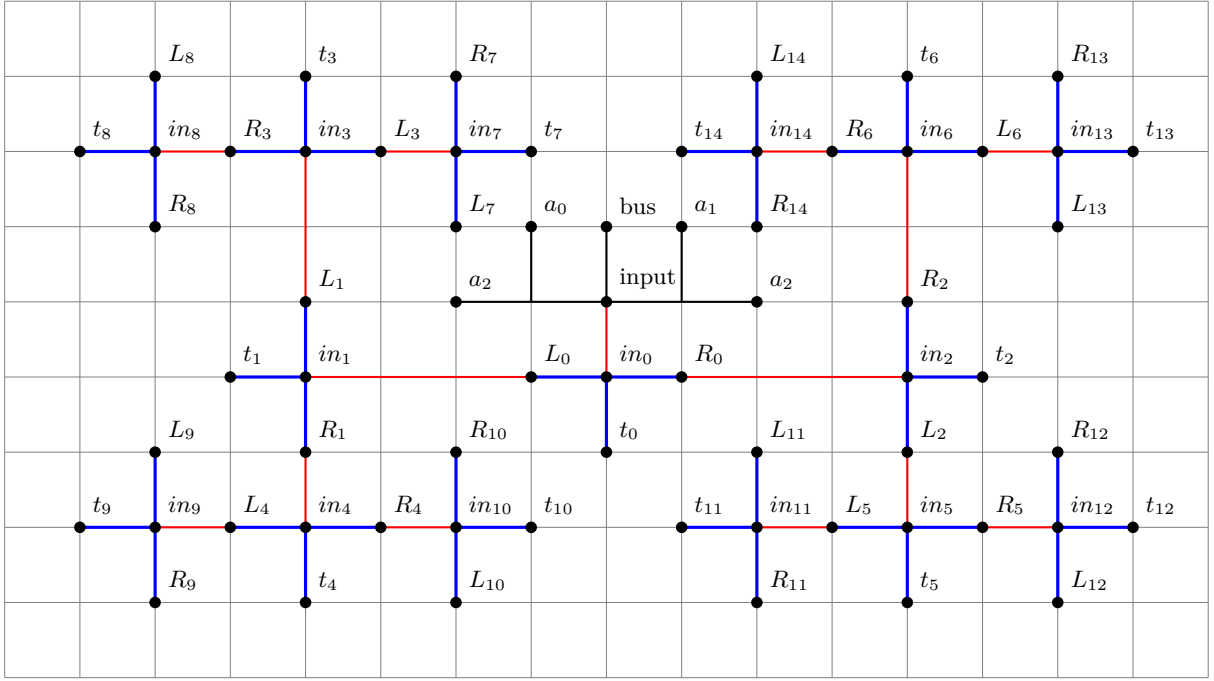
\begin{figure*}[htbp]
    \centering
    \begin{subfigure}[t]{0.9\linewidth}
        \begin{tikzpicture}
        \foreach \i in {0,...,15} {
                \draw (\i, -0.5)  rectangle node{$x_{\i}$} ++(1,0.5);
            }
        \foreach \i in {7,...,14} {
                \node[draw, circle,minimum size=0.8cm,inner sep=0pt] (first\i) at (1+2*\i-14,1){$\cswap_{\i}$};
                \draw[] (first\i) to (0.5+2*\i-14,0);
                \draw[] (first\i) to (1.5+2*\i-14,0);
            }
        \foreach \i in {3,...,6} {
                \node[draw, circle,minimum size=0.8cm,inner sep=0pt] (second\i) at (2+4*\i-12,2){$\cswap_{\i}$};
            }
        \draw[] (second3) to (first7);
        \draw[] (second3) to (first8);
        \draw[] (second4) to (first9);
        \draw[] (second4) to (first10);
        \draw[] (second5) to (first11);
        \draw[] (second5) to (first12);
        \draw[] (second6) to (first13);
        \draw[] (second6) to (first14);
        \node[draw, circle,minimum size=0.8cm,inner sep=0pt] (third1) at (4,3){$\cswap_{1}$};
        \node[draw, circle,minimum size=0.8cm,inner sep=0pt] (third2) at (12,3){$\cswap_{2}$};
        \draw[] (third1) to (second3);
        \draw[] (third1) to (second4);
        \draw[] (third2) to (second5);
        \draw[] (third2) to (second6);
        \node[draw, circle,minimum size=0.8cm,inner sep=0pt] (fourth0) at (8,4){$\cswap_{0}$};
        \draw[] (fourth0) to (third1);
        \draw[] (fourth0) to (third2);
    \end{tikzpicture}
        \caption{}
    \label{fig: routing_scheme_16}
    \end{subfigure}
     \begin{subfigure}[t]{0.9\linewidth}
        \begin{tikzpicture}
        \tikzset{dot/.style={fill=black,circle, scale=0.5}}
        \foreach \i in {0,...,16} {
                \draw [very thin,gray] (\i,0) -- (\i,9);
            }
        \foreach \i in {0,...,9} {
                \draw [very thin,gray] (0,\i) -- (16,\i);
            }
        \draw[thick, red] (8,5) to (8,4);

        \draw[thick, red] (7,4) to (4,4);
        \draw[thick, red] (9,4) to (12,4);

        \draw[thick, red] (4,5) to (4,7);
        \draw[thick, red] (4,2) to (4,3);
        \draw[thick, red] (12,5) to (12,7);
        \draw[thick, red] (12,2) to (12,3);

        \draw[thick, red] (3,2) to (2,2);
        \draw[thick, red] (6,2) to (5,2);
        \draw[thick, red] (3,7) to (2,7);
        \draw[thick, red] (5,7) to (6,7);
        \draw[thick, red] (10,2) to (11,2);
        \draw[thick, red] (13,2) to (14,2);
        \draw[thick, red] (10,7) to (11,7);
        \draw[thick, red] (13,7) to (14,7);
        \node[dot, label=north east:{\small input}] at (8,5){};
        \node[dot, label=north east: bus] at (8,6){};
        \node[dot, label=north east: $a_0$] at (7,6){};
        \node[dot, label=north east: $a_1$] at (9,6){};
        \node[dot, label=north east: $a_2$] at (6,5){};
        \node[dot, label=north east: $a_2$] at (10,5){};
        \draw[thick, black] (6,5) to (10,5);
        \draw[thick, black] (7,5) to (7,6);
        \draw[thick, black] (8,5) to (8,6);
        \draw[thick, black] (9,5) to (9,6);
        \draw[very thick, blue] (7,4) to (9,4);
        \draw[very thick, blue] (8,3) to (8,4);
        \node[dot, label=north east:$t_0$] at (8,3){};
        \node[dot, label=north east:$in_0$] at (8,4){};
        \node[dot, label=north east:$R_0$] at (9,4){};
        \node[dot, label=north east:$L_0$] at (7,4){};
        \draw[very thick, blue] (4,3) to (4,5);
        \draw[very thick, blue] (3,4) to (4,4);
        \node[dot, label=north east:$t_1$] at (3,4){};
        \node[dot, label=north east:$in_1$] at (4,4){};
        \node[dot, label=north east:$L_1$] at (4,5){};
        \node[dot, label=north east:$R_1$] at (4,3){};
        \draw[very thick, blue] (12,3) to (12,5);
        \draw[very thick, blue] (12,4) to (13,4);
        \node[dot, label=north east:$t_2$] at (13,4){};
        \node[dot, label=north east:$in_2$] at (12,4){};
        \node[dot, label=north east:$L_2$] at (12,3){};
        \node[dot, label=north east:$R_2$] at (12,5){};
        \draw[very thick, blue] (3,7) to (5,7);
        \draw[very thick, blue] (4,7) to (4,8);
        \node[dot, label=north east:$t_3$] at (4,8){};
        \node[dot, label=north east:$in_3$] at (4,7){};
        \node[dot, label=north east:$L_3$] at (5,7){};
        \node[dot, label=north east:$R_3$] at (3,7){};
        \draw[very thick, blue] (3,2) to (5,2);
        \draw[very thick, blue] (4,1) to (4,2);
        \node[dot, label=north east:$t_4$] at (4,1){};
        \node[dot, label=north east:$in_4$] at (4,2){};
        \node[dot, label=north east:$L_4$] at (3,2){};
        \node[dot, label=north east:$R_4$] at (5,2){};
        \draw[very thick, blue] (11,2) to (13,2);
        \draw[very thick, blue] (12,1) to (12,2);
        \node[dot, label=north east:$t_5$] at (12,1){};
        \node[dot, label=north east:$in_5$] at (12,2){};
        \node[dot, label=north east:$L_5$] at (11,2){};
        \node[dot, label=north east:$R_5$] at (13,2){};
        \draw[very thick, blue] (11,7) to (13,7);
        \draw[very thick, blue] (12,7) to (12,8);
        \node[dot, label=north east:$t_6$] at (12,8){};
        \node[dot, label=north east:$in_6$] at (12,7){};
        \node[dot, label=north east:$L_6$] at (13,7){};
        \node[dot, label=north east:$R_6$] at (11,7){};
        \draw[very thick, blue] (6,6) to (6,8);
        \draw[very thick, blue] (6,7) to (7,7);
        \node[dot, label=north east:$t_7$] at (7,7){};
        \node[dot, label=north east:$in_7$] at (6,7){};
        \node[dot, label=north east:$L_7$] at (6,6){};
        \node[dot, label=north east:$R_7$] at (6,8){};
        \draw[very thick, blue] (2,6) to (2,8);
        \draw[very thick, blue] (1,7) to (2,7);
        \node[dot, label=north east:$t_8$] at (1,7){};
        \node[dot, label=north east:$in_8$] at (2,7){};
        \node[dot, label=north east:$L_8$] at (2,8){};
        \node[dot, label=north east:$R_8$] at (2,6){};
        \draw[very thick, blue] (2,1) to (2,3);
        \draw[very thick, blue] (1,2) to (2,2);
        \node[dot, label=north east:$t_9$] at (1,2){};
        \node[dot, label=north east:$in_9$] at (2,2){};
        \node[dot, label=north east:$L_9$] at (2,3){};
        \node[dot, label=north east:$R_9$] at (2,1){};
        \draw[very thick, blue] (6,1) to (6,3);
        \draw[very thick, blue] (6,2) to (7,2);
        \node[dot, label=north east:$t_{10}$] at (7,2){};
        \node[dot, label=north east:$in_{10}$] at (6,2){};
        \node[dot, label=north east:$L_{10}$] at (6,1){};
        \node[dot, label=north east:$R_{10}$] at (6,3){};
        \draw[very thick, blue] (10,1) to (10,3);
        \draw[very thick, blue] (9,2) to (10,2);
        \node[dot, label=north east:$t_{11}$] at (9,2){};
        \node[dot, label=north east:$in_{11}$] at (10,2){};
        \node[dot, label=north east:$L_{11}$] at (10,3){};
        \node[dot, label=north east:$R_{11}$] at (10,1){};
        \draw[very thick, blue] (14,1) to (14,3);
        \draw[very thick, blue] (14,2) to (15,2);
        \node[dot, label=north east:$t_{12}$] at (15,2){};
        \node[dot, label=north east:$in_{12}$] at (14,2){};
        \node[dot, label=north east:$L_{12}$] at (14,1){};
        \node[dot, label=north east:$R_{12}$] at (14,3){};
        \draw[very thick, blue] (14,6) to (14,8);
        \draw[very thick, blue] (14,7) to (15,7);
        \node[dot, label=north east:$t_{13}$] at (15,7){};
        \node[dot, label=north east:$in_{13}$] at (14,7){};
        \node[dot, label=north east:$L_{13}$] at (14,6){};
        \node[dot, label=north east:$R_{13}$] at (14,8){};
        \draw[very thick, blue] (10,6) to (10,8);
        \draw[very thick, blue] (9,7) to (10,7);
        \node[dot, label=north east:$t_{14}$] at (9,7){};
        \node[dot, label=north east:$in_{14}$] at (10,7){};
        \node[dot, label=north east:$L_{14}$] at (10,8){};
        \node[dot, label=north east:$R_{14}$] at (10,6){};
    \end{tikzpicture}
        \caption{}
    \label{fig: planar_16}
    \end{subfigure}
    \caption{(a) High-level routing scheme for the bucket-brigade architecture with $16$ memory locations. (b) Qubit layout arrangement of the bucket-brigade model with $16$ memory locations.}
    \label{fig: planar_combined}
\end{figure*}

The recursive expansion of the fractal layout yields an optimal layout that occupies an area of size $O(N)$.
For this layout scheme the T count, and qubit count scale as $O(N)$ since the layout uses $O(N)$ CSWAP routers and there are $O(N)$ points in the rectangular grid where each point corresponds to a qubit. 
Unlike in the all-to-all connectivity case, the error accumulation in the planar layout occurs due to the long-range gates that need to be performed such as those along the red lines in~\cref{fig: planar_16}. Naively, if we assume that the probability of gate error is $\varepsilon_{max}$ and a long-range SWAP is performed using a successive series of SWAP gates, the overall infidelity scales as $O(N \varepsilon_{max})$ as both the circuit depth and number of gates for each query scales as $O(\sqrt{N})$.
However, by performing a more fine-grained error analysis, we show how the infidelity can scale sub-linearly in $N$.

The foundation for calculating the infidelity scaling lies in the crucial property of error containment exhibited by the bucket-brigade QRAM. Consider the tree branches in~\cref{fig: routing_scheme_16}. They can be categorized either as good or bad based on the presence or absence of errors in them. In Ref.~\cite[Appendix D]{PRXQuantum.2.020311}, it was shown that the errors do not spread from a bad branch to a good branch, and assuming that the query path is a good branch, this implies that errors in other parts of the QRAM do not significantly affect the query. Importantly, this holds regardless of the layout scheme as long as CSWAP routers are used to enact the high-level routing scheme in~\cref{fig: routing_scheme_16}. Hence, we can use the error containment property for our planar layout too.

To improve the overall infidelity of our layout scheme, we modify the circuit and employ constant depth circuits for long-range operations between non-adjacent qubits. 
One way to implement a long-range SWAP between two qubits separated by a line of $m$ qubits(e.g., a single red line in \cref{fig: planar_16}), is to use a strongly entangled length-$m$ GHZ state as a resource. The long-range gate is then performed involving the qubits near the endpoint as shown in Ref.~\cite{beverland2022surface}. A length-$m$ GHZ state is
\begin{align}
    \ket{\text{GHZ}_m} = \frac{\ket{0}^{\bigotimes m}+\ket{1}^{\bigotimes m}}{\sqrt{2}}.
\end{align}
The error contribution for using GHZ states in this case is stated below. 
\begin{lemma}\label{lem: ghz_error}
    {For a given GHZ state of length $m$, the probability that any long-range operation using this state has an error is $O(m \varepsilon_Q)$ where $\varepsilon_Q$ is the probability of a single qubit having an error.}
\end{lemma}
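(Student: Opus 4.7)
The plan is to track every physical qubit involved in realizing the long-range operation via the GHZ resource and then collect error contributions by a union bound. First I would recall the standard protocol referenced in Ref.~\cite{beverland2022surface}: one prepares $\ket{\text{GHZ}_m}$ on the $m$ intermediate qubits lying on the line between the two endpoints, performs local gates (and possibly measurements with classically-controlled Pauli corrections) near each endpoint to effect the desired long-range gate, and then disentangles the chain. The point is that every operation in this protocol acts on $O(1)$ qubits locally, and the only object whose size scales with the separation is the chain of $m$ qubits storing the GHZ state.

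Next, I would argue that the infidelity of the whole procedure is dominated by the infidelity in preparing and maintaining $\ket{\text{GHZ}_m}$. The canonical construction prepares $\ket{\text{GHZ}_m}$ via $m-1$ nearest-neighbor CNOTs (or a logarithmic-depth tree of CNOTs, whichever is preferred under the architectural constraints), so the preparation touches exactly $m$ qubits. Treating each qubit as an independent error location occurring with probability at most $\varepsilon_Q$, a union bound over the $m$ qubits gives a failure probability of at most $m\varepsilon_Q + O((m\varepsilon_Q)^2) = O(m\varepsilon_Q)$. The finitely many local gates at the endpoints contribute only an $O(1)$ additive term that is absorbed into the same bound.

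The remaining observation is that any single-qubit Pauli fault on the GHZ chain propagates in a controlled way: an $X$ or $Y$ on any chain qubit flips the parity read out at the endpoints and thus manifests as a bounded Pauli error on the logical qubits undergoing the long-range gate, while a $Z$ on any chain qubit produces at most a $Z$ on the logical data (since $Z\otimes\cdots\otimes Z$ stabilizes $\ket{\text{GHZ}_m}$ up to global phase, a single $Z$ is equivalent modulo the stabilizer to a $Z$ at a single endpoint). In every case the per-qubit fault translates into at most one logical Pauli error on the long-range operation, so the union bound is tight up to constants and the total infidelity is $O(m\varepsilon_Q)$.

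The main obstacle I anticipate is being precise about the accounting of which qubits count toward $m$ versus $O(1)$ overhead, and justifying that correlated propagation of faults along the chain does not produce a superlinear amplification. Handling the $Z$-type stabilizer equivalence above is what rules this out; once that is in hand the rest is a standard union-bound calculation.
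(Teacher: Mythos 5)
Your proposal is correct and rests on the same core argument as the paper's proof: a union bound (the paper calls it the triangle inequality) over the $m$ qubits of the GHZ state, each failing with probability $\varepsilon_Q$, giving $O(m\varepsilon_Q)$. The additional discussion of Pauli propagation and the $Z$-stabilizer equivalence is more detail than the paper provides, but it does not change the approach.
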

\begin{proof}
    For the GHZ state to be correct, all its underlying qubits have to be correct. Using triangular inequality yields the desired error probability for the GHZ state.
\end{proof}

Lemma~\ref{lem: ghz_error} shows that despite the constant depth needed for the long-range operation, its error rate increases linearly with the length of the GHZ state. However, GHZ states of arbitrary length can be created using a constant-depth circuit~\cite[Sec.~5.1]{van2020zx}. Consequently, all the GHZ states utilized for the long-range SWAPs can be generated in place without incurring a substantial overhead.
In fact, with this modification, the circuit depth for the planar layout reduces from $O(\sqrt{N})$ to $O(\log N)$, thereby leading to a sub-linear infidelity scaling of $O(\sqrt{N})$.

\subsection{Recovering log scaling in infidelity}
\label{sec: entang_distill}

As using GHZ states still gives a polynomial dependence on $N$ in the infidelity analysis, we instead consider performing a long-range SWAP on remote qubits using a Bell state $\Bigl($i.e., $\ket{\Phi^+} = \frac{\ket{00}+\ket{11}}{\sqrt{2}}\Bigr)$ between them as a resource. To obtain a high-quality Bell state, we use noisy Bell states, a quantum error correcting code, and an entanglement distillation protocol as shown in~\cite[Section II.D]{wilde2010convolutional}.

\begin{lemma}\label{lem: distill_bell}
    Given an $[[\hat{n}, \hat{k}, \hat{d}]]$ quantum error correcting code and $\hat{n}$ noisy Bell pairs with initial error $\varepsilon_i$, there exists a distillation protocol that creates $\hat{k}$ Bell pairs with error $\varepsilon_f < \varepsilon_i$ where $\varepsilon_f = O(\varepsilon_I^d)$. Moreover, when $\varepsilon_i = O(m \cdot \varepsilon_G)$ and $\varepsilon_f < 1$ is a small constant, $d = O(\log m)$. 
\end{lemma}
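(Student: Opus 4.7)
The plan is to invoke the stabilizer-code entanglement distillation protocol referenced in Ref.~\cite{wilde2010convolutional}. Alice and Bob share $\hat{n}$ noisy Bell pairs; Alice treats her halves as the physical input to the encoder of the $[[\hat{n},\hat{k},\hat{d}]]$ code, measures all $\hat{n}-\hat{k}$ stabilizer generators, and broadcasts the syndrome classically to Bob. Bob performs the matching stabilizer measurements on his halves, XORs the two syndromes, and applies the indicated Pauli recovery. A standard Choi--Jamio\l kowski calculation then shows that the output is $\hat{k}$ logical Bell pairs of the code, whose residual noise is exactly that of one round of QEC applied to the per-pair Pauli channel.

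First, I would apply a Pauli twirl so that, without loss of fidelity, each noisy Bell pair is Bell-diagonal, i.e.\ equivalent to an ideal pair acted on by an i.i.d.\ Pauli channel of total weight $\varepsilon_i$. This reduces the analysis to a purely combinatorial question about Pauli error patterns on Alice's $\hat{n}$ qubits. Next, I would invoke the defining property of a distance-$\hat{d}$ stabilizer code: every Pauli error of weight at most $t=\lfloor(\hat{d}-1)/2\rfloor$ is corrected. A union bound over all weight-$(t+1)$-or-larger patterns then gives
\begin{equation}
    \varepsilon_f \;=\; O\!\left(\varepsilon_i^{\,\lceil \hat{d}/2 \rceil}\right),
\end{equation}
which is strictly less than $\varepsilon_i$ as soon as $\hat{d}\ge 3$ and $\varepsilon_i$ is bounded away from $1$, giving the first half of the lemma up to the exponent convention used in the statement.

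For the $\hat{d}=O(\log m)$ claim, I would substitute $\varepsilon_i=\Theta(m\varepsilon_G)$, demand $\varepsilon_f\le c$ for the target small constant $c$, and solve $(m\varepsilon_G)^{\Theta(\hat{d})}\le c$ to obtain $\hat{d}=\Theta\!\left(\log(1/c)/\log(1/(m\varepsilon_G))\right)$. Provided $m\varepsilon_G$ sits below the code's threshold (the regime where distillation can improve on the naive Bell pair at all), the denominator is bounded below by a quantity that at worst shrinks like $1/\Theta(\log m)$ as $m\varepsilon_G$ approaches threshold, yielding $\hat{d}=O(\log m)$. The hard part, which I expect to be the main technical obstacle, is bookkeeping the union-bound prefactor $\binom{\hat{n}}{t+1}$ and selecting a concrete code family (such as a concatenated Steane or quantum Reed--Muller construction) whose $\hat{n}$ grows only polynomially in $\hat{d}$, so that the prefactor can be absorbed into the exponent without spoiling the $O(\log m)$ scaling, while also keeping encoding and decoding compatible with the planar, nearest-neighbor layout assumed elsewhere in the paper.
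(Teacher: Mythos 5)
The paper never proves this lemma: it imports the stabilizer-code entanglement-distillation protocol wholesale from the cited reference (Wilde, Sec.~II.D) and, in the paragraph that follows the lemma, instantiates it with a surface code of distance $O(\log N)$ on $\hat{n}=O(\log^2 N)$ noisy pairs. Your reconstruction---twirl to Bell-diagonal noise, Alice/Bob stabilizer measurements with one-way classical syndrome exchange, bounded-distance correction giving $\varepsilon_f = O\bigl(\varepsilon_i^{\lceil \hat{d}/2\rceil}\bigr)$---is exactly the standard argument the paper is leaning on, up to the exponent convention in the statement.

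Two caveats on your write-up. First, the union-bound prefactor you defer cannot simply be ``absorbed into the exponent'' by picking a code with $\hat{n}=\mathrm{poly}(\hat{d})$: with $\hat{n}=\Theta(\hat{d}^2)$ the naive bound reads $\bigl(O(\hat{d})\,\varepsilon_i\bigr)^{\Theta(\hat{d})}$, so a below-threshold hypothesis on $\varepsilon_i$ is genuinely needed. This is not a flaw relative to the paper---the same assumption is implicit in the lemma and is stated explicitly in the paper's conclusion (distillation requires the initial Bell-pair error to be below some threshold)---but it should appear as a hypothesis, not as bookkeeping to be engineered away. Second, your derivation of $\hat{d}=O(\log m)$ rests on an unsupported assertion: there is no reason the quantity $\log\bigl(1/(m\varepsilon_G)\bigr)$ should shrink like $1/\Theta(\log m)$ as the threshold is approached. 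You do not need that claim: once $m\varepsilon_G$ is bounded below a constant threshold, your own formula already gives $\hat{d}=O(\log(1/c))=O(1)$, so $\hat{d}=O(\log m)$ is a fortiori a sufficient choice---which is precisely how the paper uses the lemma, taking $m=O(\sqrt{N})$ and distance $O(\log N)$ so that the distillation depth and the Bell-pair overhead remain polylogarithmic in $N$.
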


To perform a long-range SWAP, consider the Bell pair being created on adjacent qubits near the source qubit with one half of it being teleported to be adjacent to the target qubit using the length-$m$ GHZ state where $m = O(\sqrt{N})$ as per our layout. Then, from~\cref{lem: ghz_error} these noisy Bell pairs could have an error $\varepsilon_{i} \leq O(\sqrt{N} \cdot \varepsilon_Q)$ and it is possible to distill a Bell pair with constant error using say, the surface code, with distance $O(\log N)$. For this choice of code, the protocol to distill would have depth $O(d) = O(\log N)$ and the number of noisy Bell pairs used would be $\hat{n} = O(d^2) = O(\log^2 N)$. Combining the two methods gives the following.

\begin{corollary}\label{cor: long-range}
    Given two qubits separated by $m$ qubits on a planar grid with local connectivity, qubit error $\varepsilon_Q$ and the error on a distilled Bell pair $\varepsilon_f$, the error on performing a long-range operation between the two qubits is given by 
    \begin{align}
        \varepsilon_L := \min (m \cdot \varepsilon_Q, \, \varepsilon_f). \label{eq: long-range-error}
    \end{align}
\end{corollary}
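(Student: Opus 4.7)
The plan is to exhibit two independent implementations of a long-range operation across the $m$-qubit gap and then take $\varepsilon_L$ to be whichever delivers the smaller error, which produces the minimum appearing in \eqref{eq: long-range-error}.

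The first implementation I would recall is the direct GHZ-assisted construction already described in Section~\ref{sec: planar_layout}: prepare a length-$m$ GHZ state bridging the two endpoints in constant depth, then consume it as a resource to teleport the desired two-qubit interaction onto the remote pair. By Lemma~\ref{lem: ghz_error}, the total error of this procedure is $O(m \varepsilon_Q)$, which gives the first branch of the minimum.

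The second implementation is the distilled-Bell-pair route. The idea is to reuse the GHZ machinery above not to enact the target gate directly, but to manufacture $\hat{n}$ noisy Bell pairs shared between the two endpoints, each with error $\varepsilon_i = O(m \varepsilon_Q)$. Feeding these into the entanglement-distillation protocol of Lemma~\ref{lem: distill_bell} with an $[[\hat{n},\hat{k},\hat{d}]]$ code then yields $\hat{k}$ Bell pairs with error at most $\varepsilon_f$. Once a high-fidelity Bell pair connects the endpoints, the actual long-range operation is performed by standard gate teleportation, which requires only local Clifford operations at each end plus Pauli feed-forward, so the operational error is dominated by the Bell-pair error $\varepsilon_f$.

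Having both constructions available, the circuit designer simply picks whichever has smaller error, yielding $\varepsilon_L = \min(m \varepsilon_Q, \varepsilon_f)$. The step I would spend the most effort on is the second one: I need to argue carefully that the local-Clifford overhead of gate teleportation and the Pauli correction bookkeeping do not introduce error terms that outweigh $\varepsilon_f$, so that the bound delivered by Lemma~\ref{lem: distill_bell} really is the operational error of the long-range gate and not merely the error of the underlying resource state. Once this accounting is handled, the rest of the argument is just invoking the two lemmas and taking a minimum.
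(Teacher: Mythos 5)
Your proposal matches the paper's argument: the paper likewise combines the direct GHZ-assisted implementation (error $O(m\varepsilon_Q)$ via Lemma~\ref{lem: ghz_error}) with the alternative of teleporting half of a locally prepared Bell pair through the GHZ state and distilling via Lemma~\ref{lem: distill_bell} to reach error $\varepsilon_f$, then takes the better of the two. Your extra care about the local-Clifford/Pauli-feed-forward overhead of gate teleportation is a reasonable refinement but does not change the route.
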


We acknowledge that, by using entanglement distillation, the overall circuit depth may increase by a polylogarithmic factor in the worst case, which is acceptable. However, there might be strategies to mitigate this depth increase. Therefore, for the purposes of our analysis, we proceed under the assumption that long-range Bell states are readily available.

Accounting for the overheads due to entanglement distillation, for the planar layout, we claim that the circuit depth $D$ scales as $O(\polylog N)$.
Understanding the activation sequence of routers in a query branch is beneficial in performing fine-grained error analysis.  For instance, in the \textit{setting router status} phase, we try to route as many address qubits as possible in parallel. This means that it is not necessary to wait for the address qubit $\ket{a_\ell}$ to reach the router at level $\ell$ before sending the address qubits $\ket{a_{\ell+1}}$ to be routed by $\cswap_0$.
Specifically, this reveals that not all qubits need to maintain their state throughout the entire query depth $T$.
Suppose we are given a fixed query branch of depth four, with the routers $\cswap_0, \cswap_1, \cswap_2, \cswap_3$ counting from root to leaf. The activation sequence for each router and associated gates is depicted in~\cref{fig: activation_sequence}, where the time $\tau_i$ increases by an $O(\polylog N)$ additive factor and hence the total query time $\sum_i \tau_i = O(\polylog (N))$.

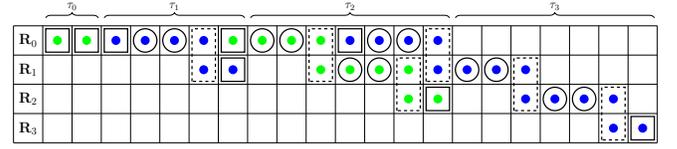
\begin{figure}[ht]
\centering
\resizebox{\linewidth}{!}{
\begin{tikzpicture}
\tikzset{dot/.style={fill=black,circle, scale=1}}
    \foreach \i in {0,...,22} {
        \draw [very thin,black] (\i,0) -- (\i,4);
    }
    \foreach \i in {0,...,4} {
        \draw [very thin,black] (0,\i) -- (22,\i);
    }
    \node[font=\Large] at (0.5, 0.5) {$\cswap_3$};
    \node[font=\Large] at (0.5, 1.5) {$\cswap_2$};
    \node[font=\Large] at (0.5, 2.5) {$\cswap_1$};
    \node[font=\Large] at (0.5, 3.5) {$\cswap_0$};

    \node[dot, green] at (1.5,3.5){};
    \node[dot, green] at (2.5,3.5){};
    \draw[ black, very thick] (1.1, 3.1)  rectangle node{} ++(0.8,0.8);
    \draw[ black, very thick] (2.1, 3.1)  rectangle node{} ++(0.8,0.8);
    \draw [decorate,decoration={brace,amplitude=5pt,raise=2ex}]
        (1.1, 4) -- (2.9,4) node[midway,yshift=2em, font=\large]{$\tau_0$};
    \node[dot, blue] at (3.5,3.5){};
    \node[dot, blue] at (4.5,3.5){};
    \node[dot, blue] at (5.5,3.5){};
    \node[dot, blue] at (6.5,3.5){};
    \node[dot, blue] at (6.5,2.5){};
    \node[dot, blue] at (7.5,2.5){};
    \draw[ black, very thick] (3.1, 3.1)  rectangle node{} ++(0.8,0.8);
    \draw[ black, very thick] (4.5, 3.5)  circle[radius=0.4] ;
    \draw[ black, very thick] (5.5, 3.5)  circle[radius=0.4] ;
    \draw[ dashed,black, very thick] (6.1, 2.1)  rectangle node{} ++(0.8,1.8);
    \draw[ black, very thick] (7.1, 2.1)  rectangle node{} ++(0.8,0.8);
    \draw [decorate,decoration={brace,amplitude=5pt,raise=2ex}]
        (3.1, 4) -- (7.9,4) node[midway,yshift=2em, font=\large]{$\tau_1$};
    
    \node[dot, green] at (7.5,3.5){};
    \node[dot, green] at (8.5,3.5){};
    \node[dot, green] at (9.5,3.5){};
    \node[dot, green] at (10.5,3.5){};
    \node[dot, green] at (10.5,2.5){};
    \node[dot, green] at (11.5,2.5){};
    \node[dot, green] at (12.5,2.5){};
    \node[dot, green] at (13.5,2.5){};
    \node[dot, green] at (13.5,1.5){};
    \node[dot, green] at (14.5,1.5){};
    \draw[ black, very thick] (7.1, 3.1)  rectangle node{} ++(0.8,0.8);
    \draw[ black, very thick] (8.5, 3.5)  circle[radius=0.4] ;
    \draw[ black, very thick] (9.5, 3.5)  circle[radius=0.4] ;
    \draw[ dashed,black, very thick] (10.1, 2.1)  rectangle node{} ++(0.8,1.8);
    \draw[ black, very thick] (11.5, 2.5)  circle[radius=0.4] ;
    \draw[ black, very thick] (12.5, 2.5)  circle[radius=0.4] ;
    \draw[ dashed,black, very thick] (13.1, 1.1)  rectangle node{} ++(0.8,1.8);
    \draw[ black, very thick] (14.1, 1.1)  rectangle node{} ++(0.8,0.8);
    \draw [decorate,decoration={brace,amplitude=5pt,raise=2ex}]
        (8.1, 4) -- (14.9,4) node[midway,yshift=2em, font=\large]{$\tau_2$};
    \node[dot, blue] at (11.5,3.5){};
    \node[dot, blue] at (12.5,3.5){};
    \node[dot, blue] at (13.5,3.5){};
    \node[dot, blue] at (14.5,3.5){};
    \node[dot, blue] at (14.5,2.5){};
    \node[dot, blue] at (15.5,2.5){};
    \node[dot, blue] at (16.5,2.5){};
    \node[dot, blue] at (17.5,2.5){};
    \node[dot, blue] at (17.5,1.5){};
    \node[dot, blue] at (18.5,1.5){};
    \node[dot, blue] at (19.5,1.5){};
    \node[dot, blue] at (20.5,1.5){};
    \node[dot, blue] at (20.5,0.5){};
    \node[dot, blue] at (21.5,0.5){};
    \draw[ black, very thick] (11.1, 3.1)  rectangle node{} ++(0.8,0.8);
    \draw[ black, very thick] (12.5, 3.5)  circle[radius=0.4] ;
    \draw[ black, very thick] (13.5, 3.5)  circle[radius=0.4] ;
    \draw[ dashed,black, very thick] (14.1, 2.1)  rectangle node{} ++(0.8,1.8);
    \draw[ black, very thick] (15.5, 2.5)  circle[radius=0.4] ;
    \draw[ black, very thick] (16.5, 2.5)  circle[radius=0.4] ;
    \draw[ dashed,black, very thick] (17.1, 1.1)  rectangle node{} ++(0.8,1.8);
    \draw[ black, very thick] (18.5, 1.5)  circle[radius=0.4] ;
    \draw[ black, very thick] (19.5, 1.5)  circle[radius=0.4] ;
    \draw[ dashed,black, very thick] (20.1, 0.1)  rectangle node{} ++(0.8,1.8);
    \draw[ black, very thick] (21.1, 0.1)  rectangle node{} ++(0.8,0.8);
    \draw [decorate,decoration={brace,amplitude=5pt,raise=2ex}]
        (15.1, 4) -- (21.9,4) node[midway,yshift=2em, font=\large]{$\tau_3$};
   
\end{tikzpicture}
}
\caption{The activation sequence of routers in a given query branch of depth four, where each column represents a single time step. $\tau_i$ is the total time to set the status bit for router $\cswap_i$. The green and blue dots indicate the router at the associated time step that is in use. The rectangle represents a local SWAP operation over the qubits associated with the router, the circle represents a local CSWAP operation, and the dashed rectangle represents a long-range SWAP between qubits associated with parent and child routers. The green and blue colors refer to routers at even and odd depths respectively of the tree as per the routing scheme in~\cref{fig: routing_scheme_16}. The circuit reference can be found in Ref.~\cite[Fig.~10]{PRXQuantum.2.020311}. The idea of routing multiple qubits into the tree simultaneously at different layers can also be found in Refs. \cite{jaques2023qram,xu2023systems,chen2023efficient}.}
\label{fig: activation_sequence}
\end{figure}

The error terms for our fine-grained analysis are taken from~\cref{tab: error_type}. 
\begin{theorem}\label{thm: optimal_fine_inf_planar}
  For $N$ memory locations, the improved fine-grained infidelity of the bucket-brigade QRAM with planar layout (\cref{fig: planar_16}) scales as 
 \begin{align}
 O\left(\log^2 N \varepsilon_L +\log N\varepsilon_{s}+\log^2 N\varepsilon_{cs}+\polylog N \varepsilon_I\right).
 \end{align}
\end{theorem}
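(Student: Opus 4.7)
The plan is to apply the bucket-brigade error-containment property to reduce the infidelity of a single query to the total error incurred along the single active query branch, and then to tally that error contribution operation by operation using the activation sequence in \cref{fig: activation_sequence}. Since the claims of Ref.~\cite{PRXQuantum.2.020311} that errors on any off-path router cannot corrupt the bus qubit depend only on the local CSWAP router structure (and not on the global connectivity model), the same property carries over verbatim to the planar layout of \cref{fig: planar_16}. Consequently it suffices to upper bound, by a union bound, the total failure probability accumulated by the $O(\log N)$ routers and associated gates that lie on the query branch during the route-in, read, and route-out stages.

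Next I would categorize the gates and the qubits actually used at each step $\tau_\ell$ of the activation sequence. For each level $\ell \in \{0, \dots, \log N - 1\}$ of the tree, setting the status of $\mathbf{R}_\ell$ requires (i) $O(1)$ local SWAPs near the router, contributing $O(\varepsilon_s)$ each, (ii) $O(\ell)$ controlled-SWAPs strung along the path from the root to $\mathbf{R}_\ell$, each contributing $O(\varepsilon_{cs})$, and (iii) $O(\ell)$ long-range SWAPs between successive depths of the H-tree, each contributing $O(\varepsilon_L)$ via \cref{cor: long-range} and \cref{lem: distill_bell}. Summing these contributions across all $\log N$ levels gives the counts
\begin{align}
\#\text{local SWAPs on path} &= O(\log N),\\
\#\text{CSWAPs on path} &= O(\log^2 N),\\
\#\text{long-range SWAPs on path} &= O(\log^2 N),
\end{align}
which, by a union bound over independent noisy operations, yield the error contributions $O(\log N)\,\varepsilon_s$, $O(\log^2 N)\,\varepsilon_{cs}$, and $O(\log^2 N)\,\varepsilon_L$ respectively. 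The route-out stage is the reverse of route-in and therefore at most doubles the constants.

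The remaining term $\polylog(N)\,\varepsilon_I$ comes from idling: the query depth after entanglement distillation is $T = O(\polylog N)$ by the discussion following \cref{cor: long-range}, and every qubit that must hold its state while waiting for a distant operation accrues $O(\varepsilon_I)$ per time step. Crucially, I would use the observation highlighted in the text that a qubit does not need to be preserved for the entire query duration: once the relevant router's status is set and its signal has propagated past it, that particular register can be treated as idle only for the shorter window during which it is actually in the active path. Nevertheless, bounding the number of idle-qubit-timesteps contributing on-path by $\polylog(N)$ suffices for the claimed scaling, so I can simply take the crude bound (query path length) $\times$ (query depth) $= O(\log N \cdot \polylog N) = O(\polylog N)$.

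The main obstacle I foresee is the idling bookkeeping: one needs to argue that qubits belonging to off-path routers, even if they sit idle for the full depth $T$, still fall under the error-containment theorem so their idling errors do not count towards the single-query infidelity, while simultaneously arguing that the on-path qubits' idling windows telescope to a $\polylog N$ total. Once this is handled cleanly, summing all four contributions by a union bound and collecting terms gives the stated bound $O(\log^2 N \varepsilon_L + \log N \varepsilon_s + \log^2 N \varepsilon_{cs} + \polylog N \varepsilon_I)$.
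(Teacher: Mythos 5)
Your proposal is correct and follows essentially the same route as the paper's proof: reduce the single-query infidelity to the error accumulated along the fixed query branch via the error-containment property, then count per-level local SWAPs ($O(1)$ each, $O(\log N)$ total), CSWAPs and long-range operations ($O(\log N)$ each per level, $O(\log^2 N)$ total) from the activation sequence in \cref{fig: activation_sequence}, and bound on-path idling by $\polylog N$ using the post-distillation query depth. The paper phrases this as a product of per-category success probabilities $P_L P_s P_{cs} P_I$ and tracks the status-qubit idling windows slightly more finely, but the decomposition, counts, and resulting scaling are the same as yours.
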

\begin{proof}
    For a fixed query branch, we first consider the error contribution from the long-range operation. 
    Let $T=\log N$ be the tree depth of the routing scheme as shown in~\cref{fig: routing_scheme_16}.
    By Corollary~\ref{cor: long-range}, the contribution of long-range error to the probability of a successful query is
    \begin{align}\label{eqn: ghz_error}
        P_{L}&= \prod_{\ell=1}^{\log N} (1-\varepsilon_L)^{3(T-\ell)} = (1-\varepsilon_L)^{\highlight{O}(T^2)},
    \end{align}
    where $3(T-\ell)$ is the number of times a long-range CNOT is applied to execute $T - \ell$ long-range SWAP operations. 
    This pattern is also evident in~\cref{fig: activation_sequence}, where a long-range operation between $\cswap_1$ and $\cswap_2$ occurs $T$ times, but occurs only $T-1$ times between $\cswap_2$ and $\cswap_3$.

    Next, consider the error contribution from the local SWAP operation over qubits associated with individual routers. It can be observed from \cref{fig: activation_sequence} that it takes two local swap operations for the address setting of each router. Hence the contribution of local SWAP error to the probability of a successful query is
    \begin{align}\label{eq: succ_p_g}
        P_{s} & = (1-\varepsilon_s)^{2T}=(1-\varepsilon_s)^{\highlight{O}(T)}.
    \end{align}

    The error contribution of the local CSWAP operation can be found similarly, and its probability contribution toward success is
    \begin{align}
        P_{cs} &= \prod_{\ell=1}^{\log N}(1-\varepsilon_{cs})^{2\ell} = (1-\varepsilon_{cs})^{\highlight{O}(T^2)}.
    \end{align}

    Last, we consider the \textit{idling error} of the status qubit $t_\ell$ for each router $\cswap_\ell$ as it must maintain its value immediately after its associated router's address is set. By contrast, the other qubits in the router can be reset and remain irrelevant until their next usage. 
    The idling time for each $\cswap_{\ell}$'s status qubit $t_\ell$ is the total active time of the router minus the number of CSWAP operations over the qubits of $\cswap_\ell$. Then, by \cref{fig: activation_sequence} and \cref{cor: long-range}, the total idling time for $\cswap_\ell$ is $O(T-\ell + \polylog(\frac{\sqrt{N}}{2^{\ceil{\ell/2}}}))$. 
    Therefore, the idling qubit error contribution toward total query success probability is
    \begin{align}
         P_I &= (1-\varepsilon_I)^{\highlight{O}(\polylog N)}.
    \end{align}

    Since the total success probability is $P = P_L\cdot P_s\cdot P_{cs}\cdot P_I$, combining \cref{cor: long-range} with a similar analysis as in reference~\cite[Appendix D]{PRXQuantum.2.020311}, we obtain the desired infidelity.
\end{proof}

\subsection{Planar layout for the general framework}
\label{sec: planar_generic}

The CSWAP routers form only one part of our general data lookup framework but the techniques from~\cref{sec: planar_layout} can be reused here. For ease of description, consider a memory of size $N = 16$, partition size $\lambda = 4$, and CNOT tree size $\gamma = 2$ for our design as shown in~\cref{fig: general_16_optimal}.

\begin{figure*}[htbp]
\centering
\begin{tikzpicture}
\node[draw, circle,minimum size=0.8cm,inner sep=0pt] (l0) at (3,5){$\linear_0$};
\node[draw, circle,minimum size=0.8cm,inner sep=0pt] (l1) at (3,4){$\linear_1$};

\draw (2.5, 2.75)  rectangle node{$\ket{q_i}_q$} ++(1,0.5);
\draw[very thick, blue] (l0) to (l1);
\draw[very thick, blue] (l1) to (3,3.25);

\node[draw, circle,minimum size=0.8cm,inner sep=0pt] (t0) at (3,2){$\cswap_0$};
\node[draw, circle,minimum size=0.8cm,inner sep=0pt] (t1) at (0,1){\Huge +};
\node[draw, circle,minimum size=0.8cm,inner sep=0pt] (t2) at (6,1){\Huge +};
\draw[very thick, blue] (t0) to (t1);
\draw[] (t0) to (t2);
\draw[very thick, blue] (-1.5,0) to (t1);
\draw[very thick, blue] (1.5,0) to (t1);
\draw[] (4.5,0) to (t2);
\draw[] (7.5,0) to (t2);
\draw[very thick, blue] (3,2.75) to (t0);

\draw[very thick, blue] (-2.9, -.5)  rectangle node{\textcolor{black}{$\ket{\cdot\oplus x_{4i}}_{q'_0}$}} ++(2.8,0.5);
\draw (0.1, -.5)  rectangle node{$\ket{\cdot\oplus x_{4i+1}}_{q'_1}$} ++(2.8,0.5);
\draw (3.1, -.5)  rectangle node{$\ket{\cdot\oplus x_{4i+2}}_{q'_2}$} ++(2.8,0.5);
\draw (6.1, -.5)  rectangle node{$\ket{\cdot\oplus x_{4i+3}}_{q'_3}$} ++(2.8,0.5);

\node[draw, circle,minimum size=0.8cm,inner sep=0pt] (t1p) at (0,-1.5){$\cswap'_1$};
\node[draw, circle,minimum size=0.8cm,inner sep=0pt] (t2p) at (6,-1.5){$\cswap'_2$};
\node[draw, circle,minimum size=0.8cm,inner sep=0pt] (t0p) at (3,-2.5){$\cswap'_0$};

\draw[very thick, blue] (t1p) to (-1.5,-0.5);
\draw[] (t1p) to (1.5,-.5);
\draw[] (t2p) to (4.5,-.5);
\draw[] (t2p) to (7.5,-.5);
\draw[very thick, blue] (t0p) to (t1p);
\draw[] (t0p) to (t2p);

\draw [decorate,decoration={brace,amplitude=5pt,raise=2ex, mirror}]
        (3.75, 1.8) -- (3.75,5.5) node[midway,yshift=0em, font=\large, xshift=4em]{Stage I};
\draw [decorate,decoration={brace,amplitude=5pt,raise=2ex}]
        (9,5.5) -- (9,-.5) node[midway,yshift=0em, font=\large, xshift=4em]{Stage II};
\draw [decorate,decoration={brace,amplitude=5pt,raise=2ex}]
        (9.5,0) -- (9.5,-3) node[midway,yshift=0em, font=\large, xshift=4em]{Stage III};
\end{tikzpicture}
\caption{High-level routing scheme for a general noise resilient data lookup framework with sub-linear scaling in qubit count, T count and infidelity having memory size $N=16$, partition size $\lambda = 4$, CNOT tree size $\gamma = 2$ and, $i\in\{0,1,2,3\}$. For an address $\ket{a_0a_1a_2a_3}$, the status registers of routers are set to $\ket{\linear_0}=\ket{a_0}$, $\ket{\linear_1}=\ket{a_1}$, $\ket{\cswap_0}=\ket{\cswap'_0}=\ket{a_2}$, and $\ket{\cswap'_1}=\ket{a_3}$. The colored line highlights the query branch for $x_0$. The number of repetitions in Stage II is $N/\lambda = 4$.}
\label{fig: general_16_optimal}
\end{figure*}

The planar layout for 16 memory locations is given in~\cref{fig: planar_16_routein_general,fig: planar_16_stage_3} where the former depicts the layout during Stages I and II while the latter holds for Stage III. Some of the routers are labeled in the figures with their components surrounded by dashed boxes. As the name suggests, the linear routers are at the top of the design. The middle of the layouts contains the CSWAP routers $\cswap_0$ and $\cswap'_0$ respectively. In~\cref{fig: planar_16_routein_general}, the routers at the sides in the bottom of the figure correspond to the CNOT trees. By contrast, note that in~\cref{fig: planar_16_stage_3}, the same qubits can be reused for the CSWAP routers $\cswap'_1$ and $\cswap'_2$. In comparison to~\cref{fig: planar_16}, clearly these layouts use far fewer qubits. 

\begin{figure*}[htbp]
    \centering
    \begin{subfigure}[t]{0.9\linewidth}
        \begin{tikzpicture}
        \tikzset{dot/.style={fill=black,circle, scale=0.5}}
        \foreach \i in {0,...,12} {
                \draw [very thin,gray] (\i,0) -- (\i,8);
            }
        \foreach \i in {0,...,8} {
                \draw [very thin,gray] (0,\i) -- (12,\i);
            }
        \draw[thick, red] (6,3) to (6,2);

        \draw[thick, red] (5,2) to (2,2);
        \draw[thick, red] (7,2) to (10,2);

        \node[dot, label=north east: $a_0$] at (6,7){};
        \node[dot, label=north east: $\tilde{in}_0$] at (7,7){};
        \draw[very thick, blue] (6,7) to (7,7);
        
        \node[dot, label=north east: $a_1$] at (6,6){};
        \node[dot, label=north east: $\tilde{in}_1$] at (7,6){};
        \draw[very thick, blue] (6,6) to (7,6);

        \node[dot, label=north east: $q_i$] at (6,5){};
        \node[dot, label=north east:{\small input}] at (6,3){};
        \node[dot, label=north east: bus] at (6,4){};
        \node[dot, label=north east: $a_2$] at (5,3){};
        \node[dot, label=north east: $a_3$] at (7,3){};

        \draw[thick, black] (6,7) to (6,4);
        \draw[thick, black] (5,3) to (6,3);
        \draw[thick, black] (7,3) to (6,3);
        \draw[thick, black] (6,4) to (6,3);

        \draw[very thick, blue] (5,2) to (7,2);
        \draw[very thick, blue] (6,1) to (6,2);
        \node[dot, label=north east:$t_0$] at (6,1){};
        \node[dot, label=north east:$in_0$] at (6,2){};
        \node[dot, label=north east:$R_0$] at (7,2){};
        \node[dot, label=north east:$L_0$] at (5,2){};
        \draw[very thick, blue] (2,1) to (2,3);
        \draw[very thick, blue] (1,2) to (2,2);

        \node[dot, label=north east:$in_1$] at (2,2){};
        \node[dot, label=north east:{\small $q'_0$}] at (2,3){}; 
        \node[dot, label=north east:{\small $q'_1$}] at (2,1){}; 
        \draw[very thick, blue] (10,1) to (10,3);
        \draw[very thick, blue] (10,2) to (11,2);

        \node[dot, label=north east:$in_2$] at (10,2){};
        \node[dot, label=north east:{\small$q'_2$}] at (10,1){}; 
        \node[dot, label=north east:{\small$q'_3$}] at (10,3){}; 

       \draw[ dashed,black, very thick] (5.75, 6.75)  rectangle node{} ++(2.5,1);
       \draw [thick, ->] (8.5,7.25) -- (8.75,7.25) node[right] {$\linear_0$};
       \draw[ dashed,black, very thick] (4.75, .75)  rectangle node{} ++(3,2);
       \draw [thick, ->] (8,1.5) -- (8.25,1.5) node[right] {$\cswap_0$};
       \draw[ dashed,black, very thick] (1.75, 1.75)  rectangle node{} ++(1,1);
       \draw [thick, ->] (3,2.5) -- (3.25,2.5) node[right] {$\bigoplus$};
    \end{tikzpicture}
        \caption{}
        \label{fig: planar_16_routein_general}
    \end{subfigure}
     \begin{subfigure}[t]{0.9\linewidth}
        \begin{tikzpicture}
        \tikzset{dot/.style={fill=black,circle, scale=0.5}}
        \foreach \i in {0,...,12} {
                \draw [very thin,gray] (\i,0) -- (\i,8);
            }
        \foreach \i in {0,...,8} {
                \draw [very thin,gray] (0,\i) -- (12,\i);
            }
        \draw[thick, red] (6,3) to (6,2);

        \draw[thick, red] (5,2) to (2,2);
        \draw[thick, red] (7,2) to (10,2);

        \node[dot, label=north east: $a_0$] at (6,7){};
        \node[dot, label=north east: $\tilde{in}_0$] at (7,7){};
        \draw[very thick, blue] (6,7) to (7,7);
        
        \node[dot, label=north east: $a_1$] at (6,6){};
        \node[dot, label=north east: $\tilde{in}_1$] at (7,6){};
        \draw[very thick, blue] (6,6) to (7,6);
        
        \node[dot, label=north east: $q_i$] at (6,5){};
        \node[dot, label=north east:{\small input}] at (6,3){};
        \node[dot, label=north east: bus] at (6,4){};
        \node[dot, label=north east: $a_2$] at (5,3){};
        \node[dot, label=north east: $a_3$] at (7,3){};

         \draw[thick, black] (6,7) to (6,4);
        \draw[thick, black] (5,3) to (6,3);
        \draw[thick, black] (7,3) to (6,3);
        \draw[thick, black] (6,4) to (6,3);

        \draw[very thick, blue] (5,2) to (7,2);
        \draw[very thick, blue] (6,1) to (6,2);
        \node[dot, label=north east:$t'_0$] at (6,1){};
        \node[dot, label=north east:$in'_0$] at (6,2){};
        \node[dot, label=north east:$R'_0$] at (7,2){};
        \node[dot, label=north east:$L'_0$] at (5,2){};
        \draw[very thick, blue] (2,1) to (2,3);
        \draw[very thick, blue] (1,2) to (2,2);
        \node[dot, label=north east:$t'_1$] at (1,2){};
        \node[dot, label=north east:$in'_1$] at (2,2){};
        \node[dot, label=north east:{\small $q'_0$}] at (2,3){}; 
        \node[dot, label=north east:{\small $q'_1$}] at (2,1){}; 
        \draw[very thick, blue] (10,1) to (10,3);
        \draw[very thick, blue] (10,2) to (11,2);
        \node[dot, label=north east:$t'_2$] at (11,2){};
        \node[dot, label=north east:$in'_2$] at (10,2){};
        \node[dot, label=north east:{\small$q'_3$}] at (10,1){}; 
        \node[dot, label=north east:{\small$q'_4$}] at (10,3){}; 

        \draw[ dashed,black, very thick] (4.75, .75)  rectangle node{} ++(3,2);
       \draw [thick, ->] (8,1.5) -- (8.25,1.5) node[right] {$\cswap'_0$};
       \draw[ dashed,black, very thick] (0.75, 0.75)  rectangle node{} ++(2,3);
       \draw [thick, ->] (3,2.5) -- (3.25,2.5) node[right] {$\cswap'_1$};
       
    \end{tikzpicture}
        \caption{}
        \label{fig: planar_16_stage_3}
    \end{subfigure}
    \caption{(a) Qubit layout for the memory querying stage (II) of the unified quantum lookup architecture framework described in \cref{fig: unconditional_optimal_layout}, \cref{sec: general_framwork}. (b) Qubit layout for the memory querying stage (III) of the unified quantum lookup architecture framework described in \cref{fig: unconditional_optimal_layout}, \cref{sec: general_framwork}.}
\end{figure*}
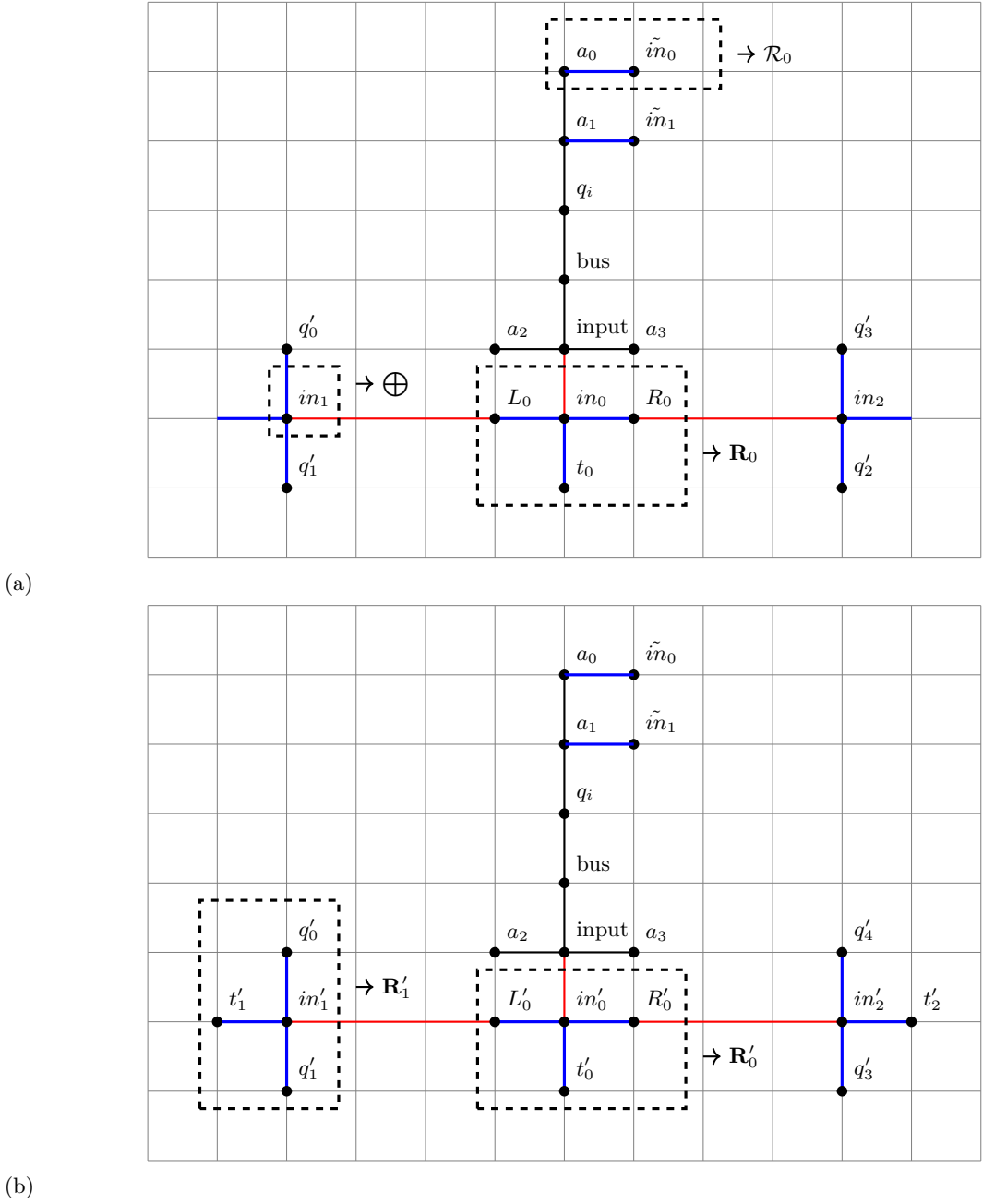

Now, we can analyze the T count, qubit count, and query infidelity for our framework for the planar layouts described here.

\begin{theorem*}[Restatement of~\cref{thm: unconditional_optimal}]
    Consider the quantum data lookup structure with the high-level scheme in \cref{fig: unconditional_optimal_layout} with $N$ memory locations. 
    Let $n=\log N$, $\lambda =2^{n-d}$ be the partition size and $\gamma = 2^{n-d-d'}$ be the size of a CNOT tree with $d' \leq d \leq n$. The infidelity of this circuit is 
    \begin{align}\label{eq: uncond_opt}
    \begin{split}
        &O\left( \textcolor{black}{\varepsilon_L\left(\frac{\gamma N}{\lambda}+\frac{N}{\lambda}\log\frac{\lambda}{\gamma}\right)}+ \varepsilon_s \log\frac{\lambda^2}{\gamma} \right.\\
        &\left.
        +\varepsilon_{I}\left(\frac{N}{\lambda}\log N\left(\log\frac{N}{\gamma}+\gamma +\log\frac{\lambda}{\gamma}\right)+\polylog\lambda \right) \right. \\
        &\left.
         + \varepsilon_c \frac{\gamma N}{\lambda}
         + \varepsilon_{cc}\frac{N}{\lambda}\log\frac{N}{\lambda}\right.\\
        &\left.
         + \varepsilon_{cs}\left(\frac{N}{\lambda}\log\frac{\lambda}{\gamma}+\log^2\frac{\lambda}{\gamma} + \log^2\lambda \right)
        \right).
    \end{split}
    \end{align}
    Moreover, the T count for this design is $O(\frac{N}{\gamma}+\frac{N}{\lambda}\log\frac{N}{\lambda}+\lambda)$, and its qubit count is $O(\log\frac{N}{\lambda}+\lambda)$.
\end{theorem*}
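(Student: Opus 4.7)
The plan is to establish the three claims—qubit count, T count, and infidelity—separately by unpacking \cref{alg: 1} stage by stage and lifting the fine-grained analysis of~\cref{thm: optimal_fine_inf_planar} to the three-stage architecture.

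\textbf{Qubit and T counts.} First, I enumerate the registers needed on the planar layout: the $d$ linear router status qubits, the $d'$ CSWAP router status qubits along the active branch, the $\gamma$ leaves of the currently active CNOT tree, and the $\lambda$ accumulator registers $q'_j$. Because Stage III's readout CSWAP tree can reuse the physical qubits that host the Stage II CSWAP and CNOT trees (\cref{fig: planar_16_routein_general,fig: planar_16_stage_3}), the total qubit count is $O(\log(N/\lambda) + \lambda)$. For the T count I walk through \cref{alg: 1}: Stage I uses $O(\log(N/\lambda))$ T gates to configure the routers; each Stage II iteration pays $O(\log(N/\lambda))$ for the multi-controlled-X computing $q_i$ and $O(\log(\lambda/\gamma))$ for the CSWAP-path routing (CNOT-tree diffusion and data-loading CNOTs contribute no T gates); summing the CNOT-tree leaf accesses across all $\lambda/\gamma$ sub-trees and all $N/\lambda$ blocks produces the $O(N/\gamma)$ term; and Stage III contributes at worst $O(\lambda)$ from the readout tree. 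Adding these yields $O(N/\gamma + (N/\lambda)\log(N/\lambda) + \lambda)$.

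\textbf{Infidelity.} The bound~\cref{eq: uncond_opt} follows by applying the template of \cref{thm: optimal_fine_inf_planar} together with the error-containment property of CSWAP routers from~\cite{PRXQuantum.2.020311}: since errors on branches off the active query path do not propagate to the bus, for each error type in~\cref{tab: error_type} it suffices to count the gates of that type occurring along the active branch across all $N/\lambda$ iterations. The $\varepsilon_L$ contribution comes from~\cref{cor: long-range} applied to the $O((N/\lambda)\log(\lambda/\gamma))$ long-range SWAPs used in CSWAP-tree routing and the $O(\gamma N/\lambda)$ long-range operations used in CNOT-tree diffusion. The $\varepsilon_{cs}$, $\varepsilon_c$, $\varepsilon_{cc}$, and $\varepsilon_s$ terms come from direct per-path gate counts, with the $\log^2$ factors arising from the triangular activation pattern of \cref{fig: activation_sequence}, and the factors of $N/\lambda$ from the outer loop.

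\textbf{Idling.} The hardest term is $\varepsilon_I$, because the linear and CSWAP router status qubits must retain their values throughout all of Stage II and the $q'_j$ accumulators must persist until Stage III completes. After pricing long-range SWAPs via entanglement distillation (adding only $\polylog$ depth, by \cref{cor: long-range}), the per-iteration active duration of Stage II is $O(\log(N/\gamma) + \gamma + \log(\lambda/\gamma))$. Multiplying by the $N/\lambda$ outer iterations and the $\log N$ depth of the overall design, and adding the $\polylog \lambda$ contribution from the Stage III readout window, yields the displayed idling bound. Combining all per-type success probabilities via a union bound exactly as in~\cref{thm: optimal_fine_inf_planar} then gives~\cref{eq: uncond_opt}. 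The principal obstacle is the bookkeeping that decides which status qubits must be held across iteration boundaries and which can be reset between iterations (via the inter-iteration resets in \cref{alg: 1}); this must be reconciled with the per-iteration uncomputation step so that idling contributions are neither double-counted nor omitted.
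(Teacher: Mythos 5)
Your infidelity and qubit-count arguments follow essentially the paper's own route: decompose by stage, treat the Stage~I and Stage~III CSWAP trees as depth-$d'$ and depth-$(n-d)$ planar bucket-brigade instances via \cref{thm: optimal_fine_inf_planar}, invoke error containment so that only gates touching the active branch are charged, multiply the Stage~II per-iteration path costs by the $N/\lambda$ repetitions, and account for idling on the un-reset status and $q'$ registers; your grouping of the idling product as (duration)\,$\times$\,(overall depth) rather than (number of idling qubits)\,$\times$\,(per-repetition time) is just transposed bookkeeping and yields the same $\varepsilon_I$ term. The qubit count likewise matches.

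The genuine gap is in the T-count derivation. You charge each Stage~II iteration only $O(\log(\lambda/\gamma))$ T gates ``for the CSWAP-path routing,'' and then recover the $O(N/\gamma)$ term from ``CNOT-tree leaf accesses'' --- which contradicts your own (correct) observation that CNOT-tree diffusion and the data-loading CNOTs use no T gates. The actual source of the $N/\gamma$ term is that, because the address is in superposition, the circuit must apply the controlled-SWAP gates at \emph{every} router of the depth-$d'$ tree in each repetition, not merely at the $d'$ routers of one classical path: error containment localizes \emph{errors} to the query branch, but it does not reduce the number of physical gates executed. Hence each of the $N/\lambda$ Stage~II repetitions costs $O(2^{d'})=O(\lambda/\gamma)$ CSWAPs (this is exactly the paper's $O(2^d\cdot 2^{d'})$ accounting), giving $O(N/\gamma)$, with the $O((N/\lambda)\log(N/\lambda))$ term from the multi-controlled computation of $q_i$ and $O(\lambda)$ from the $2^{n-d}$ routers of the Stage~III tree (which, notably, you do count as a full tree --- the same convention should apply in Stage~II). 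As written, your path-only count would yield only $O((N/\lambda)(\log(N/\lambda)+\log(\lambda/\gamma))+\lambda)$ T gates, so the $N/\gamma$ term in the stated bound is never actually derived; the same conflation also makes your Stage~I figure $O(\log(N/\lambda))$ rather than the paper's $O(2^{d'})$, though that term is dominated either way.
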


\begin{proof}
    First, consider Stage I where the linear routers do not contribute to the infidelity as their status is directly set by the address qubits. The CSWAP tree functions like a depth-$d'$ noise-resilient bucket-brigade QRAM. For the planar layout, the infidelity for the bucket-brigade QRAM is computed in Theorem~\ref{thm: optimal_fine_inf_planar}. Using this, the depth-$d'$ CSWAP tree has infidelity $O(\poly(d')\varepsilon_I+d'{\varepsilon_L}+d'\varepsilon_s+d'^2\varepsilon_{cs})$ in Stage I. 
    
    For the $\frac{N}{\lambda} = 2^d$ repetitions performed in Stage II, the infidelity comes both from gate errors when operations are performed as well as idling error on qubits that don't have gates on them. The $O(d)$ Toffoli gates used to implement the linear routers as per~\cite{PhysRevX.8.041015} each contribute $\varepsilon_{cc}$ to the infidelity. The error containment property of the CSWAP routers discussed in~\Cref{sec: entang_distill} implies that their contribution is only $d'\varepsilon_L$ for the long-range operations performed. For the CNOT tree with $\gamma$ leaves, its contribution amounts to $\varepsilon_L + \varepsilon_c$ for each node in the tree. Then, the overall gate error infidelity is
    \begin{align}
        O\left( 2^d \left( \varepsilon_{cc}\, d+\varepsilon_{cs}\,d' + {\varepsilon_L} \left(\textcolor{black}{d'}+2^{n-d-d'}\right) + \varepsilon_c\, 2^{n-d-d'} \right) \right).
    \end{align}
    
    The infidelity from the idling error is 
    \begin{align}
        O\left( 2^d \left(\varepsilon_I n \left(d + d' + 2^{n-d-d'} \right) \right) \right),
    \end{align}
    where the terms correspond to the qubits that are not reset between repetitions. These are the router status qubits for the linear routers and CSWAP routers along the query path, as well as the $\gamma$ intermediate registers $q'_i$ at the leaves of the activated CNOT tree.
    
    Lastly, Stage III acts like a depth-$(n-d)$ bucket-brigade QRAM and applying Theorem~\ref{thm: optimal_fine_inf_planar} again, we obtain its infidelity  as $O(\poly(n-d)\varepsilon_I+(n-d){\varepsilon_L}+(n-d)\varepsilon_s+(n-d)^2\varepsilon_{cs})$. Combining the infidelity from all stages, and replacing $d, d'$ and $n$ by $N, \lambda$, and $\gamma$ yields \cref{eq: uncond_opt}.

    The overall T count for the design is
    \begin{align}\label{eqn: T_count_general}
        O\left(2^d( 2^{d'}+d)+2^{n-d} \right),
    \end{align}
    where the first term corresponds to the $2^d$ repetitions of Stage II each of which uses the depth-$d'$ CSWAP tree,
    and the last term comes from the depth-$(n-d)$ CSWAP tree in Stage III. 
    Note that stage I has T count $O(2^{d'})$ and it is always asymptotically smaller than the Stage II T count.
    Replacing $d, d'$ and $n$ yields a T count of $O(\frac{N}{\gamma}+\frac{N}{\lambda}\log\frac{N}{\lambda}+\lambda)$. 

    The overall qubit count for the design is 
    \begin{align}\label{eqn: qubit_count_general}
        O(d+2^{n-d}),
    \end{align}
    where the first term comes from the linear routers. The depth-$d'$ CSWAP tree and the $\gamma$ sized CNOT trees in total contain $O(2^{n-d})$ qubits. In Stage III, the qubits used for the CNOT trees are reused for the CSWAP routers accounted for in Stage II. Replacing $n$ and $d$ with $N$ and $\lambda$ yields an overall $O(\log\frac{N}{\lambda}+\lambda)$ qubit count.    
\end{proof}

\section{Large word size}
\label{sec: large_b}

While all previous circuits considered reading a single bit of information from memory, in most real-world scenarios, one would want to retrieve multiple bits of classical information. In this section, we discuss two ways our general framework can be modified to handle the readout of multiple bits of information -- (i) in parallel; and (ii) in sequence -- while still maintaining a sub-linear scaling for memory size $N$. While the former has lower query infidelity, the latter has a lower T count. Determining where there exists a single multi-bit readout scheme that simultaneously minimizes T-count and infidelity or whether the need for two schemes remains a fundamental limitation of our framework is left for future work.

Throughout this section, we assume that the \emph{word size}, i..e, number of bits to be read out from each memory location is $b$ and for ease of description, the designs will be described assuming $N = 16, \lambda = 4$ and $\gamma = 2$.

\subsection{Parallel multi-bit readout}

To read $b$-bit words with parallel readout, we make $b$ copies of parts of our general framework from~\cref{fig: general_16_optimal}. Specifically, we create $b$ copies of the framework involving the register with $\ket{q_i}_q$, the CSWAP routers $\cswap_0$, $\cswap'_0$, $\cswap'_1$ and $\cswap'_2$ and the CNOT trees. Note that the three stages for data lookup proceed as described in~\Cref{sec: general_framwork} except that the $i$th copy will be used to access the $i$th bit of data, and at the end of Stage III all the $b$-bits will have been simultaneously retrieved. 
 For $b = 2$, this modification is depicted in~\cref{fig: branch_out_b_first} where $\ket{q_i}_q$ is copied to the registers $\ket{q^{(0)}_i}_{q^{(0)}}$ and $\ket{q^{(1)}_i}_{q^{(1)}}$. A very high-level view of this schematic for $b=6$ is given in~\cref{fig: planar_branch_b_first} where each square labeled by ${q^j_i}$ contains the $j$th copy of the single-bit readout framework. The asymptotic scaling for infidelity, qubit, and T counts for parallel multi-bit readout is given below.

\begin{figure*}[ht]
\centering
\resizebox{0.9\linewidth}{!}{%
\begin{tikzpicture}
\node[draw, circle,minimum size=0.8cm,inner sep=0pt] (l0) at (3,5){$\linear_0$};
\node[draw, circle,minimum size=0.8cm,inner sep=0pt] (l1) at (3,4){$\linear_1$};
\draw[] (l0) to (l1);
\draw[] (l1) to (3,3.25);
\draw (2.5, 2.75)  rectangle node{$q_i$} ++(1,0.5);

\draw (-2.5, 2.75)  rectangle node{$q_i^{(0)}$} ++(1,0.5);
\draw (7.5, 2.75)  rectangle node{$q_i^{(1)}$} ++(1,0.5);

\draw[] (-1.5, 3) to (2.5, 3);
\draw[] (3.5, 3) to (7.5, 3);

\node[draw, circle,minimum size=1cm,inner sep=0pt] (t0) at (-2,2){${\cswap_0}^{(0)}$};
\node[draw, circle,minimum size=0.8cm,inner sep=0pt] (t1) at (-4,1){\Huge +};
\node[draw, circle,minimum size=0.8cm,inner sep=0pt] (t2) at (0,1){\Huge +};
\draw[] (t0) to (t1);
\draw[] (t0) to (t2);
\draw[] (-5,0) to (t1);
\draw[] (-3,0) to (t1);
\draw[] (-1,0) to (t2);
\draw[] (1,0) to (t2);
\draw[] (-2,2.75) to (t0);

\draw (-5.5, -0.5)  rectangle node{\small$ {q'_0}^{(0)}$} ++(1,0.5);
\draw (-3.5, -0.5)  rectangle node{\small$  {q'_1}^{(0)}$} ++(1,0.5);
\draw (-1.5, -0.5)  rectangle node{\small$ {q'_2}^{(0)}$} ++(1,0.5);
\draw (0.5, -0.5)  rectangle node{\small$ {q'_3}^{(0)}$} ++(1,0.5);

\node[draw, circle,minimum size=1cm,inner sep=0pt] (t1p) at (-4,-1.5){${\cswap'_1}^{(0)}$};
\node[draw, circle,minimum size=1cm,inner sep=0pt] (t2p) at (0,-1.5){${\cswap'_2}^{(0)}$};
\node[draw, circle,minimum size=1cm,inner sep=0pt] (t0p) at (-2,-2.5){${\cswap'_0}^{(0)}$};

\draw[] (t1p) to (-5,-0.5);
\draw[] (t1p) to (-3,-0.5);
\draw[] (t2p) to (-1,-0.5);
\draw[] (t2p) to (1,-0.5);
\draw[] (t0p) to (t1p);
\draw[] (t0p) to (t2p);

\node[draw, circle,minimum size=1cm,inner sep=0pt] (t0b) at (8,2){${\cswap_0}^{(1)}$};
\node[draw, circle,minimum size=0.8cm,inner sep=0pt] (t1b) at (6,1){\Huge +};
\node[draw, circle,minimum size=0.8cm,inner sep=0pt] (t2b) at (10,1){\Huge +};
\draw[] (t0b) to (t1b);
\draw[] (t0b) to (t2b);
\draw[] (5,0) to (t1b);
\draw[] (7,0) to (t1b);
\draw[] (9,0) to (t2b);
\draw[] (11,0) to (t2b);
\draw[] (8,2.75) to (t0b);

\draw (4.5, -0.5)  rectangle node{\small $ {q'_0}^{(1)}$} ++(1,0.5);
\draw (6.5, -0.5)  rectangle node{\small$ {q'_1}^{(1)}$} ++(1,0.5);
\draw (8.5, -0.5)  rectangle node{\small$ {q'_2}^{(1)}$} ++(1,0.5);
\draw (10.5, -0.5)  rectangle node{\small$ {q'_3}^{(1)}$} ++(1,0.5);

\node[draw, circle,minimum size=1cm,inner sep=0pt] (t1pb) at (6,-1.5){${\cswap'_1}^{(1)}$};
\node[draw, circle,minimum size=1cm,inner sep=0pt] (t2pb) at (10,-1.5){${\cswap'_2}^{(1)}$};
\node[draw, circle,minimum size=1cm,inner sep=0pt] (t0pb) at (8,-2.5){${\cswap'_0}^{(1)}$};

\draw[] (t1pb) to (5,-.5);
\draw[] (t1pb) to (7,-.5);
\draw[] (t2pb) to (9,-.5);
\draw[] (t2pb) to (11,-.5);
\draw[] (t0pb) to (t1pb);
\draw[] (t0pb) to (t2pb);

\end{tikzpicture}
}
\caption{A high-level scheme for parallel readout of word size $b=2$, an extension to the scheme in \cref{fig: unconditional_optimal_layout}.To avoid clutter, we label only the qubit variables and omit explicit register labels.}
\label{fig: branch_out_b_first}
\end{figure*}

\begin{figure}[ht]
    \centering
    \begin{tikzpicture}
        \tikzset{dot/.style={fill=black,circle, scale=0.6}}
        \foreach \i in {0,...,3} {
                \draw [very thin,gray] (2*\i,0) -- (2*\i,6);
            }
        \foreach \i in {0,...,3} {
                \draw [very thin,gray] (0,2*\i) -- (6,2*\i);
            }

        \draw[thick, red] (1,1) to (1,5);
        \draw[thick, red] (5,1) to (5,5);
        \draw[thick, red] (1,3) to (5,3);
        
        \node[dot] at (3,3){};
        \node[dot] at (1,1){};
        \node[dot] at (1,3){};
        \node[dot] at (1,5){};
        \node[dot] at (5,1){};
        \node[dot] at (5,3){};
        \node[dot] at (5,5){};

        \node[xshift=0.25cm, yshift=0.25cm] at (3,3) { $q_i$};
        \node[xshift=0.1cm, yshift=-0.25cm] at (1,1) {${q_i}^{(0)}$};
        \node[xshift=0.05cm, yshift=0.25cm] at (0.5,3) { ${q_i}^{(1)}$};
        \node[xshift=0.1cm, yshift=0.25cm] at (1,5.25) { ${q_i}^{(2)}$};
        \node[xshift=0.1cm, yshift=-0.25cm] at (5,1) { ${q_i}^{(3)}$};
        \node[xshift=0.05cm, yshift=0.25cm] at (5.5,3) { ${q_i}^{(4)}$};
        \node[xshift=0.1cm, yshift=0.25cm] at (5,5.25) { ${q_i}^{(5)}$};

    \end{tikzpicture}
    \caption{A planar layout of our general quantum data lookup framework for parallel multi-bit readout with a word size of $b=6$. Each square, labeled by $q^{j}_i$, encompasses the planar layout for each copy of the single-bit readout framework as depicted in \cref{fig: branch_out_b_first}. The red line indicates the path used for long-range operations responsible for diffusing the signal $q_i$ to all the $q^{j}_i$ instances.}
    \label{fig: planar_branch_b_first}
\end{figure}
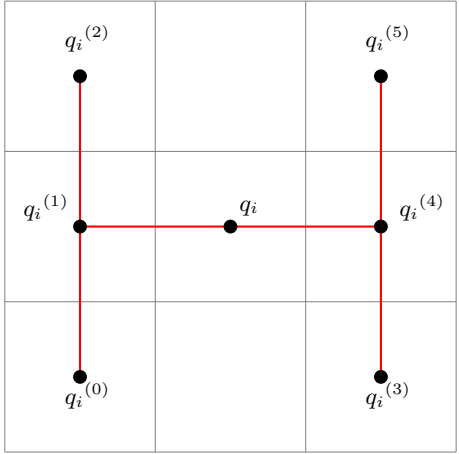

\begin{theorem}\label{thm: unconditional_optimal_branch_b_first}
     Consider the quantum data lookup structure with the high-level scheme in \cref{fig: branch_out_b_first} with $N$ memory locations. Let $b=2^{d''}$ be the word size, $n=\log N$, $\lambda =2^{n-d}$ be the partition size and $\gamma = 2^{n-d-d'}$ be the size of a CNOT tree with $d' \leq d \leq n$. Also, let $\mathbf{I}$ be the infidelity of the single-bit readout framework from~\Cref{thm: unconditional_optimal}. Then, the infidelity of this circuit is $O(b\mathbf{I}).$ Moreover, the T count is $O(\log\frac{N}{\lambda}\frac{N}{\lambda}+b\frac{N}{\gamma}+b\lambda)$, and qubit count is $O(\log\frac{N}{\lambda}+b\lambda)$. 
\end{theorem}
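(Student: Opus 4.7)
The plan is to reduce the analysis to $b$ essentially independent invocations of the single-bit framework from Theorem~\ref{thm: unconditional_optimal}, bolted together at the top by a shared copy of the linear-router block and a long-range fan-out of the control qubit $\ket{q_i}$ into $b$ copies $\ket{q_i^{(0)}}, \dots, \ket{q_i^{(b-1)}}$. Concretely, I would begin by listing, for the architecture of~\cref{fig: branch_out_b_first}, which sub-circuits are shared and which are duplicated: the $d$ linear routers $\mathcal{R}_0,\dots,\mathcal{R}_{d-1}$ and the address register are shared; the control qubit $\ket{q_i}$ is computed once and then fanned out via a CNOT tree of depth $O(\log b)$ along the red path of~\cref{fig: planar_branch_b_first}; and the Stage II CSWAP-plus-CNOT-tree block and the Stage III CSWAP tree (together with the $\lambda$ intermediate registers $q_j'$) are each replicated $b$ times, one copy per output bit.

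Given this decomposition, the qubit count is immediate: the shared top contributes $O(\log\tfrac{N}{\lambda})$ qubits for the linear routers and address, while each of the $b$ replicas contributes the $O(\lambda)$ qubits used for its Stage II CSWAP tree, CNOT trees, intermediate registers $q_j'$, and Stage III CSWAP tree (per~\cref{eqn: qubit_count_general}), yielding $O(\log\tfrac{N}{\lambda}+b\lambda)$. For the T count I would reuse the decomposition~\eqref{eqn: T_count_general}: the $O(\tfrac{N}{\lambda}\log\tfrac{N}{\lambda})$ Toffolis produced by the shared linear routers across $\tfrac{N}{\lambda}$ repetitions are counted once, while the $O(\tfrac{N}{\gamma})$ CSWAP T-cost of Stage II and the $O(\lambda)$ CSWAP T-cost of Stage III appear in each of the $b$ replicas. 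The CNOT fan-out that diffuses $\ket{q_i}$ across the replicas uses no T gates. Summing gives $O\!\bigl(\tfrac{N}{\lambda}\log\tfrac{N}{\lambda}+b\tfrac{N}{\gamma}+b\lambda\bigr)$, as claimed.

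For the infidelity I would invoke the error-containment property used in the proof of~\cref{thm: unconditional_optimal}: restricted to any single replica $j$, the relevant query path is disjoint from the query paths of the other $b-1$ replicas, and the CSWAP structure prevents errors on one bad branch from polluting another. Hence for each replica in isolation the single-query infidelity is bounded by the single-bit value $\mathbf{I}$ of~\eqref{eq: uncond_opt}; by a linearity/union-bound argument over the $b$ replicas (as in~\cite[Appendix D]{PRXQuantum.2.020311}), the multi-bit infidelity is at most $b\cdot \mathbf{I}$ up to lower-order corrections from the shared components. The two small corrections to verify are (i) the additional $O(\log b)$ depth of long-range CNOT fan-out from $\ket{q_i}$ to the $b$ replicas, contributing $O(\log b\cdot \varepsilon_L + \log b\cdot \varepsilon_c)$ per repetition, and (ii) the extra idling on the shared linear-router status qubits while the $b$ copies of Stages II--III run in parallel; both are absorbed into the $O(b\mathbf{I})$ envelope since $\mathbf{I}$ already contains the corresponding $\varepsilon_I$ and $\varepsilon_L$ terms scaled by $\tfrac{N}{\lambda}\log N$.

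The main obstacle I expect is the careful accounting in step (ii): one must argue that running the $b$ parallel branches does not inflate the activation schedule of~\cref{fig: activation_sequence} in a way that multiplies the idling and long-range error budgets by more than a constant. Since the replicas are spatially separated on the planar grid and their CSWAP/CNOT trees can be driven concurrently by the same address-timing schedule, the Stage II/III depth per repetition grows only by the $O(\log b)$ diffusion overhead, which is dominated by the $\polylog$ factors already present in $\mathbf{I}$. After verifying this, combining the three bounds completes the proof.
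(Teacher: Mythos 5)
Your proposal follows essentially the same route as the paper: a single shared copy of the linear routers plus $b$ replicas of the CSWAP/CNOT-tree stages, with qubit and T counts obtained by direct counting and the infidelity bounded by $b$ times the single-bit value $\mathbf{I}$ plus correction terms that get absorbed. The only differences are in bookkeeping of those corrections: the paper charges the per-repetition fan-out of $\ket{q_i}$ as $O(\varepsilon_L\, b\sqrt{\lambda})$ and adds a Stage I term $O(\varepsilon_L\, b\sqrt{\lambda}\, d')$ for copying the $d'$ CSWAP-address bits to every replica (which you omit), while your $O(\log b)$ fan-out estimate tacitly assumes a per-replica error containment that a CNOT fan-out does not provide (Z-kickback and X-propagation force the whole live fan-out to be correct) --- but even the conservative $O(b)$-per-repetition count is absorbed into $O(b\mathbf{I})$, so the conclusion stands either way.
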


\begin{proof} 
The multi-bit parallel readout scheme amounts to containing a single copy of the linear routers followed by $b$ copies of the remainder of the single-bit readout framework. While the former uses $O(d)$ qubits, the latter uses $O(b 2^{n-d})$ qubits. Additionally, as depicted in~\cref{fig: planar_branch_b_first}, each of $\ket{q_i^j}$ registers can be laid out such that they are separated by $O(2^{\frac{n-d}{2}})$ qubits in the planar grid. Hence, the overall qubit count is $O(d + b 2^{n-d} + b 2^{\frac{n-d}{2}} ) = O(d + b 2^{n-d}) = O(\log\frac{N}{\lambda}+b\lambda)$. 

Similarly, the T count for this design is $O(2^d(d+b2^{d'})+b2^{n-d})=O(\log\frac{N}{\lambda}\frac{N}{\lambda}+b\frac{N}{\gamma}+b\lambda)$.

To analyze the infidelity for this design, let $\mathbf{I}_s$ represent the infidelity at stage $s$ as described in Theorem~\ref{thm: unconditional_optimal}. In Stage I, the infidelity is $O(b\mathbf{I}_{\text{I}}+\epsilon_L b2^{\frac{n-d}{2}}d')$, where the second term emerges from the fact that the $d'$ address bits used to set the CSWAP routers have to be copied $b$ times using long-range operations and routed through each corresponding copy of the single-bit readout framework as depicted in~\cref{fig: branch_out_b_first}. 
Moving to Stage II, the infidelity is $O(b\mathbf{I}_{\text{II}}+2^d(\epsilon_L b2^{\frac{n-d}{2}}))$, where the second term is for the $2^d$ times when $\ket{q_i}$ is transmitted to all the $\ket{q_i^j}$ registers.
For Stage III, the infidelity is $O(b\mathbf{I}_{\text{III}})$ as each of the $b$ bits is retrieved from the corresponding copy of the readout framework when the procedure is finished.
Aggregating the infidelities across these stages, observing that the extra terms only amount to an additional $(b \mathbf{I}_{\text{I}})$ factor and following~\cref{eq: uncond_opt}, we conclude that the total infidelity for parallel $b$-bit readout is $O(b\mathbf{I})$. 
\end{proof}

\subsection{Sequential multi-bit readout}

To read $b$-bit words in sequence, we extend the single-bit framework in~\cref{fig: general_16_optimal} by assuming that each memory location points to a $b$-qubit register instead of a single qubit one and repeating the readout procedure (i.e., Stage III from~\cref{sec: general_framwork}) for each bit of data. 
For $b=2$, this modification is presented in~\cref{fig: sequential_readout} where the CNOT tree now has depth $\log \frac{\lambda}{\gamma} + \log b$ and each leaf of the CNOT tree is connected to one of the $b$ bits of data. 
This allows the signal for a memory location to be accessed to be shared amongst the $b$ qubits. 
To achieve a compact design, the planar H-tree design is used to lay out the $b$ qubits of each memory location. A high-level view of this schematic is given in~\cref{fig: planar_branch_b_end}. 
To retrieve the $k$th bit of data, $\ket{q'^{(k)}_j}$ is swapped into the ${q'_j}$ location at iteration $k$ after which Stage III is applied to retrieve the data from the ${q'_j}$ register. The asymptotic scaling for infidelity, T count and qubit count for this design is given below.

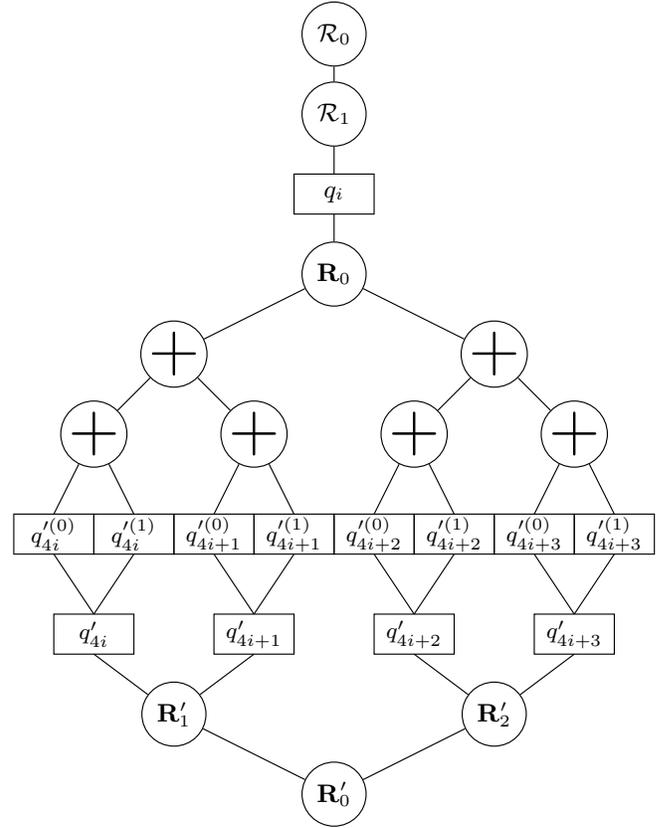
\begin{figure}[ht]
\centering
\resizebox{\columnwidth}{!}{%
\begin{tikzpicture}
\node[draw, circle,minimum size=0.8cm,inner sep=0pt] (l0) at (3,5){$\linear_0$};
\node[draw, circle,minimum size=0.8cm,inner sep=0pt] (l1) at (3,4){$\linear_1$};
\draw (2.5, 2.75)  rectangle node{$q_{i}$} ++(1,0.5);
\draw[] (l0) to (l1);
\draw[] (l1) to (3,3.25);

\node[draw, circle,minimum size=0.8cm,inner sep=0pt] (t0) at (3,2){$\cswap_0$};
\node[draw, circle,minimum size=0.8cm,inner sep=0pt] (t1) at (1,1){\Huge +};
\node[draw, circle,minimum size=0.8cm,inner sep=0pt] (t2) at (5,1){\Huge +};
\draw[] (t0) to (t1);
\draw[] (t0) to (t2);
\draw[] (3,2.75) to (t0);

\node[draw, circle,minimum size=0.8cm,inner sep=0pt] (x0) at (0,0){\Huge +};
\node[draw, circle,minimum size=0.8cm,inner sep=0pt] (x1) at (2,0){\Huge +};
\node[draw, circle,minimum size=0.8cm,inner sep=0pt] (x2) at (4,0){\Huge +};
\node[draw, circle,minimum size=0.8cm,inner sep=0pt] (x3) at (6,0){\Huge +};
\draw[] (x0) to (t1);
\draw[] (x1) to (t1);
\draw[] (x2) to (t2);
\draw[] (x3) to (t2);
\draw (-1, -1.5)  rectangle node{\footnotesize $q'^{(0)}_{4i}$} ++(1,0.5);
\draw (0, -1.5)  rectangle node{\footnotesize$q'^{(1)}_{4i}$} ++(1,0.5);
\draw (1, -1.5)  rectangle node{\footnotesize$q'^{(0)}_{4i+1}$} ++(1,0.5);
\draw (2, -1.5)  rectangle node{\footnotesize$q'^{(1)}_{4i+1}$} ++(1,0.5);
\draw (3, -1.5)  rectangle node{\footnotesize$q'^{(0)}_{4i+2}$} ++(1,0.5);
\draw (4, -1.5)  rectangle node{\footnotesize$q'^{(1)}_{4i+2}$} ++(1,0.5);
\draw (5, -1.5)  rectangle node{\footnotesize$q'^{(0)}_{4i+3}$} ++(1,0.5);
\draw (6, -1.5)  rectangle node{\footnotesize$q'^{(1)}_{4i+3}$} ++(1,0.5);

\draw[] (x0) to (-0.5,-1);
\draw[] (x0) to (0.5,-1);
\draw[] (x1) to (1.5,-1);
\draw[] (x1) to (2.5,-1);
\draw[] (x2) to (3.5,-1);
\draw[] (x2) to (4.5,-1);
\draw[] (x3) to (5.5,-1);
\draw[] (x3) to (6.5,-1);

\node[draw, circle,minimum size=0.8cm,inner sep=0pt] (t1p) at (1,-3.5){$\cswap'_1$};
\node[draw, circle,minimum size=0.8cm,inner sep=0pt] (t2p) at (5,-3.5){$\cswap'_2$};
\node[draw, circle,minimum size=0.8cm,inner sep=0pt] (t0p) at (3,-4.5){$\cswap'_0$};

\draw (-.5, -2.75)  rectangle node{\footnotesize $q'_{4i}$} ++(1,0.5);
\draw (1.5, -2.75)  rectangle node{\footnotesize $q'_{4i+1}$} ++(1,0.5);
\draw (3.5, -2.75)  rectangle node{\footnotesize $q'_{4i+2}$} ++(1,0.5);
\draw (5.5, -2.75)  rectangle node{\footnotesize $q'_{4i+3}$} ++(1,0.5);

\draw[] (0,-2.25) to (-0.5,-1.5);
\draw[] (0,-2.25) to (0.5,-1.5);
\draw[] (2,-2.25) to (1.5,-1.5);
\draw[] (2,-2.25) to (2.5,-1.5);
\draw[] (4,-2.25) to (3.5,-1.5);
\draw[] (4,-2.25) to (4.5,-1.5);
\draw[] (6,-2.25) to (5.5,-1.5);
\draw[] (6,-2.25) to (6.5,-1.5);

\draw[] (t1p) to (0,-2.75);
\draw[] (t1p) to (2,-2.75);
\draw[] (t2p) to (4,-2.75);
\draw[] (t2p) to (6,-2.75);
\draw[] (t0p) to (t1p);
\draw[] (t0p) to (t2p);

\end{tikzpicture}
}
\caption{A high-level scheme for sequential readout of word size $b=2$, an extension to the scheme in \cref{fig: unconditional_optimal_layout}.}
\label{fig: sequential_readout}
\end{figure}

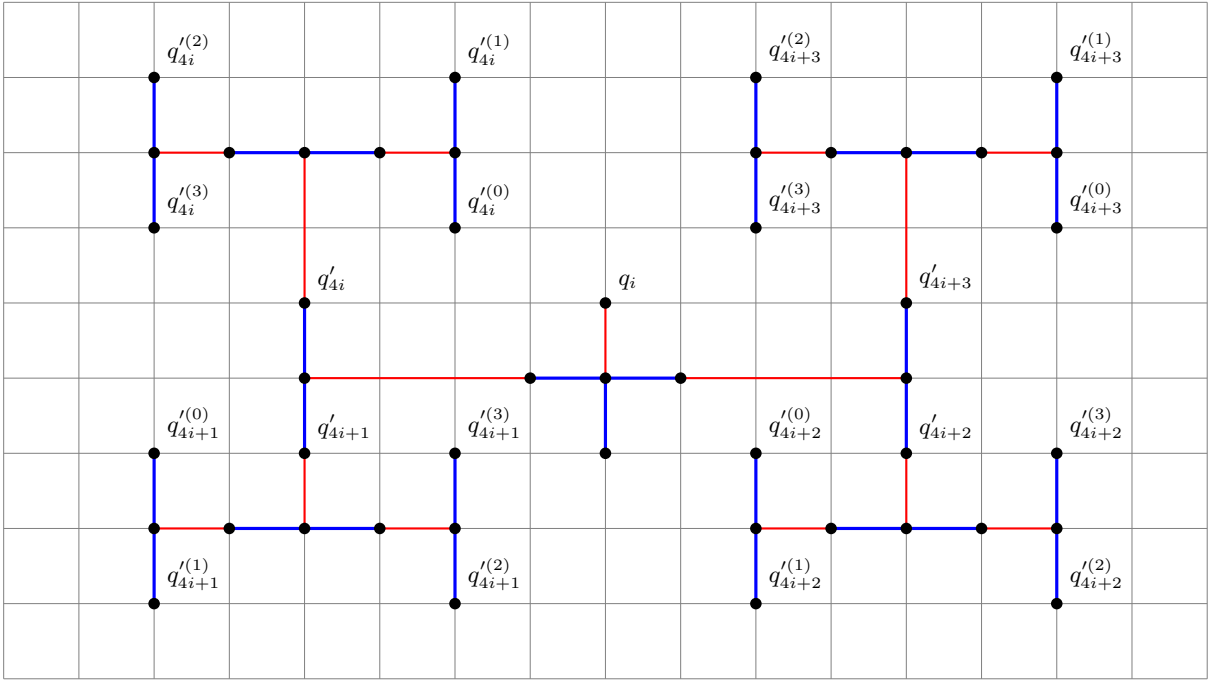
\begin{figure*}[ht!]
    \centering
    \begin{tikzpicture}
        \tikzset{dot/.style={fill=black,circle, scale=0.5}}
        \foreach \i in {0,...,16} {
                \draw [very thin,gray] (\i,0) -- (\i,9);
            }
        \foreach \i in {0,...,9} {
                \draw [very thin,gray] (0,\i) -- (16,\i);
            }
        \draw[thick, red] (8,5) to (8,4);

        \draw[thick, red] (7,4) to (4,4);
        \draw[thick, red] (9,4) to (12,4);

        \draw[thick, red] (4,5) to (4,7);
        \draw[thick, red] (4,2) to (4,3);
        \draw[thick, red] (12,5) to (12,7);
        \draw[thick, red] (12,2) to (12,3);

        \draw[thick, red] (3,2) to (2,2);
        \draw[thick, red] (6,2) to (5,2);
        \draw[thick, red] (3,7) to (2,7);
        \draw[thick, red] (5,7) to (6,7);
        \draw[thick, red] (10,2) to (11,2);
        \draw[thick, red] (13,2) to (14,2);
        \draw[thick, red] (10,7) to (11,7);
        \draw[thick, red] (13,7) to (14,7);
        \node[dot, label=north east:{\small $q_i$}] at (8,5){};
        \draw[very thick, blue] (7,4) to (9,4);
        \draw[very thick, blue] (8,3) to (8,4);
        \node[dot] at (8,3){};
        \node[dot] at (8,4){};
        \node[dot] at (9,4){};
        \node[dot] at (7,4){};
        \draw[very thick, blue] (4,3) to (4,5);
        \node[dot] at (4,4){};
        \node[dot, label=north east: $q'_{4i}$] at (4,5){};
        \node[dot, label=north east: $q'_{4i+1}$] at (4,3){};
        \draw[very thick, blue] (12,3) to (12,5);
        \node[dot] at (12,4){};
        \node[dot, label=north east: $q'_{4i+2}$] at (12,3){};
        \node[dot, label=north east: $q'_{4i+3}$] at (12,5){};
        \draw[very thick, blue] (3,7) to (5,7);
        \node[dot] at (4,7){};
        \node[dot] at (5,7){};
        \node[dot] at (3,7){};
        \draw[very thick, blue] (3,2) to (5,2);
        \node[dot] at (4,2){};
        \node[dot] at (3,2){};
        \node[dot] at (5,2){};
        \draw[very thick, blue] (11,2) to (13,2);
        \node[dot] at (12,2){};
        \node[dot] at (11,2){};
        \node[dot] at (13,2){};
        \draw[very thick, blue] (11,7) to (13,7);
        \node[dot] at (12,7){};
        \node[dot] at (13,7){};
        \node[dot] at (11,7){};
        \draw[very thick, blue] (6,6) to (6,8);
        \node[dot] at (6,7){};
        \node[dot, label=north east:$q'^{(0)}_{4i}$] at (6,6){};
        \node[dot, label=north east:$q'^{(1)}_{4i}$] at (6,8){};
        \draw[very thick, blue] (2,6) to (2,8);
        \node[dot] at (2,7){};
        \node[dot, label=north east:$q'^{(2)}_{4i}$] at (2,8){};
        \node[dot, label=north east:$q'^{(3)}_{4i}$] at (2,6){};
        \draw[very thick, blue] (2,1) to (2,3);
        \node[dot] at (2,2){};
        \node[dot, label=north east:$q'^{(0)}_{4i+1}$] at (2,3){};
        \node[dot, label=north east:$q'^{(1)}_{4i+1}$] at (2,1){};
        \draw[very thick, blue] (6,1) to (6,3);
        \node[dot] at (6,2){};
        \node[dot, label=north east:$q'^{(2)}_{4i+1}$] at (6,1){};
        \node[dot, label=north east:$q'^{(3)}_{4i+1}$] at (6,3){};
        \draw[very thick, blue] (10,1) to (10,3);
        \node[dot] at (10,2){};
        \node[dot, label=north east:$q'^{(0)}_{4i+2}$] at (10,3){};
        \node[dot, label=north east:$q'^{(1)}_{4i+2}$] at (10,1){};
        \draw[very thick, blue] (14,1) to (14,3);
        \node[dot] at (14,2){};
        \node[dot, label=north east:$q'^{(2)}_{4i+2}$] at (14,1){};
        \node[dot, label=north east:$q'^{(3)}_{4i+2}$] at (14,3){};
        \draw[very thick, blue] (14,6) to (14,8);
        \node[dot] at (14,7){};
        \node[dot, label=north east:$q'^{(0)}_{4i+3}$] at (14,6){};
        \node[dot, label=north east:$q'^{(1)}_{4i+3}$] at (14,8){};
        \draw[very thick, blue] (10,6) to (10,8);
        \node[dot] at (10,7){};
        \node[dot, label=north east:$q'^{(2)}_{4i+3}$] at (10,8){};
        \node[dot, label=north east:$q'^{(3)}_{4i+3}$] at (10,6){};
    \end{tikzpicture}
    \caption{A planar layout of the multi-bit register for each memory location in the quantum data lookup framework with sequential readout having a word size of $b=4$. }
    \label{fig: planar_branch_b_end}
\end{figure*}

\begin{theorem}\label{thm: sequential_readout}
    Consider the quantum data lookup structure with the high-level scheme in \cref{fig: sequential_readout} with $N$ memory locations. 
    Let $b=2^{d''}$ be the word size, $n=\log N$. Let $b=2^{d''}$ be the word size, $n=\log N$, $\lambda =2^{n-d}$ be the partition size and $\gamma = 2^{n-d-d'}$ be the size of a CNOT tree with $d' \leq d \leq n$. Also, let $\mathbf{I}$ be the infidelity of the single-bit readout framework in Theorem~\ref{thm: unconditional_optimal}.
    Then, the infidelity of this circuit is $\tilde O(b\mathbf{I}+b^2\varepsilon_I).$ Moreover, the T count is $O(\frac{N}{\gamma}+\frac{N}{\lambda}\log\frac{N}{\lambda}+b\lambda)$, and qubit count is $O(\log\frac{N}{\lambda}+b\lambda)$. 
\end{theorem}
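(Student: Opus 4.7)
The plan is to treat the sequential scheme as a modification of the single-bit framework in which the data registers at the bottom of Stage II are replaced by $b$-qubit registers laid out as an H-tree, and Stage III is repeated $b$ times to extract each bit in turn. I would prove the three bounds (qubit count, T count, infidelity) separately, reusing the single-bit analysis of Theorem \ref{thm: unconditional_optimal} as a black box wherever possible.

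For the qubit count, I would observe that the linear routers ($O(\log\frac{N}{\lambda})$ qubits) and the CSWAP/CNOT routers in Stages I, II are unchanged from the single-bit case. The only new qubits are the $b$-bit memory registers: at each of the $\lambda$ CNOT-tree leaves we now store $b$ bits instead of one, contributing $O(b\lambda)$ qubits. The Stage III routing infrastructure can reuse the CNOT-tree qubits from Stage II (analogous to the reuse in \cref{fig: planar_16_stage_3}), so no additional qubits are needed. This gives the claimed $O(\log\frac{N}{\lambda}+b\lambda)$.

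For the T count, I would note that the extra $\log b$ depth of the CNOT trees below Stage II contributes only Clifford CNOTs, so Stages I and II retain the single-bit T counts $O(\frac{N}{\gamma}+\frac{N}{\lambda}\log\frac{N}{\lambda})$ from~\cref{eqn: T_count_general}. In Stage III, however, the effective readout tree now has $b\lambda$ leaves, giving a depth-$(\log b + n - d)$ CSWAP tree that contributes $O(b\lambda)$ T gates; this Stage III cost is paid only once, since the swap-in of the $k$th bit at each iteration is Clifford. Summing yields $O(\frac{N}{\gamma}+\frac{N}{\lambda}\log\frac{N}{\lambda}+b\lambda)$.

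For the infidelity, I would decompose $\mathbf{I} = \mathbf{I}_{\text{I}} + \mathbf{I}_{\text{II}} + \mathbf{I}_{\text{III}}$ as in the proof of Theorem \ref{thm: unconditional_optimal} and argue that Stage I and Stage II are run only once (contributing $O(\mathbf{I})$), while Stage III is repeated $b$ times (contributing $O(b\mathbf{I}_{\text{III}}) \subseteq O(b\mathbf{I})$). On top of this, the $b$ Stage III iterations introduce a new idling-error term: during the $k$th iteration, the unread bits $q'^{(k')}_j$ for $k' \neq k$ sit idle while the current bit is routed out. Each iteration has depth $O(\polylog(b\lambda))$ from the planar-layout analysis of~\cref{thm: optimal_fine_inf_planar}, and there are $O(b)$ data bits per activated memory location that must be maintained across all $b$ iterations, giving an aggregate idling contribution of $\tilde O(b^2\,\varepsilon_I)$. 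Combining these yields the stated $\tilde O(b\mathbf{I}+b^2\varepsilon_I)$.

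The main obstacle will be the error-propagation argument for the sequential readout, specifically verifying that the error-containment property of CSWAP routers from~\cite{PRXQuantum.2.020311} continues to hold across the $b$ Stage III iterations. I would need to argue that an error on an off-path $b$-qubit register at iteration $k$ does not pollute the on-path register at iteration $k' > k$, and that the swap-in operation between iterations preserves the good/bad branch classification. Once that is established, the contribution of each iteration cleanly decouples and the total infidelity reduces to the stated sum.
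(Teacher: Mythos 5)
Your overall route matches the paper's: reuse Theorem~\ref{thm: unconditional_optimal} for the bulk of the error, add the new idling term from keeping the $b$-bit registers and router status qubits alive during the sequential readout (your $b$ qubits $\times$ $\tilde O(b)$ readout time $= \tilde O(b^2)\varepsilon_I$ is exactly the paper's accounting), and get the qubit count from the $b$-fold enlarged H-tree of data registers plus the unchanged linear routers. However, two of your steps mis-describe the scheme of \cref{fig: sequential_readout}, even though the arithmetic happens to land on the stated bounds. First, in the sequential design Stage~II is \emph{not} the same circuit as in the single-bit case: each repetition writes all $b$ bits of each memory location, i.e.\ the CNOT trees have $b\gamma$ leaves (depth increased by $\log b$), so Stages~I--II already contribute $O(b\mathbf{I})$ -- this is precisely where the paper's $O(b\mathbf{I})$ term comes from. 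Your claim that Stages~I--II contribute only $O(\mathbf{I})$ undercounts the actual circuit; it is harmless for the final statement only because the term you do scale by $b$ (Stage~III) keeps the total within $\tilde O(b\mathbf{I}+b^2\varepsilon_I)$, but as written the intermediate claim is false for the circuit being analyzed.

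Second, your T-count argument for Stage~III is not the mechanism of the scheme: the paper repeats Stage~III $b$ times through the \emph{same} depth-$(n-d)$ CSWAP tree with $O(\lambda)$ routers, swapping $q'^{(k)}_j$ into $q'_j$ before iteration $k$, so the $O(b\lambda)$ term is $b$ repetitions at $O(\lambda)$ T gates each. Your version -- a single pass through a depth-$(\log b + n-d)$ tree with $b\lambda$ leaves whose ``cost is paid only once'' -- is internally inconsistent (you then invoke per-iteration swap-ins) and, taken literally, a single route-out retrieves only one qubit, not all $b$ bits. The count $O(b\lambda)$ you state is correct, but the justification should be the $b$-fold reuse of the $\lambda$-sized readout tree, with only the Clifford swap-ins free. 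Your closing concern about error containment across the $b$ iterations is reasonable; the paper handles it implicitly by noting the off-path/bad-branch argument of Ref.~\cite{PRXQuantum.2.020311} is unaffected by the (Clifford) swap-ins, so each readout pass can be charged independently, which is consistent with your plan once the two points above are corrected.
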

\begin{proof}
    Before reading out the $b$ memory bits, the analysis follows the same as in Theorem~\ref{thm: unconditional_optimal} which yields $O(b\mathbf{I})$ infidelity. There is an additional qubit idling error to be considered. It takes $O(2^{d''}+n-d+d'')$ time to transfer $b$ memory bits out sequentially. The idling qubits are the memory qubits and the CSWAP router status bits, hence the total idling error is $O(\varepsilon_I(2^{d''}+d')(2^{d''}+n-d+d'')=\tilde O(b^2\varepsilon_I)$. 
    
    The number of CSWAP routers and linear routers is unchanged for Stages I and II but the CSWAP routers are used $b$ times in Stage III. Hence, the T count is $O(\frac{N}{\gamma}+\frac{N}{\lambda}\log\frac{N}{\lambda}+b\lambda)$ where each term matches the count for each Stage respectively. The qubit count is $O(d+2^{n-d+d''})=O(\log\frac{N}{\lambda}+b\lambda)$ as the size of the H-tree design for Stages II and III are scaled by $b$ due to the deeper CNOT trees.
\end{proof}

\section{Conclusion}
\label{sec: conclusion}

In this work, we construct a general quantum data lookup framework and describe how it can be arranged on a planar grid with local connectivity. Further, we demonstrate that for specific choices of parameters, this framework can be made to be simultaneously noise-resilient as well as resource-efficient i.e., having a sub-linear dependency on memory size for qubit count, T count, and query infidelity. The versatility of our framework in highlighted as it provides the blueprint for describing a family of circuits for quantum data lookup with various space, time, and noise resiliency trade-offs. 

As the community moves beyond NISQ architectures to designing error-corrected implementations of quantum hardware~\cite{PhysRevX.11.041058, acharya2023, erhard2021entangling, postler2022demonstration, bluvstein2024logical} there is an urgent need to design highly resource-optimal components for quantum applications. For instance, initially, T gates would be considered an expensive resource so the focus would be on minimizing the T count. However, future improvements in large-scale quantum hardware and quantum error correction could dramatically reduce the cost of T gates, which would shift the focus to minimizing the query infidelity. Having sub-linear infidelity scaling in this setting would imply that smaller distance error correcting codes can be used to minimize overall errors and lead a to reduced qubit count. In this way, we expect the space-time trade-offs that result from this work to be repeatedly harnessed for the realization of end-to-end implementations of quantum algorithms using the evolving state-of-the-art hardware of its time~\cite{litinski2022active}.

The circuit designs presented in this work consider limited local connectivity with qubits laid out in a planar grid. To the best of our knowledge, there has been no prior comprehensive investigation into a resilient quantum lookup table framework for real-world applications that takes into account both resource limitations and local connectivity. However, we believe that our general framework would be noise resilient and can be more resource-efficient than prior work irrespective of the constraints imposed by connectivity. For instance, assuming only nearest neighbor connectivity is among the more restrictive regimes and having some long-range connectivity for free improves query infidelity as we show in~\cref{app: long_range_connection}. A thorough analysis of how our framework would perform for other qubit topologies is left for future work.

Our fine-grained analysis of noise resiliency can be put to use in practice by replacing those error terms with the noise models or behavior exhibited by the hardware of choice. Explicitly separating the different sources of error can aid in understanding how hardware behavior contributes to the overall query infidelity. This, in turn, can help identify any bottlenecks or design improvements that may be needed to improve hardware capabilities and implement quantum data lookup successfully.

Our framework recovers previous proposals for quantum data lookup for different choices of parameters. For example, with $\lambda = N$ and $\gamma = 1$, we recover the noise-resilient bucket brigade QRAM design; with $\lambda = \sqrt{N}$ and $\gamma = 1$, we recover a variant of the SELECT-SWAP design; and with $\lambda = 1$ and $\gamma = 1$ we recover the QROM design. Hence, we feel justified in considering our design to be a unifying framework. On the other hand, we acknowledge that one parameter regime that is not covered by our framework is the indicator function design from~\cite{low2018trading} that has $\sqrt{N}$ T-count and logarithmic query depth. We conjecture that this design has linear infidelity scaling but studying whether it can be made noise resilient while still maintaining the same T-count is left for future work. 

Another interesting direction for future work is exploring the planar layout in a surface code, with different code distances for each layer of the CSWAP routers. Furthermore, since the CNOT tree only propagates $Z$ errors, one could consider using an unbalanced surface code that produces $X$ errors with a higher probability.

While we presented a straightforward method to reset the qubits used for table lookup, it will result in a doubling of the T count and query infidelity. Another approach would be to use measurement-based uncomputation as discussed in~\cite[Appendix C]{Berry2019qubitizationof}. This will lead to an increase in classical processing to deal with the mid-circuit measurement outcomes but not add too much overhead in terms of unitary operations. The further study needed to determine whether this will maintain the sub-linear scaling in infidelity is left as an open question.

There exists an implicit assumption that the planar layout and local connectivity versions of both our framework and the bucket-brigade QRAM require the use of error correction to drive gate errors below some critical threshold. Without this, the designs can fail to be error-resilient. The bottleneck here is the difficulty in implementing the entanglement distillation that is used for long-range operations. For distillation to be effective, it is necessary that the initial error of noisy Bell pairs -- determined solely by gate errors -- be below some threshold. We leave as an open question whether this threshold can be raised by more sophisticated techniques such as using quantum repeaters negating the need for error correction.

\begin{acknowledgments}
The authors thank the anonymous referee for their detailed and constructive suggestions. 
SZ conducted the research during his internship at Microsoft. Additionally, SZ is supported by the National Science Foundation CAREER award (grant CCF-1845125), 
and part of the editing work was performed while SZ was visiting the Institute for Pure and Applied Mathematics (IPAM), which is supported by the National Science Foundation (Grant No. DMS-1925919).
SZ also acknowledges support from the U.S. Department of Energy, Office of Science, Accelerated Research in Quantum Computing Centers, Quantum Utility through Advanced Computational Quantum Algorithms, grant No. DE-SC0025572.
\end{acknowledgments}

\bibliography{main} 

\begin{widetext}
\appendix
\section{Limited long-range connections}
\label{app: long_range_connection}

In this section, we consider variations in qubit architectures that allow for limited long-range connectivity. Examples include superconducting qubits with long-range connections for LDPC codes or those where qubits can be coupled with photons that can be used to create high-quality Bell states. 
We study how the assumption of allowing only a few gates to be able to act long-range without overhead impacts infidelity. Unlike in the main text, we assume that there are only two ways in which to perform long-range interactions: (i) using a limited number of almost perfect Bell pairs that can be used to perfectly perform long-range operations; and (ii) creating noisy GHZ states linked between source and target qubits as described in~\cref{sec: general_framwork}.

\subsection{Bucket-brigade model}
 
First, consider the planar bucket-brigade model from~\cref{sec: planar_layout}. 
We allow the first $k$ level of CSWAP routers the ability to perform long-range operations without any errors and produce a modified version of Theorem~\ref{thm: optimal_fine_inf_planar}.

\begin{corollary}\label{cor: long_range_basic_GHZ} 
 Let the first $k$ level routers in the bucket-brigade QRAM of~\cref{fig: routing_scheme_16} have the ability to perform long-range operations without any overhead. For $N$ memory locations, the improved fine-grained infidelity of the basic planar layout (\cref{fig: planar_16}) scales as 
 \begin{align}
O\left(2^{-\frac{k}{2}}\sqrt{N} \varepsilon_{Q}\log N+\varepsilon_{s}\log N+\varepsilon_{cs}\log^2 N+  \varepsilon_I\log^2 N\right).
\end{align}
\end{corollary}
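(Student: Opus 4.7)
The plan is to track which pieces of the proof of~\cref{thm: optimal_fine_inf_planar} change when the first $k$ levels are permitted free long-range gates, and to show that only the long-range error term is affected. First I would observe that the local SWAP count, the local CSWAP count, and the idling time per router are all determined by the routing schedule of~\cref{fig: activation_sequence} and by error containment, none of which depends on how we implement the long-range operations. Hence the contributions $O(\log N\,\varepsilon_s)$, $O(\log^2 N\,\varepsilon_{cs})$, and $O(\log^2 N\,\varepsilon_I)$ in~\cref{thm: optimal_fine_inf_planar} are inherited verbatim. The only quantity that needs reanalysis is the long-range infidelity, previously bounded by $O(\log^2 N\,\varepsilon_L)$ using distilled Bell pairs.

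Next, I would distinguish two regimes of router levels along a fixed good query branch. For levels $\ell \le k$, long-range gates are free by assumption and contribute nothing. For levels $\ell = k+1, \ldots, \log N$, we are forced back on the GHZ-based long-range implementation of~\cref{lem: ghz_error}, so each long-range operation carries error $O(m_\ell\,\varepsilon_Q)$, where $m_\ell$ is the physical length of the GHZ state connecting a router at level $\ell$ to its parent in the planar layout. The H-tree embedding of~\cref{fig: planar_16} halves its segment length every two levels, giving $m_\ell = O(\sqrt{N}/2^{\ell/2})$. Following the bookkeeping of~\cref{eqn: ghz_error}, each level $\ell$ incurs $O(\log N)$ long-range applications along the branch during the route-in and route-out phases combined.

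Putting these together, the long-range infidelity along the good branch is bounded by
\begin{align}
\sum_{\ell=k+1}^{\log N} O(\log N)\cdot O\!\left(\tfrac{\sqrt{N}}{2^{\ell/2}}\,\varepsilon_Q\right)
= O\!\left(\log N\cdot \sqrt{N}\cdot \varepsilon_Q\right)\sum_{\ell=k+1}^{\log N} 2^{-\ell/2},
\end{align}
and since the geometric series is dominated by its first term $2^{-(k+1)/2}$, this collapses to $O(2^{-k/2}\log N\,\sqrt{N}\,\varepsilon_Q)$. Combining with the unchanged $\varepsilon_s, \varepsilon_{cs}, \varepsilon_I$ contributions via the error-containment argument of Ref.~\cite{PRXQuantum.2.020311} applied exactly as in~\cref{thm: optimal_fine_inf_planar} yields the stated bound.

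The main obstacle I anticipate is justifying the geometric decay $m_\ell = O(\sqrt{N}/2^{\ell/2})$ cleanly for both even and odd $\ell$ in the H-tree embedding, and checking that the per-level count of long-range applications is indeed $O(\log N)$ rather than something larger that would disturb the telescoping. Both are essentially layout-combinatorial and should follow from reading off~\cref{fig: planar_16} and~\cref{fig: activation_sequence}, but they need to be stated explicitly so that the free top-$k$ levels really peel off the dominant contributions to the original $O(\log^2 N\,\varepsilon_L)$ bound.
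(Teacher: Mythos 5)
Your treatment of the long-range term is correct and follows the same route as the paper — indeed it is more explicit than the paper's own two-line sketch: the paper simply says to start the product in \cref{eqn: ghz_error} at level $k$ and to replace $\varepsilon_L$ by the GHZ error of \cref{lem: ghz_error}, whereas you spell out the H-tree length decay $m_\ell = O(\sqrt{N}/2^{\ell/2})$ and the geometric sum $\sum_{\ell>k}2^{-\ell/2}=O(2^{-k/2})$ that actually produces the $2^{-k/2}\log N\sqrt{N}\,\varepsilon_Q$ factor. That part of your argument I would keep as is.

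There is, however, a genuine (if small) gap in how you handle the idling term. You assert that the idling time per router ``does not depend on how we implement the long-range operations'' and that the $O(\log^2 N\,\varepsilon_I)$ contribution is inherited verbatim from \cref{thm: optimal_fine_inf_planar}. Both halves of this are off: the theorem's idling bound is $O(\polylog N\,\varepsilon_I)$, not $O(\log^2 N\,\varepsilon_I)$, precisely because there the long-range SWAPs go through entanglement distillation, which adds a $\polylog$ depth during which the status qubits along the branch must idle (the idling time per router is $O(T-\ell+\polylog(\sqrt{N}/2^{\lceil \ell/2\rceil}))$). So inheriting the theorem's idling term verbatim would give only $\polylog N\,\varepsilon_I$, which does not match the claimed $\log^2 N\,\varepsilon_I$. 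The missing observation — which the paper makes explicitly — is that in this corollary no distillation is used: the free long-range gates and the GHZ-mediated gates are both constant depth (GHZ states of arbitrary length are prepared in constant depth), so each router's idling time shrinks to $O(T-\ell)$, and summing over the branch gives the stated $O(\log^2 N\,\varepsilon_I)$. Adding that one sentence closes the gap; the $\varepsilon_s$ and $\varepsilon_{cs}$ terms are indeed unchanged, as you say.
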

\begin{proof}
The result largely follows from the proof of Theorem~\ref{thm: optimal_fine_inf_planar} along with two modifications. First, adjust the initial value of $\ell$ in \cref{eqn: ghz_error} from $1$ to $k$ as all levels up to $k$ will not contribute to query infidelity. Additionally, as the remaining long-range operations only use GHZ states as a resource, its contribution to infidelity is $m \cdot \varepsilon_Q$ for GHZ states of length $m$ from~\cref{lem: ghz_error}. As $m < \sqrt{N}$, we can upper bound this contribution as $O(\sqrt{N} \varepsilon_Q)$. 
Since no magic state distillation is applied, the total idling time for each CSWAP router $R_\ell$ is $O(T - \ell)$, which leads to an overall $\log^2 N$ error contribution from the idling error.
This gives the desired result. 
\end{proof}

When $k$ equals $\log{N}$, the planar layout achieves the expected $O(\log^2 N)$ infidelity scaling from~\cite{PRXQuantum.2.020311}.

\subsection{General framework}

We study the effect of allowing the first $k$ level of CSWAP + CNOT routers in Stages I and II of the general framework from~\cref{sec: general_framwork} the ability to perform long-range operations without any error and produce a modified version of version of~\cref{thm: unconditional_optimal}. 

\begin{corollary}\label{cor: k_long_range_pauli}
Consider the quantum data lookup structure with the high-level scheme in \cref{fig: unconditional_optimal_layout} with $N$ memory locations. 
Let $n=\log N$, $\lambda =2^{n-d}$ be the partitions size, and $\gamma = 2^{n-d-d'}$ be the size of a CNOT tree with $d' \leq d \leq n$. Let the first $k$-levels of the CSWAP + CNOT routers in Stages I and II have the ability to perform long-distance SWAP gates without any overhead. 
Let $\mathcal{E}$ denote the sum of all but $\varepsilon_L$ error in \cref{thm: unconditional_optimal},
then the infidelity of the circuit is 
\begin{align}\label{eqn: long_range_pauli_k<d'}
    &\tilde{O}\left(\varepsilon_Q\left(2^{-\frac{k}{2}}\sqrt{\lambda}+2^{-\frac{k}{2}}\frac{N}{\sqrt{\lambda}}+\frac{\gamma N}{\lambda}\right)  + \mathcal{E}\right),
\end{align}
for $k\leq d'$, and
\begin{align}
    &\tilde{O}\left(\varepsilon_Q\left(2^{-k}N +2^{-\frac{k}{2}}\sqrt{\lambda} \right) + \mathcal{E} \right),
\end{align}
for $d'< k\leq n-d$.
\end{corollary}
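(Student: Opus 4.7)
\textbf{Proof plan for Corollary~\ref{cor: k_long_range_pauli}.} Since $\mathcal{E}$ already collects every non-$\varepsilon_L$ error term from Theorem~\ref{thm: unconditional_optimal} and these are unaffected by the modified connectivity, the plan is to leave $\mathcal{E}$ alone and recompute only the long-range contribution. With no entanglement distillation available, every remaining long-range SWAP/CNOT at distance $m$ incurs error $O(m\varepsilon_Q)$ by Lemma~\ref{lem: ghz_error} and Corollary~\ref{cor: long-range}. The task reduces to summing these GHZ-state errors over (i) the Stage I setup pass, (ii) the $N/\lambda$ Stage II repetitions, and (iii) the Stage III read-out, truncating the sums at the first $k$ levels that are now free.

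I will first record the H-tree geometry. The combined Stage I/II tree (CSWAP at depth $0,\ldots,d'-1$, CNOT subtrees at depth $d',\ldots,n-d-1$) sits in an area of size $\lambda$, so a level-$\ell$ long-range gate spans $O(\sqrt{\lambda}/2^{\lceil\ell/2\rceil})$ qubits; the Stage III CSWAP tree shares this physical layout and has the same per-level GHZ lengths on its query path. Error containment from~\cite{PRXQuantum.2.020311} means that for CSWAP trees only the $d'$ (resp.\ $n-d$) levels on the query branch matter, so per-query the CSWAP contribution is a geometric tail $\sum_{\ell\ge k}\sqrt{\lambda}/2^{\ell/2}=O(2^{-k/2}\sqrt{\lambda})\,\varepsilon_Q$. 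The Stage II CNOT tree, by contrast, must hit all $\gamma$ leaves, so level $\ell'$ contributes $2^{\ell'}\cdot\sqrt{\gamma}/2^{\ell'/2}$, a geometric series dominated by the deepest unfree level; for $k\le d'$ this is $\Theta(\gamma)$ per repetition, and for $d'<k\le n-d$ the top $k-d'$ CNOT levels also vanish so the remaining cost decreases as $O(\gamma/2^{k-d'})$ per repetition once the top broadcast has already been performed for free.

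Assembling the pieces in Case 1 ($k\le d'$): Stage I gives $\tilde O(2^{-k/2}\sqrt{\lambda})\,\varepsilon_Q$; the $N/\lambda$ Stage II CSWAP passes give $\tilde O(2^{-k/2}N/\sqrt{\lambda})\,\varepsilon_Q$; the $N/\lambda$ CNOT-tree broadcasts give $\tilde O(\gamma N/\lambda)\,\varepsilon_Q$; and Stage III contributes $\tilde O(2^{-k/2}\sqrt{\lambda})\,\varepsilon_Q$, which is absorbed into the first term. In Case 2 ($d'<k\le n-d$), Stages I and II CSWAP contributions are zero, the CNOT-tree contribution collapses to $\tilde O((N/\lambda)\cdot \gamma/2^{k-d'})=\tilde O(2^{-k}N)\,\varepsilon_Q$, and Stage III still contributes $\tilde O(2^{-k/2}\sqrt{\lambda})\,\varepsilon_Q$ from its truncated query-path sum. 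These match the two displayed expressions after adding $\mathcal{E}$.

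The main obstacle I anticipate is the CNOT-tree accounting in Case 2: one has to argue carefully that when the top $k-d'$ levels of the fan-out are performed for free, the remaining broadcast in each subtree can be organised so that the summed GHZ-length cost per repetition is $O(\gamma/2^{k-d'})$ rather than the naive $O(\gamma)$, and that the error-containment argument still applies since off-path CNOT subtrees continue to be isolated from the query branch. Everything else is a straightforward geometric-series calculation, a consistency check at the boundary $k=d'$ (where both expressions should coincide up to the $\gamma N/\lambda$ term absorbing $2^{-k}N$), and a direct appeal to the proof of Theorem~\ref{thm: unconditional_optimal} to keep $\mathcal{E}$ unchanged.
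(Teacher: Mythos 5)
Your overall route is the same as the paper's: keep $\mathcal{E}$ untouched, convert each remaining long-range operation into a GHZ-length cost $O(m\,\varepsilon_Q)$ via \cref{lem: ghz_error}, invoke error containment so that for the CSWAP trees only the query branch is charged, charge the entire activated CNOT tree once per Stage II repetition, multiply by the $N/\lambda$ repetitions, and split cases at $k=d'$. Your Case 1 accounting ($\tilde O(2^{-k/2}\sqrt{\lambda})\varepsilon_Q$ for Stage I and $\tilde O\bigl(2^{-k/2}N/\sqrt{\lambda}+\gamma N/\lambda\bigr)\varepsilon_Q$ for Stage II) reproduces the paper's terms $d'2^{(n-d-k)/2}\varepsilon_Q$ and $2^d\bigl(2^{(n-d-k)/2}+2^{n-d-d'}\bigr)\varepsilon_Q$ exactly, with somewhat more geometric detail than the paper gives.

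The unresolved point is the one you flagged yourself: the Case 2 claim that freeing the top $k-d'$ CNOT levels reduces the per-repetition broadcast cost to $O(\gamma/2^{k-d'})$. By your own level count (level $\ell'$ costs $2^{\ell'/2}\sqrt{\gamma}$, an increasing geometric series dominated by the deepest level), removing the top levels leaves the total essentially $\Theta(\gamma)$: the cost of an H-tree fan-out is concentrated in the many short wires at the bottom, and the phase-kickback argument in \cref{sec: general_framwork} forces the whole activated CNOT tree, not just one root-to-leaf path, to be error-free, so no reorganisation of the broadcast obviously yields the claimed $2^{-(k-d')}$ suppression and you do not supply one. For what it is worth, the paper's proof does not supply one either: it simply asserts the per-repetition cost $2^{n-d-d'-(k-d')}\varepsilon_Q$, so your proposal matches the paper step for step while leaving open precisely the step the paper leaves unargued. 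A second, smaller mismatch: you charge Stage III a truncated query-path sum $\tilde O(2^{-k/2}\sqrt{\lambda}\,\varepsilon_Q)$, which tacitly extends the $k$ free levels to the route-out tree even though the statement grants them only to Stages I and II; without that extension Stage III contributes $\tilde O(\sqrt{\lambda}\,\varepsilon_Q)$ and neither displayed bound would follow. The paper's proof is silent on Stage III, so your reading is the one consistent with the stated formulas, but it should be stated explicitly rather than assumed.
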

\begin{proof}
    For $k\leq d'$, in stage I the GHZ error induced from qubit error is $d'2^{\frac{n-d-k}{2}}\varepsilon_Q$, and in Stage II the GHZ error is $2^d(2^{\frac{n-d-k}{2}}+2^{n-d-d'})\varepsilon_Q$. The remaining error analysis follows from~\cref{thm: unconditional_optimal}, and combining these yields the desired GHZ error.
    
    For $k>d'$, in stage I there is no GHZ error, in stage II the GHZ error comes from the CNOT tree, which becomes $2^d\cdot 2^{n-d-d'-(k-d')}\varepsilon_Q$. The remaining error analysis follows from~\cref{thm: unconditional_optimal}, and combining these yields the desired GHZ error.
\end{proof}

Corollary~\ref{cor: k_long_range_pauli} illustrates the challenging nature of achieving optimal scaling in various aspects by balancing the parameters $\lambda$ and $\gamma$ with a budget of performing $O(2^k)$ long-range operations without overhead. It becomes apparent that striking the perfect balance is a formidable task as it adds parameter that needs to be optimized. 

However, for some limited flexibility in choices for $d = \log \left( \frac{N}{\lambda}\right)$, $d' = \log \left( \frac{\lambda}{\gamma}\right)$, and the long-range budget $2^k$, we try to provide some additional insights. 
For instance, within the realm of $\tilde{O}(N^{3/4})$ T count, we tabulate how the exponent determining how infidelity scales for increasing values of $k \in \{0, d'/4, d'/2, 3d'/4, d\}$ in~\cref{tb: k=0_app,tb: k=1/2d',tb: k=1/4d',tb: k=3/4d',tb: k=d'_app} respectively. Specifically for the green regions in these tables, we find that: 
\begin{enumerate}
    \item When $d'$ is relatively small, specifically when $d' \leq n/4$, the presence of long-range connectivity does not decrease infidelity. This is because the primary source of error originates from the CNOT tree section of the circuit. 
    \item  Any further decrease in infidelity becomes unattainable when $k$ exceeds $d'/2$. This is because, when $k>d'/2$, the predominant factor contributing to the error is the idling error by~\cref{eqn: long_range_pauli_k<d'}.
    \item The greater the value of $d'$ that can be accommodated, the more favorable the infidelity scaling becomes. This is because larger values of $d'$ serve to diminish both the long-range errors as well as the idling error for the CNOT tree.
    \item Increasing the value of $d$ leads to a more favorable qubit count, as the qubit count scales as $O(2^{n-d})$. However, this sets up a trade-off between selecting a larger value of $d$ for improved qubit count or a larger value of $d'$ for reduced infidelity within the $O(N^{3/4})$ T-count regime.
\end{enumerate}

Extending the analysis naturally allows the initial $k$ levels in the route-in procedure to perform long-range operations without overhead, while permitting the final $k'$ levels in the route-out procedure to do the same.
In a practical context, this can be likened to having a long-range budget of $2^{d+k}$ for the route-in and a long-range budget of $2^{k'}$ for the route-out process. The objective is to strike a balance between the values of $k$ and $k'$ to attain the most favorable scaling of infidelity.

We can briefly argue that it is not useful to separately consider $k$ vs. $k'$ in the regime that yields sublinear infidelity and T count. This is because the long-range error from route-out only dominates that of route-in when $k>2d$, at which point a non-zero $k'$ is required to suppress the long-range error from the route-out procedure. However, we observed that $k$ is only meaningful for $k\leq \frac{d'}{2}$, and it only improves the infidelity for cases where $d\leq \frac{d'}{2}$, which contradicts the condition where $k>2d$.

\begin{table}[htbp!]
\centering
\resizebox{0.5\textwidth}{!}{%
\includegraphics[]{plot/k=0.png}
}
\caption{The exponent of infidelity scaling is examined for $N=2^n$ with zero long-range budget, while varying $d$ and $d'$ subject to the constraint $d+d'\leq n$, where $\lambda=2^{n-d}$ and $\gamma = 2^{n-d-d'}$. }
\label{tb: k=0_app}
\end{table}

\begin{table}[htbp!]
\centering
\resizebox{0.5\textwidth}{!}{%
\includegraphics[]{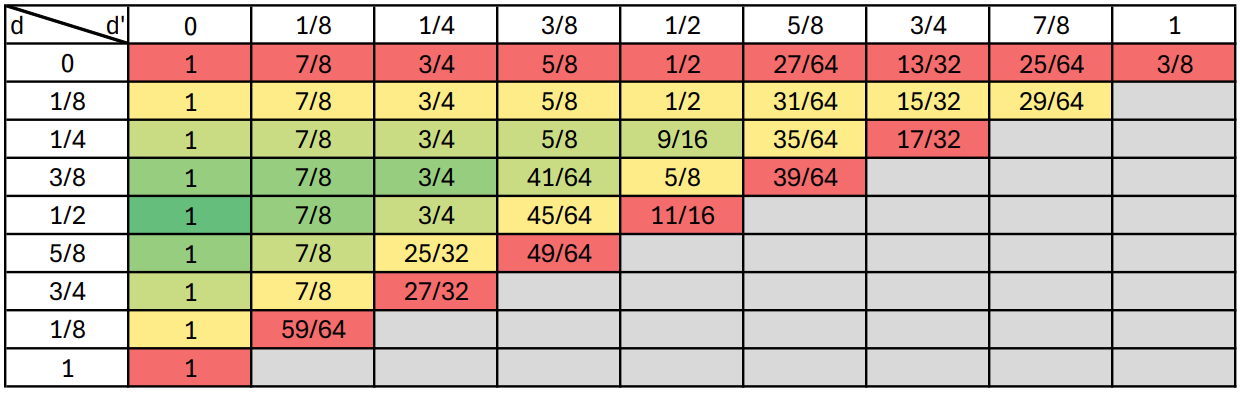}
}
\caption{The exponent of infidelity scaling is examined for $N=2^n$ with $k=d'/4$, while varying $d$ and $d'$ subject to the constraint $d+d'\leq n$.}
\label{tb: k=1/4d'}
\end{table}

\begin{table}[htbp!]
\centering
\resizebox{0.5\textwidth}{!}{%
\includegraphics[]{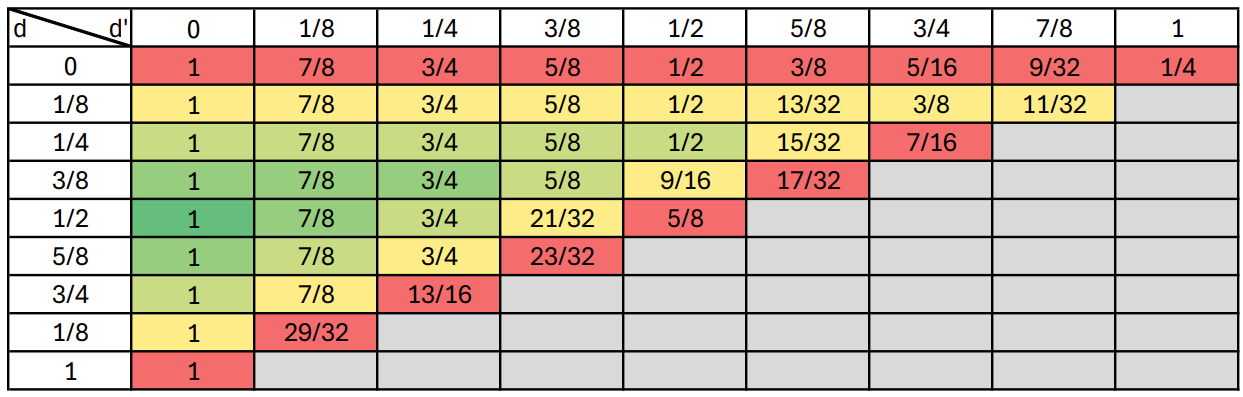}
}
\caption{The exponent of infidelity scaling is examined for $N=2^n$ with $k=d'/2$, while varying $d$ and $d'$ subject to the constraint $d+d'\leq n$.}
\label{tb: k=1/2d'}
\end{table}

\begin{table}[htbp!]
\centering
\resizebox{0.5\textwidth}{!}{%
\includegraphics[]{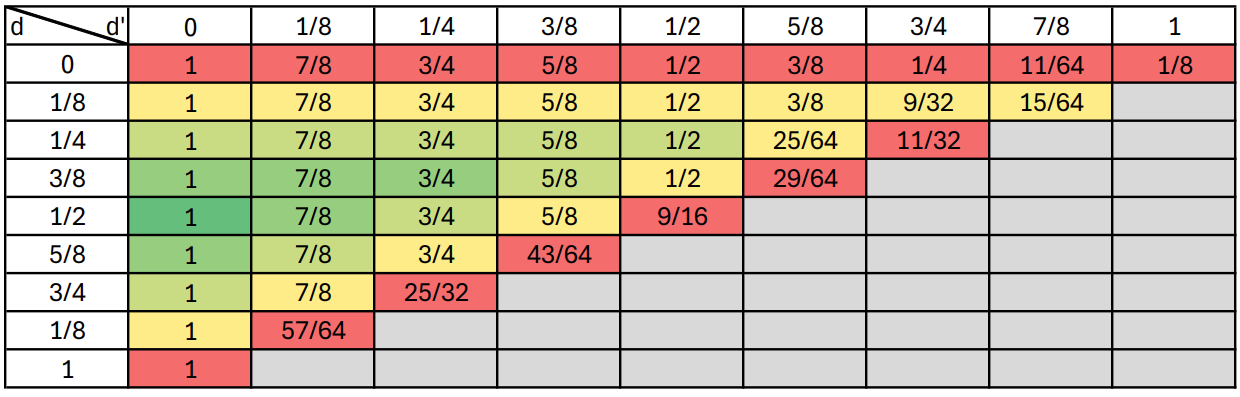}
}
\caption{The exponent of infidelity scaling is examined for $N=2^n$ with $k=3d'/4$, while varying $d$ and $d'$ subject to the constraint $d+d'\leq n$.}
\label{tb: k=3/4d'}
\end{table}

\begin{table}[htbp!]
\centering
\resizebox{0.5\textwidth}{!}{%
\includegraphics[]{plot/k=infty.png}
}
\caption{The exponent of infidelity scaling is examined for $N=2^n$ with $k=d'$ long-range budget.}
\label{tb: k=d'_app}
\end{table}
\end{widetext}

\end{document}